\pgfplotsset{compat=1.17}
\theoremstyle{theorem}
\newtheorem{theorem}{Theorem}[section]
\newtheorem{lemma}[theorem]{Lemma}
\newtheorem{prop}[theorem]{Proposition}
\newtheorem{corollary}[theorem]{Corollary}
\theoremstyle{remark}
\newtheorem{remark}[theorem]{Remark}
\theoremstyle{definition}
\newtheorem{definition}[theorem]{Definition}
\theoremstyle{remark}
\newtheorem{example}[theorem]{Example}
\newcommand{\N}{\mathbb{N}}
\newcommand{\R}{\mathbb{R}}
\newcommand{\Z}{\mathbb{Z}}
\newcommand{\abs}[1]{\left| #1 \right|}
\newcommand{\bigo}{\mathcal{O}}
\newcommand{\im}{\mathrm{im}\,}
\newcommand{\set}[1]{\left\lbrace #1\right\rbrace}
\newcommand{\symdif}{\mathop{\triangle}}
\newcommand{\PS}{\mathcal{P}}
\newcommand{\GF}{\mathbb{Z}_2}
\newcommand{\LS}{\mathcal{L}}
\newcommand{\LSP}{L}
\newcommand{\CS}{\mathcal{C}}
\newcommand{\DHGSE}{\mathscr{D}}
\newcommand{\PGS}{\mathscr{P}}
\newcommand{\PG}{\mathcal{P}}
\newcommand{\PL}{\mathsf{PL}}
\newcommand{\HGSE}{\mathscr{H}}
\newcommand{\HGS}{\mathcal{H}}
\newcommand{\BCS}{\mathcal{B}}
\newcommand{\parmap}{\mathrm{par}}
\newcommand{\lab}{\ell}
\newcommand{\rectangle}{\fboxsep0pt\fbox{\rule{0.75em}{0pt}\rule{0pt}{1ex}}}
\newcommand{\SD}{\mathscr{S}}
\renewcommand\mod{\,\mathrm{mod}\,}
\newcommand{\footremember}[2]{%
   \footnote{#2}
    \newcounter{#1}
    \setcounter{#1}{\value{footnote}}%
}
\newcommand{\footrecall}[1]{%
    \footnotemark[\value{#1}]%
}
\def\@fnsymbol#1{\ensuremath{\ifcase#1\or 1\or 2\or
   * \or *\or \|\or **\or \dagger\dagger
   \or \ddagger\ddagger \else\@ctrerr\fi}}
\title{On the uniqueness of compiling graphs\\under the parity transformation}
\author{Florian Dreier\footremember{1}{Institute for Theoretical Physics, University of Innsbruck, Innsbruck A-6020, Austria}$^,$\footremember{2}{Parity Quantum Computing GmbH, Innsbruck A-6020, Austria}$^,$\footremember{3}{florian.dreier@uibk.ac.at}
\and and \and Wolfgang Lechner\footrecall{1} $^,$\footrecall{2}
}
\begin{document}

\maketitle
\begin{abstract}
     In this article, we establish a mathematical framework that utilizes concepts from graph theory to formalize the parity transformation, an encoding strategy for compiling optimization problems on quantum devices. We introduce the transformation as a mapping that encompasses all possible compiled hypergraphs and investigate its uniqueness properties in more detail. Specifically, by introducing so-called loop labelings, we derive an alternative expression of the preimage of any set of compiled hypergraphs under this encoding procedure when all equivalence classes of graphs are being considered. We then deduce equivalent conditions for the injectivity of the parity transformation on any subset of all equivalences classes of graphs. Through concrete examples, we demonstrate that the parity transformation is not an injective mapping, and also introduce an important class of physical layouts and their corresponding set of constraints whose preimage is uniquely determined. In addition, we provide an algorithm which is based on classical algorithms from theoretical computer science and computes a compiled physical layout in this class in polynomial time.\vspace{0.125cm}

    \medskip \noindent \textbf{Keywords:}
combinatorial problems, hypergraphs, quantum optimization, parity transformation.\vspace{0.125cm}

\medskip \noindent \textbf{AMS subject classifications:}
90C27, 05C65, 05C38, 81P99.
\end{abstract}

\section{Introduction}
The development of quantum computers and quantum algorithms is continuing to advance at full pace and has attracted considerable attention in science and engineering in recent years \cite{OfePetHeeReiLegVla16,AruAryBabBac19,EbaWanLevKeeSemOmr21,JosKokBijKraZacBlaRooZol23,KimEddAnaWeiBerRosNayWuZalTemKan23,MohRamIsaEppPieStrBoiNev23}. As shown in \cite{Sho99} for example, quantum phenomena facilitate the opportunity to develop algorithms for solving certain problems in polynomial time, while no algorithm on classical computers with this time complexity is yet known. However, search problems such as integer factorization are not the only problems where quantum devices can lead to a potential quantum speedup \cite{BleBraCesChoLiPanSum23, KinRayLantHarZuc23, ShaLiChaDeCHer23}. Quantum annealing and the quantum approximate optimization algorithm (QAOA) are well-known techniques \cite{KadNis98,FarGolGutLapLunPre01,JohAmGil11,FarGoldGut14,HauKatLecNishOli20,ZhoWanChoPicLuk20} that are designed to solve combinatorial optimization problems, demonstrating that quantum mechanics can also be applied to complex optimization tasks.
In general, both methods encode a given combinatorial optimization problem into a physical quantum device such that its cost function describes the objective function of the optimization problem itself. One main challenge in the implementation is the exchange of information between the qubits via long-range interactions as well as the scalability of the quantum device as a consequence of the encoding of the optimization problem. Therefore, there is a great need to find encoding strategies which can deal with such issues.
\subsection{The parity architecture}
Extending the so-called LHZ-scheme in \cite{LecHauZol15}, the recently introduced parity architecture \cite{EndHoeNieDriLec23} is an encoding strategy which maps each product of variables in the optimization problem into a new variable, the \emph{parity variable}. The output of an optimization problem under the parity encoding will be a new transformed optimization problem whose objective function consists only of a sum of parity variables. In order to obtain an equivalent optimization problem, constraints to the new optimization problem will be added such that each value of the parity variable equals the corresponding product of values of the original variables.
Therefore, optimization problems are not only mapped to new optimization problems with equal minima but also to instances where the image of all constraint-fulfilling inputs under the new objective function equals the entire image of the original objective function.
For example, the optimization problem
\begin{equation}
    \label{eq:examploptproblem}
    \underset{s_1,\ldots,s_5\in\set{-1,1}}{\min}\sum_{1\leq i<j\leq 5}J_{ij}s_is_j\quad\text{where $J_{ij}\in\R\setminus\set{0}$ for $1\leq i<j\leq 5$}
\end{equation}
can be transformed under the parity encoding into the equivalent optimization problem
\begin{align*}
    \underset{\substack{y_{ij}\in\set{-1,1}\\1\leq i<j\leq 5}}{\min} & \sum_{1\leq i<j\leq 5}J_{ij}y_{ij}\\
    \text{subject to}\quad &\hspace{0.5cm} y_{12}y_{23}y_{13}=1\\
    &\hspace{0.5cm} y_{23}y_{34}y_{24}=1\\
    &\hspace{0.5cm}y_{34}y_{45}y_{35}=1\\
    &\hspace{0.5cm}y_{13}y_{14}y_{24}y_{23}=1\\
    &\hspace{0.5cm}y_{24}y_{25}y_{35}y_{34}=1\\
    &\hspace{0.5cm}y_{14}y_{15}y_{25}y_{24}=1.\\
\end{align*}
Optimization problem in Equation \eqref{eq:examploptproblem} and the constraints of its parity-encoded problem are illustrated in Figure \ref{fig:lhz-scheme}.
\begin{figure}
    \centering
    \begin{center}
\begin{minipage}{0.48\textwidth}
\begin{center}
\includegraphics[]{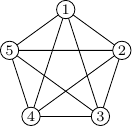}
\end{center}
\end{minipage}
\begin{minipage}{0.48\textwidth}
\begin{center}
\includegraphics[]{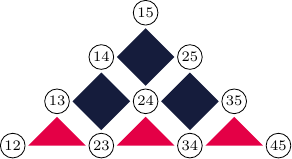}
\end{center}
\end{minipage}
\end{center}
    \caption{Left: Visualization of the original problem as a graph. Right: Illustration of the constraints of the parity-encoded problem which are represented as three- and four-body plaquettes.}
    \label{fig:lhz-scheme}
\end{figure}
We observe that the above constraints correspond to so-called cycles or loops in the graph of the original problem. For example, the first constraint involving the product $y_{12}y_{23}y_{13}$ forms a cycle with the edges $\set{1,2}$, $\set{2,3}$ and $\set{1,3}$ in the original graph. It is worth noting that constraints which are constructed by cycles of the original graph ensure that the transformed problem remains equivalent to the original one. For more details, we refer to \cite{EndHoeNieDriLec23} and Subsection \ref{subsec:paritygraphmapping}. We also emphasize that the construction of the constraints for any combinatorial problem is independent of the coupling strengths in front of the products such as the numbers $J_{ij}$ in \eqref{eq:examploptproblem}. Since in general different sets of constraints for the transformed optimization problem in the parity encoding can be chosen, the original problem can be in principle mapped to different parity-encoded optimization problems. We note that in this paper the parity transformation is treated as a mapping whose output equals the set of equivalence classes of all compiled hypergraphs and therefore takes into account every possible choice for the set of constraints (see Subsection \ref{subsec:paritygraphmapping} for more details).

With this encoding, the  optimization problem can be mapped onto a quantum device where interactions between physical qubits are local. Specifically, by adding ancillas, i. e. parity variables which do not represent a product in the original problem, any optimization problem can be mapped to a new optimization problem which allows to implement all required constraints on a corresponding physical device with local three- and four-body interactions \cite{HoeNieKalLec23}. Moreover, the parity architecture provides an encoding paradigm which can be used independently of the quantum platform such as trapped ions \cite{CirZol95,HafRooBla08}, superconducting qubits \cite{WalSchBlaFruHuaMajKumGirSch04,KjaSchBraKraWanGusOli20} or neutral atoms \cite{BarLieLesLahBro18,LevKee19,HenBegSigLahBroReyJur20,EbaKeeCaiWan22}.

Besides recently developed compilation strategies for the parity transformation \cite{HoeMesLec23,HoeNieKalLec23}, the encoding procedure has also been further extensively studied in the field of quantum optimization as well as quantum computation.
For example in \cite{EndMesFelDlaLec22}, a modular parallelization method for QAOA and parity-encoded optimization problems has been introduced. Furthermore, in \cite{DlaEndMbeKruLecBij22,LanDlaEndLec23} novel techniques for realizing quantum devices to perform quantum optimization with neutral atom platforms using the parity architecture have been proposed. As demonstrated in \cite{FelMesEndLec22}, the parity transformation also enables a new universal quantum computing approach which leads to advantages for crucial quantum algorithms such as the quantum Fourier transform. However, a pivotal practical question, directly related to transformation itself and yet to be thoroughly examined, is under which circumstances non-equivalent optimization problems (those that cannot be transformed into one another by re-labeling) can or cannot be solved with the same compiled physical layout.
Or in other words, how can the set of optimization problems that can be compiled to a given hypergraph under the parity encoding be described and when is the set uniquely determined? Moreover, does there exist a class of physical layouts which do have this uniqueness property? In this paper, we study such questions in more detail and investigate the case where the products in the optimization problems consist of two-body interactions.
\subsection{Outline}
    In Section \ref{sec:notationauxresults}, we introduce some notations and derive auxiliary results which we will use throughout the whole article. Furthermore, by representing the spin variables as vertices and the monomials as an edge of a hypergraph, we establish a mathematical framework to define the parity transformation as a mapping whose domain corresponds to the set of all equivalence classes of hypergraphs. Each equivalence class represents an optimization problem (apart from the individual prefactors such as $J_{ij}$ in \eqref{eq:examploptproblem}) and all its equivalent representations. The output of the parity transformation describes the set of equivalences classes of all possible compiled hypergraphs which can 
    be used for designing physical layouts for quantum devices where the vertices represent the physical qubits and the edges their physical interactions. In Section \ref{sec:uniquenesssimplegraph}, we investigate the parity map in more detail and derive a description of the preimage of any set of equivalence classes of compiled hypergraphs, yielding an equivalent condition for the injectivity of the parity transformation on any subdomain. The main statement of Section \ref{sec:uniquenesssimplegraph} is given in Theorem \ref{thm:mainthm}. The last section of this paper applies the derived results of Section \ref{sec:uniquenesssimplegraph} and shows concrete examples which demonstrate that the parity encoding is in general not a unique mapping. However, as we will prove in Subsection \ref{subsec:charrectplaqlayouts}, there exists an important class of compiled hypergraphs called \emph{rectangular plaquette layouts}, whose corresponding optimization problems can be uniquely determined. This class represents a widely encountered physical hardware layout found in real quantum devices, where qubit connectivity is local and follows a square-grid-like structure. As a consequence of Theorem \ref{thm:preimagerectangularlayouts}, we present a procedure that is based on well-known algorithms and can decide in polynomial time whether a given optimization problem can be compiled on a plaquette layout of this type and simultaneously generates the corresponding physical layout.
    
    Except for Section \ref{sec:notationauxresults}, all results will be derived for optimization problems which can be represented as a graph where all edges contain two vertices. While hypergraphs can represent more general cases, this work focuses on graphs due to their fundamental structure and wide applicability.
\section{Notation and auxiliary results}
\label{sec:notationauxresults}
In the first two definitions, we recall two essential terms from graph theory, which are hypergraphs and isomorphic hypergraphs. Note that throughout the whole article, we denote by $f(A)=\set{f(a)\mid a\in A}$ the image of a subset $A\subset X$ and $f^{-1}(B)=\set{x\in X\mid f(x)\in B}$ the preimage of a subset $B\subset Y$ under a function $f\colon X\to Y$.
\begin{definition}[Hypergraph]
    A hypergraph $H$ is a pair $(V,E)$, where $V=\set{v_1,\ldots,v_n}$ is the set of all \emph{vertices} and $E\subset\PS(V)\setminus\set{\emptyset}$ is a subset of the power set of $V$, called the \emph{edge set} of the hypergraph $H$. We call $H$ a \emph{graph}, if for all edges $e\in E$ the number of vertices in $e$ is equal to two.
\end{definition}
\begin{definition}[Isomorphic hypergraphs]
    Let $H=(V,E)$ and $H'=(V',E')$ be two hypergraphs. Then, we say \emph{$H$ is isomorphic to $H'$} if and only if there exists a bijection $f\colon V\to V'$ such that for all $e\subset V$
    \begin{equation}
        \label{eq:isomhypergraphs}
        e\in E\Longleftrightarrow f(e)=\set{f(v)\mid v\in e}\in E'.
    \end{equation}
\end{definition}
\begin{remark}
    \label{rem:isomhyp}
    \begin{enumerate}
        \item Let $\HGS\coloneqq\set{H\mid H\text{ is a hypergraph}}$ be the set of all hypergraphs. Then, the relation
        \begin{equation*}
            H\sim_\HGS H':\Longleftrightarrow H\text{ is isomorphic to } H'
        \end{equation*}
        is an equivalence relation on $\HGS$.
        \item \label{item:isomhyp} For a hypergraph $H=(V,E)$ and a bijection $f\colon V\to V'$ between $V$ and another set $V'$, the hypergraph $H_f=(V_f,E_f)$ with $V_f\coloneqq V'$ and $E_f\coloneqq\set{f(e)\mid e\in E}$ defines an isomorphic hypergraph to $H$, called \emph{the isomorphic hypergraph of $H$ induced by $f$}.
    \end{enumerate}
\end{remark}
\noindent In the next definition, we recall the term \emph{walk} and \emph{path} in graphs and also define \emph{cycles} in graphs (see also \cite{GroYelAnd18}, for example)
\begin{definition}[Walk, path and cycles in graphs]
\label{def:cycles}
    Let $H=(V,E)$ be a graph, $v_0,\ldots,v_n\in V$ and $e_1,\ldots,e_n\in E$.
    \begin{enumerate}
        \item We call the finite tupel $(v_0,e_1,v_1,e_2,\ldots, e_n,v_n)$ a \emph{walk} in the graph $H$ if it satisfies the following two conditions:
    \begin{enumerate}
        \item For all $i=1,\ldots,n$ it holds $v_{i-1}\neq v_i$.
        \item For all $i=1,\ldots,n$ we have $v_{i-1},v_i\in e_i$.
    \end{enumerate}
    \item If $(v_0,e_1,v_1,e_2,\ldots, e_n,v_n)$ is a walk in $H$ and neither a vertex nor an edge is repeated, we call the walk a \emph{path} in $H$. Moreover, we say that a set of edges $E'\subset E$ is a path in $H$ if and only if there exists an enumeration $E'=\set{e_1,\ldots,e_n}$ and $v_0,\ldots,v_n\in V$ such that $(v_0,e_1,v_1,e_2,\ldots, e_n,v_n)$ is a path in $H$.
    \item We call a walk $(v_0,e_1,v_1,e_2,\ldots, e_n,v_n)$ a \emph{cycle} in $H$ if $v_0,\ldots,v_{n-1}$ are pairwise different and $v_0=v_n$. In analogous way for paths, we call a set of edges $C\subset E$ a cycle if  there exists an enumeration $C=\set{e_1,\ldots,e_n}$ and $v_0,\ldots,v_n\in V$ such that $(v_0,e_1,v_1,e_2,\ldots, e_n,v_n)$ is a cycle in $H$.
    \end{enumerate}
\end{definition}
\subsection{Edge space and cycle space}
In this section, we will introduce the edge space of hypergraphs and the cycle space of graphs which will be important for the definition of the constraint space in the next subsection. We will also derive some properties of cycle spaces and will make use of them in Section \ref{sec:uniquenesssimplegraph}.
\begin{definition}
    Let $H=(V,E)$ be a hypergraph. The \emph{edge space of $H$} is defined as the power set $\PS(E)$ of $E$. By using the symmetric difference operator
    \begin{equation*}
        \triangle\colon\PS(E)\times\PS(E)\to\PS(E)\colon (E_1,E_2)\mapsto E_1\triangle E_2
    \end{equation*}
    as a group operation on $\PS(E)$ and defining the scalar multiplication
    \begin{equation*}
        \cdot\colon\GF\times\PS(E)\to\PS(E)\colon (c,E')\mapsto cE'\coloneqq\begin{cases}
            E',\quad & \text{if $c=1$,}\\
            \emptyset,\quad & \text{if $c=0$},
        \end{cases}
    \end{equation*}
    the triple $(\PS(E),\triangle,\cdot)$ is a vector space over the finite field $\GF$.
 \end{definition}
\begin{definition}
\label{def:ls}
For a graph $H=(V,E)$ we define by $\LS_H\coloneqq\mathrm{span}_{\GF}(\LSP_H)$ the \emph{cycle space} or \emph{loop space} of $H$, where \[\LSP_H\coloneqq\set{\set{e_1,\ldots,e_n}\in \PS(E)\mid \exists v_0,\ldots,v_n\in V\colon (v_0,e_1,v_1,e_2,\ldots, e_n,v_n)\text{ is a cycle in $H$}}\] denotes the spanning set of $\LS_H$. 
\end{definition}
\begin{example}
    \label{examp:cycles}
    Let $H=(V,E)$ be the graph with vertex set $V=\set{1,2,3,4,5}$ and edge set $E=\set{\set{1,2},\set{1,4},\set{1,5},\set{2,3},\set{3,4},\set{3,5}}$. Then the cycle space is given by \[\LS_H=\set{\set{\set{1,2},\set{2,3},\set{3,4},\set{4,1}},\set{\set{1,4},\set{4,3},\set{3,5},\set{5,1}},\set{\set{1,2},\set{2,3},\set{3,5},\set{5,1}},\emptyset}.\] All three non-empty cycles of the loop space are illustrated in Figure \ref{fig:cycles}.
    \begin{figure}
        \centering
        \begin{minipage}{0.32\textwidth}
\begin{center}
\includegraphics{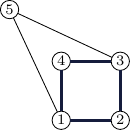}
\end{center}
\end{minipage}
\begin{minipage}{0.32\textwidth}
\begin{center}
\includegraphics{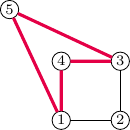}
\end{center}
\end{minipage}
\begin{minipage}{0.32\textwidth}
\begin{center}
\includegraphics{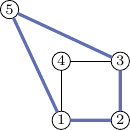}
\end{center}
\end{minipage}
        \caption{Cycles in the loop space of $H$ in Example \ref{examp:cycles}.}
        \label{fig:cycles}
    \end{figure}
\end{example}
\begin{remark}
    \label{rem:ls}
    \begin{enumerate}
        \item By definition, the cycle space $\LS_H$ is a linear subspace of $\PS(E)$ over the finite field $\GF$.
        \item \label{item:subsetls} For a hypergraph $H'=(V',E')$ with $\bigcup_{C\in \LS_H}C\subset E'$, we have $\LS_H\subset\LS_{H'}$. Moreover, in general we have $\LS_H\subset\LS_{H'}$ if the edge set of $H'$ satisfies $E\subset E'$. 
        \begin{proof}
            From $\bigcup_{C\in \LS_H}C\subset E'$ as well as the definition of cycles and the spanning set $\LSP_H$, we see that $\LSP_H\subset \LSP_{H'}$ and therefore, $\LS_H=\mathrm{span}_{\GF}(\LSP_H)\subset \mathrm{span}_{\Z_2}(\LSP_{H'})=\LS_{H'}$. The second statement follows from the the relation $\LSP_H\subset \LSP_{H'}$.
        \end{proof}
        \item \label{item:isomls} If $H'=(V',E')$ is another hypergraph which is isomorphic to $H$, then the cycle spaces $\LS_H$ and $\LS_{H'}$ are isomorphic. For a bijection $f\colon V\to V'$ satisfying \eqref{eq:isomhypergraphs}, an isomorphism between the cycle spaces $\LS_H$ and $\LS_{H'}$ is given by
        \begin{equation*}
            f_{\LS}\colon\LS_H\to\LS_{H'}\colon C\mapsto\bigcup_{e\in C}\set{f(e)}.
        \end{equation*}
        Furthermore, it holds $(f_{\LS})^{-1}=(f^{-1})_{\LS}$.
        \begin{proof}
            \begin{enumerate}[label=(\alph*),wide=\parindent,leftmargin=0pt,align=left]
                \item \label{item:firststepisomcs} As a first step, we prove that
                \begin{equation}
                    \label{eq:cpisom}
                    \LSP_{H'}=\set{\set{f(e)\mid e\in C}\mid C\in \LSP_{H}},
                \end{equation}
            where $\LSP_{H}$ and $\LSP_{H'}$ are the spanning sets of the cycle spaces $\LS_H$ and $\LS_{H'}$, respectively.
            First, assume $C'\in \LSP_{H'}$. Then, there exist vertices $v'_0,\ldots,v'_n\in V'$ and edges $e'_1,\ldots,e'_n\in E'$ such that $(v'_0,e'_1,v'_1,e'_2,\ldots, e'_n,v'_n)$ is a cycle in $H'$ and $C'=\set{e'_1,\ldots,e'_n}$. Define $v_0\coloneqq f^{-1}(v'_0),\ldots,v_n\coloneqq f^{-1}(v'_n)$ and $e_1\coloneqq f^{-1}(e'_1),\ldots,e_n\coloneqq f^{-1}(e'_n)$. Then, from the bijectivity of $f$, we obtain that $(v_0,e_1,v_1,e_2,\ldots,e_n,v_n)$ is cycle in $H$, and therefore \[C\coloneqq\set{e_1,\ldots,e_n}=\set{f^{-1}(e_1),\ldots,f^{-1}(e_n)}\in \LSP_H,\] which shows $C'=\set{f(e)\mid e\in C}$, proving that $\LSP_{H'}$ is a subset of the right set in \eqref{eq:cpisom}. Using similar arguments, it can be shown that $\set{f(e)\mid e\in C}\in \LSP_{H'}$ for $C\in \LSP_{H}$.\\
                \item \label{item:secondstepisomcs}Next, we show that $f_{\LS}(dC_1\triangle C_2)=df_{\LS}(C_1)\triangle f_{\LS}(C_2)$ for $d\in\GF$ and $C_1,C_2\in \LSP_H$. We observe
                \begin{align*}
                    f_{\LS}(dC_1\triangle C_2)&=\bigcup_{e\in dC_1\setminus C_2}\set{f(e)}\cup\bigcup_{e\in C_2\setminus dC_1}\set{f(e)}\\
                    &=d\bigcup_{e\in C_1}\set{f(e)}\setminus\bigcup_{e\in C_2}\set{f(e)} \cup\bigcup_{e\in C_2}\set{f(e)}\setminus d\bigcup_{e\in C_1}\set{f(e)}=df_{\LS}(C_1)\triangle f_{\LS}(C_2).
                \end{align*}
                This shows together with \ref{item:firststepisomcs} that $f_{\LS}$ is a well-defined linear mapping.
                \item From the bijectivity of $f$, we immediately observe that $f_{\LS}$ is injective. To prove surjectivity, assume $C'\in\LS_{H'}$ and define $C\coloneqq \bigcup_{e'\in C'}\set{f^{-1}(e')}$. By replacing $f$ with $f^{-1}$ and using the same calculations as in \ref{item:firststepisomcs} and \ref{item:secondstepisomcs}, we can deduce that $C\in\LS_{H}$. The relations $f_{\LS}(C)=C'$ and $(f_{\LS})^{-1}=(f^{-1})_{\LS}$ can be easily verified.\qedhere
            \end{enumerate}
        \end{proof}
        \item \label{item:csbases} By $\BCS_{\LS_H}\coloneqq\set{B\subset\LS_H\setminus\set{\emptyset}\mid\text{$B$ is basis of $\LS_H$}}$ we denote the set of all bases of $\LS_H$.
        \item As mentioned in \cite{LieRiz07}, there exist different classes of cycle bases of graphs, which we will also encounter in this article. To name two of these, the \emph{fundamental basis} class of a graph $H$ consists of all cycle bases $B\in\BCS_{\LS_H}$ with $B\subset \LSP_H$ which satisfy
        \begin{equation*}
            \forall C\in B\colon C\setminus\bigcup_{C'\in B\setminus\set{C}} C'\neq\emptyset,
        \end{equation*}
        whereas for bases in the \emph{weakly fundamental basis} class there exists an enumeration of $B=\set{B_1,\ldots,B_n}$ which only satisfy
        \begin{equation*}
            \forall 2\leq k\leq n: B_k\setminus\bigcup_{l=1}^{k-1} B_l\neq\emptyset.
        \end{equation*}
    \end{enumerate}
\end{remark}
\noindent Before we continue with the definition of the constraint space of hypergraphs, we present a well-known result for the cycle space of graphs, which will be used in 
Section \ref{sec:applications}. It shows a formula for the dimension of the cycle space for graphs (see \cite{GroYelAnd18}, for example).
\begin{theorem} \label{thm:dimformulals}Let $H=(V,E)$ be a graph and $c$ denote the number of components of the graph $H$. Then, it holds \[\dim(\LS_{H})=\abs{E}-\abs{V}+c.\]
\end{theorem}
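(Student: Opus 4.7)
The plan is to reduce the statement to the well-known fact that a graph with $c$ components admits a spanning forest with exactly $\abs{V}-c$ edges, and then to exhibit an explicit basis of $\LS_H$ of cardinality $\abs{E}-\abs{V}+c$. Concretely, I would fix a spanning forest $T$ of $H$ (one spanning tree for each of the $c$ connected components) and call its edge set $E_T\subset E$. The complementary set of \emph{chord} edges $E\setminus E_T$ then has exactly $\abs{E}-(\abs{V}-c)=\abs{E}-\abs{V}+c$ elements, which is the target dimension.

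Next, for every chord $e=\set{u,v}\in E\setminus E_T$, I would form the \emph{fundamental cycle} $C_e$ consisting of $e$ together with the unique $u$-$v$-path in $T$ (such a path exists because $u$ and $v$ lie in the same component of $H$, hence in the same tree of $T$; uniqueness follows from $T$ being a forest). By Definition \ref{def:cycles}, $C_e$ is indeed a cycle, so $C_e\in\LSP_H\subset\LS_H$. I would then claim that $B\coloneqq\set{C_e\mid e\in E\setminus E_T}$ is a basis of $\LS_H$, which immediately yields the formula.

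Linear independence is the easy half: for every chord $e$, the cycle $C_e$ contains $e$ itself, but no other $C_{e'}$ with $e'\neq e$ does, since all remaining edges of $C_e$ lie in $E_T$. Hence no nontrivial $\GF$-combination of the $C_e$'s can equal $\emptyset$. To prove that $B$ spans $\LS_H$, I would take an arbitrary $C\in\LSP_H$, list the chords it uses as $e_1,\ldots,e_k\in C\cap(E\setminus E_T)$, and consider
\begin{equation*}
C'\coloneqq C\,\triangle\,C_{e_1}\,\triangle\,\cdots\,\triangle\,C_{e_k}.
\end{equation*}
By construction $C'$ is a symmetric difference of cycles, hence lies in $\LS_H$, and every chord has been cancelled, so $C'\subset E_T$. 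Then I would pass to the general $C\in\LS_H$ via linearity.

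The main obstacle is the closing step: arguing that an element of $\LS_H$ whose support lies entirely in the forest $T$ must be empty. I would handle this by an even-degree argument: every $C''\in\LS_H$ induces, as a subgraph on $V$, one in which each vertex has even degree (this is clear on the spanning set $\LSP_H$ of single cycles and is preserved by symmetric difference, since the degree of a vertex in $A\triangle B$ equals $\deg_A(v)+\deg_B(v)\pmod 2$). But a nonempty subgraph of a forest always has a leaf, i.e., a vertex of degree one, contradicting the even-degree property. Applying this to $C'$ gives $C'=\emptyset$, so $C$ is the sum of the $C_{e_i}$, completing the spanning proof and thus the theorem.
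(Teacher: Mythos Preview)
Your proof is correct and follows the standard fundamental-cycle argument. Note, however, that the paper does not actually prove Theorem~\ref{thm:dimformulals}: it merely states the result as well known and cites \cite{GroYelAnd18}. So there is no ``paper's own proof'' to compare against; you have supplied precisely the classical argument (spanning forest, fundamental cycles indexed by chords, linear independence via the unique chord in each $C_e$, spanning via the even-degree/leaf contradiction in a forest) that one finds in the cited reference.
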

\subsection{The constraint space of hypergraphs}
Subsequent to the previous section, we are now able to define the so-called constraint space of hypergraphs. As we will later see, this space will be one of the main objects appearing in the definition of the parity transformation in Subsection \ref{subsec:paritygraphmapping}.
\begin{definition}[Constraint space of hypergraphs]
    Let $H=(V,E)$ be a hypergraph and $C\in\PS(E)$ be an element of the edge space of $H$. We call $C$ a \emph{constraint of the hypergraph $H$} if it satisfies the following condition:
    \begin{equation*}
        \forall v\in V_C\colon \abs{\set{e\in C\mid v\in e}}\text{ is even}, 
    \end{equation*}
    where $V_C\coloneqq\set{v\in V\mid \exists e\in C\colon v\in e}$. Furthermore, we denote by $\CS_H\coloneqq\set{C\in\PS(E)\mid C\text{ is a constraint}}$ the \emph{constraint space of $H$}.
\end{definition}
\begin{example}
    \label{examp:constraints}
    Let $H=(V,E)$ with $V=\set{1,2,3,4,5}$ and $E=\set{\set{1,2},\set{2,5},\set{1,3},\set{1,2,4},\set{3,4,5}}$. Define $C_1\coloneqq\set{\set{2,5},\set{1,3},\set{1,2,4},\set{3,4,5}}$ and $C_2\coloneqq\set{\set{1,2},\set{2,5},\set{1,3},\set{3,4,5}}$. Then we see that $V_{C_1}=V$ as well as $\abs{\set{e\in C_1\mid v\in e}}=2$ for all $v\in V_{C_1}$, which implies that $C_1$ is a constraint. However, $\abs{\set{e\in C_2\mid 4\in e}}=1$ which shows that $C_2$ is not a constraint in $H$. Figure \ref{fig:constraints} illustrates the two elements $C_1$ and $C_2$ in the edge space of $H$.
    \begin{figure}
        \centering
        \begin{minipage}{0.48\textwidth}
\begin{center}
\includegraphics{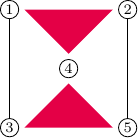}
\end{center}
\end{minipage}
\begin{minipage}{0.48\textwidth}
\begin{center}
\includegraphics{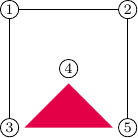}
\end{center}
\end{minipage}
        \caption{Left: Constraint $C_1$ in Example \ref{examp:constraints}. Right: Illustration of $C_2\in\PS(E)$ in Example \ref{examp:constraints} which is not a constraint.}
        \label{fig:constraints}
    \end{figure}
\end{example}
\begin{remark}
    \label{rem:cs}
    Let $H=(V,E)\in\HGS$.
    \begin{enumerate}
        \item The constraint space $\CS_H$ is a linear subspace of $\PS(E)$.
        \begin{proof}
            Let $C_1,C_2\in\CS_H$ and $d\in\GF$. Then, we observe from the definition of the symmetric difference operator for $v\in V_{dC_1\triangle C_2}$
            \begin{align*}
                |\{e\in dC_1\triangle C_2\mid\,&v\in e\}|\\
                &=\abs{\set{e\in dC_1\setminus C_2\mid v\in e}}+\abs{\set{e\in C_2\setminus dC_1\mid v\in e}}\\
                &=\abs{\set{e\in dC_1\mid v\in e}\setminus\set{e\in C_2\mid v\in e}}+\abs{\set{e\in C_2\mid v\in e}\setminus\set{e\in dC_1\mid v\in e}}\\
                &=\abs{\set{e\in dC_1\mid v\in e}}+\abs{\set{e\in C_2\mid v\in e}}-2\abs{\set{e\in dC_1\cap C_2\mid v\in e}}.
            \end{align*}
            By assumption, we have $\abs{\set{e\in dC_1\mid v\in e}}$ and $\abs{\set{e\in C_2\mid v\in e}}$ is even. Thus, $|\{e\in dC_1\triangle C_2\mid v\in e\}|$ is even and hence, $dC_1\triangle C_2\in\CS_H$.
        \end{proof}
        \item \label{item:csequalls} If $H$ is a graph, then $\CS_H=\LS_H$. 
        \begin{proof}
            This follows from the application of Euler's theorem (see Theorem 4.5.11 in \cite{GroYelAnd18}, for example) on each component of the graph $H'=(V',E')$, where $E'$ is a constraint in $\CS_H$ and $V'=\set{v'\in V'\mid\exists e'\in E'\colon v'\in e'}$.
        \end{proof}
        \item \label{item:isomcs} If $H'=(V',E')$ is another hypergraph which is isomorphic to $H$, then the constraint spaces $\CS_H$ and $\CS_{H'}$ are isomorphic. For a bijection $f\colon V\to V'$ satisfying \eqref{eq:isomhypergraphs}, an isomorphism between the constraint spaces $\CS_H$ and $\CS_{H'}$ is given by
        \begin{equation*}
            f_{\CS}\colon\CS_H\to\CS_{H'}\colon C\mapsto\bigcup_{e\in C}\set{f(e)}.
        \end{equation*}
        \begin{proof}
            First, we show that $f_{\CS}$ is well-defined. Let $C\in\CS_H$ and $C'\coloneqq f_{\CS}(C)$. For $v'\in V_{C'}$ we observe that
            \begin{multline*}
                \abs{\set{e'\in C'\mid v'\in e'}}\\
                =\bigg|\biguplus_{\substack{e'\in C'\\ v'\in e'}} \set{e'}\bigg|=\bigg|\biguplus_{\substack{e\in C\\ v'\in f(e)}} \set{f(e)}\bigg|=\sum_{\substack{e\in C\\ v'\in f(e)}}\abs{\set{f(e)}}=\sum_{\substack{e\in C\\ f^{-1}(v')\in e}}\abs{\set{e}}\\=\abs{\set{e\in C\mid f^{-1}(v')\in e}}
            \end{multline*}
            and from the definition of $V_{C'}$ we see that $f^{-1}(v')\in V_C$. Therefore, $\abs{\set{e'\in C'\mid v'\in e'}}$ is even, which shows that $f_{\CS}(\CS_H)\subset\CS_{H'}$. Next, we define $C\coloneqq \bigcup_{e'\in C'}\set{f^{-1}(e')}$ for some $C'\in\CS_{H'}$. Using the same calculations as before, we can deduce that $C\in\CS_H$. Similarly as in the proof of Remark \ref{rem:ls} \ref{item:isomls}, we then obtain $f_{\CS}(C)=C'$, which shows that $f_{\CS}$ is surjective. The linearity and injectivity of $f_{\CS}$ follows from the same arguments as in Remark \ref{rem:ls} \ref{item:isomls}.\qedhere
        \end{proof}
        \item Analogously to the cycle space, we define $\BCS_{\CS_H}$ as the set of all bases of the constraint space of $H$.
    \end{enumerate}
\end{remark}
\noindent As we can observe in Remark \ref{rem:cs} \ref{item:csequalls}, the constraint space can be seen as a variant of an extension of the cycle space for hypergraphs. In \cite{Ber84}, the definition of cycles for arbitrary hypergraphs has been extended, and it is almost identical to the one as for graphs in Definition \ref{def:cycles}. For hypergraphs, it is additionally assumed that edges in the cycle are pairwise distinct, which is obviously true for edges in a graph. However, using this definition of cycles for hypergraphs and defining the corresponding cycle space as in Definition \ref{def:ls} can lead to a space which is not equal to the constraint space as the following example shows.
\begin{example}
    Let $H=(V,E)$ be the hypergraph as defined in Example \ref{examp:constraints}. Then, we observed that $C_2$ is not an element in the constraint space but using the sequence \[(1,\set{1,2},2,\set{2,5},5,\set{3,4,5},3,\set{1,3},1)\] we deduce that $C_2$ is in the cycle space of $H$. This shows that every cycle need not to be a constraint of a hypergraph. Conversely, let $H'=(V',E')$ be the hypergraph with $V'\coloneqq\set{1,2,3,4}$ and $E'\coloneqq\set{\set{1,2,3,4},\set{1,2},\set{3,4}}$. Then, we observe that \[\LS_H=\set{\set{\set{1,2},\set{1,2,3,4}},\set{\set{3,4},\set{1,2,3,4}},\set{\set{1,2},\set{3,4}},\emptyset}.\]
    However, the edge set $E'$ is a constraint of $H$ but not an element of the cycle space of $H$. Therefore, there also exist constraints of hypergraphs which are not contained in the cycle space.
\end{example}
\subsection{The parity graph mapping}
\label{subsec:paritygraphmapping}
For the definition of the parity transformation, which we will introduce as the parity graph mapping, we will make further use of the following notations.
\begin{definition}
    \label{def:comphyps}
    Let $H=(V,E)\in\HGS$ and $E_B\coloneqq \set{e\in E\mid \exists C\in B\colon e\in C}$ be the set of edges of a basis $B\in\BCS_{\CS_H}$ with cardinality $m\coloneqq\abs{E_{B}}$. Then, for an enumeration $h\colon\set{1,\ldots,m}\to E_B$ of the set $E_B$, we denote by $\PG_{B}=(V_{\PG_{B}},E_{\PG_{B}})\in\HGS$ with the vertex set $V_{\PG_{B}}\coloneqq\set{1,\ldots,m}$ and the edge set \[E_{\PG_{B}}\coloneqq\bigcup_{C\in B}\set{h^{-1}(C)}\] the \emph{compiled hypergraph of the basis $B$ in $\BCS_{\CS_H}$}. Moreover, we call \[\PG_H\coloneqq \set{[\PG_{B}]\mid B\in\BCS_{\CS_H}}\] the \emph{set of compiled hypergraphs of $H$}.
\end{definition}
\begin{example}
    \label{examp:comphypergraphs}
    Let $H=(V,E)\in\HGS$ be the graph defined by $V=\set{1,2,3,4}$ and\\$E=\set{\set{1,2},\set{2,3},\set{3,4},\set{1,4},\set{2,4}}$. We observe that
    \begin{align*}
        B_1&=\set{\set{\set{1,2},\set{2,3},\set{3,4},\set{4,1}},\set{\set{1,2},\set{2,4},\set{4,1}}},\\
        B_2&=\set{\set{\set{1,2},\set{2,3},\set{3,4},\set{4,1}},\set{\set{2,3},\set{3,4},\set{4,2}}},\\
        B_3&=\set{\set{\set{1,2},\set{2,4},\set{4,1}},\set{\set{2,3},\set{3,4},\set{4,2}}}\quad\text{and}\\
        \BCS_{\CS_H}&=\set{B_1,B_2,B_3}.
    \end{align*}
    Next, we choose the enumeration $h\colon\set{1,2,3,4,5}\to E$ according to the table
    \begin{center}
        \begin{tabular}{c | c | c | c | c | c}
            $i$ & $1$ & $2$ & $3$ & $4$ & $5$\\\hline
            $h(i)$ & $\set{1,2}$ & $\set{2,3}$ & $\set{3,4}$ & $\set{4,1}$ & $\set{2,4}$
        \end{tabular}
    \end{center}
    yielding the compiled hypergraphs
    \begin{align*}
        &\PG_{B_1}=(V_{\PG_{B_1}},E_{\PG_{B_1}})\quad\text{ with }\quad V_{\PG_{B_1}}=\set{1,2,3,4,5}\quad\text{and}\quad E_{\PG_{B_1}}=\set{\set{1,2,3,4},\set{1,4,5}},\\
        &\PG_{B_2}=(V_{\PG_{B_2}},E_{\PG_{B_2}})\quad\text{ with }\quad V_{\PG_{B_2}}=\set{1,2,3,4,5}\quad\text{and}\quad E_{\PG_{B_2}}=\set{\set{1,2,3,4},\set{2,3,5}},\\
        &\PG_{B_3}=(V_{\PG_{B_3}},E_{\PG_{B_3}})\quad\text{ with }\quad  V_{\PG_{B_3}}=\set{1,2,3,4,5}\quad\text{and}\quad E_{\PG_{B_3}}=\set{\set{1,4,5},\set{2,3,5}}.
    \end{align*}
    Figure \ref{fig:comphypergraphs} illustrates the three compiled hypergraphs using the enumeration $h$.
    \begin{figure}[h!]
    \centering
\begin{minipage}{0.32\textwidth}
\begin{center}
\includegraphics{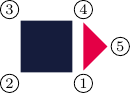}
\end{center}
\end{minipage}
\begin{minipage}{0.32\textwidth}
\begin{center}
\includegraphics{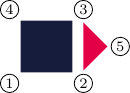}
\end{center}
\end{minipage}
\begin{minipage}{0.32\textwidth}
\begin{center}
\includegraphics{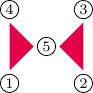}
\end{center}
\end{minipage}
\caption{Illustration of the three compiled hypergraphs in Example \ref{examp:comphypergraphs}. Left: $\PG_{B_1}$. Center: $\PG_{B_2}$. Right: $\PG_{B_3}$.}
\label{fig:comphypergraphs}
    \end{figure}
    Since $\PG_{B_1}$ and $\PG_{B_2}$ are isomorph, we see that $\PG_H=\set{[\PG_{B_1}],[\PG_{B_2}],[\PG_{B_3}]}=\set{[\PG_{B_1}],[\PG_{B_3}]}$.
\end{example}
\begin{remark}
    \label{rem:comphypergraphs}
    \begin{enumerate}
        \item The definition of the set of compiled hypergraphs $\PG_H$ of $H$ is well-defined, i.e., independent of the choice of enumeration $h$ for each basis $B\in\BCS_{\CS_H}$.
        \item \label{item:csedgeset} For all $B_1,B_2\in \BCS_{\CS_H}$, we have $E_{B_1}=E_{B_2}$.
        \begin{proof}
            Suppose that $e'\in E_{B_2}$. Therefore, there exists $C'\in B_2$ with $e'\in C'$. Since $C'\in\CS_H$ we can find $d_1,\ldots,d_n\in\GF$ such that \[C'=d_1 C_1\triangle\ldots\triangle d_nC_n\subset\bigcup_{i=1}^n C_i,\] where $n\coloneqq \dim(\CS_H)$ and $C_1,\ldots,C_n$ are the basis elements of $B_1$. This and the relation $\bigcup_{i=1}^n C_i=E_{B_1}$ show that $e'\in E_{B_1}$. Repeating the same calculations with $E_{B_1}$ instead of $E_{B_2}$ shows the desired equality.
        \end{proof}
    \end{enumerate}
\end{remark}
\noindent As a last step, we prove the following lemma, which is essential for the subsequent definition of the parity graph mapping. It ensures that the parity graph mapping is well-defined.
\begin{lemma}
    \label{lem:welldefinedparmap}
    Let $H,H'\in\HGS$ be two isomorphic hypergraphs. Then, the sets of compiled hypergraphs of $H$ and $H'$ are equal, i.e., we have $\PG_H=\PG_{H'}$.
\end{lemma}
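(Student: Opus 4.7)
The plan is to construct, from the given isomorphism $f\colon V\to V'$ between $H=(V,E)$ and $H'=(V',E')$, a bijection between $\BCS_{\CS_H}$ and $\BCS_{\CS_{H'}}$ that preserves compiled hypergraphs up to isomorphism. By symmetry of the relation $\sim_\HGS$, it suffices to show $\PG_H\subset\PG_{H'}$; the reverse containment follows by running the same argument with $f^{-1}$.

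First I would invoke Remark \ref{rem:cs} \ref{item:isomcs} to conclude that $f_{\CS}\colon\CS_H\to\CS_{H'}$, $C\mapsto\bigcup_{e\in C}\set{f(e)}$, is a linear isomorphism of vector spaces over $\GF$. Consequently, for every basis $B\in\BCS_{\CS_H}$, the set $B'\coloneqq\set{f_{\CS}(C)\mid C\in B}$ lies in $\BCS_{\CS_{H'}}$. This step reduces the problem to showing $[\PG_B]=[\PG_{B'}]$ for each such pair $B$, $B'$.

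Next, I would exploit the freedom in choosing the enumeration underlying the definition of $\PG_{B'}$ (which is legitimate by Remark \ref{rem:comphypergraphs}, as the resulting equivalence class does not depend on it). Fix an enumeration $h\colon\set{1,\ldots,m}\to E_B$ used to define $\PG_B$, where $m=\abs{E_B}$. Using Remark \ref{rem:comphypergraphs} \ref{item:csedgeset} together with the bijectivity of $f$, it is straightforward to check that $E_{B'}=\set{f(e)\mid e\in E_B}$, so the composition $h'\coloneqq f\circ h\colon\set{1,\ldots,m}\to E_{B'}$ is a valid enumeration. With this choice, I would compute directly
\begin{equation*}
(h')^{-1}(f_{\CS}(C))=h^{-1}\bigl(f^{-1}(\set{f(e)\mid e\in C})\bigr)=h^{-1}(C)
\end{equation*}
for every $C\in B$, which immediately yields the equality of edge sets $E_{\PG_{B'}}=E_{\PG_B}$. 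Since the vertex sets coincide by construction, we obtain the stronger conclusion $\PG_{B'}=\PG_B$ as hypergraphs (not merely isomorphic), so certainly $[\PG_B]=[\PG_{B'}]\in\PG_{H'}$.

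I do not expect any real obstacle here: the argument is essentially bookkeeping once the isomorphism $f_{\CS}$ is in place. The only point that requires care is matching enumerations so that we can compare $\PG_B$ and $\PG_{B'}$ via the concrete definition in Definition \ref{def:comphyps}, rather than working abstractly with equivalence classes; the trick of transporting the enumeration $h$ to $f\circ h$ circumvents this and reduces the proof to Remark \ref{rem:cs} \ref{item:isomcs} plus elementary set-theoretic identities.
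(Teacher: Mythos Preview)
Your proposal is correct and follows essentially the same approach as the paper: transport a basis $B$ to $B'$ via the constraint-space isomorphism $f_{\CS}$ of Remark~\ref{rem:cs}\ref{item:isomcs}, push the enumeration $h$ forward to $h'=f\circ h$ (using Remark~\ref{rem:comphypergraphs}\ref{item:csedgeset}), and verify $(h')^{-1}(f_{\CS}(C))=h^{-1}(C)$ to conclude $\PG_B=\PG_{B'}$. The only cosmetic difference is that you argue the reverse inclusion by symmetry via $f^{-1}$, whereas the paper tacitly uses that $B\mapsto B'$ is a bijection on bases.
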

\begin{proof}
    Let $H'=(V',E')\in\HGS$ be another representative of $[H]$ and let $f\colon V\to V'$ be a bijective function with property \eqref{eq:isomhypergraphs}, where $V$ denotes the vertex set and $E$ the edge set of the hypergraph $H$. Now, let us assume $B\in \BCS_{\CS_H}$ and an enumeration $h\colon\set{1,\ldots,m}\to E_B$. From Remark \ref{rem:cs} \ref{item:isomcs} we deduce \[B'\coloneqq\set{\set{f(e)\mid e\in C}\mid C\in B}\subset\CS_{H'}\setminus\set{\emptyset}\] and $B'\in \BCS_{\CS_{H'}}$. Next, using Remark \ref{rem:comphypergraphs} \ref{item:csedgeset} we define the enumeration $h'\colon\set{1,\ldots,m}\to E_{B'}\colon v\mapsto f(h(v))$ of $E_{B'}$ and observe for every $C\in B$
    \begin{equation*}
        h^{-1}(C)=\bigcup_{e\in C} h^{-1}(e)=\bigcup_{e\in C} (h')^{-1}(f(e))=(h')^{-1}(\set{f(e)\mid e\in C}),
    \end{equation*}
    where we used the relation $(h')^{-1}(e')=h^{-1}(f^{-1}(e'))$ for $e'\in E_{B'}$. Therefore, we can deduce for $e\subset\set{1,\ldots,m}$
    \begin{equation*}
        e\in E_{\PG_{B}}\Longleftrightarrow \exists C\in B\colon e=h^{-1}(C)\Longleftrightarrow \exists C\in B\colon e=(h')^{-1}(\set{f(e)\mid e\in C}),
    \end{equation*}
    which is equivalent to
    \begin{equation*}
        \exists C'\in B'\colon e=(h')^{-1}(C')\Longleftrightarrow e\in E_{\PG_{B'}}.
    \end{equation*}
    Thus, we have $\PG_B=\PG_{B'}$ and hence $\PG_H=\PG_{H'}$, which proves the desired statement.
\end{proof}
\begin{definition}[Parity graph mapping]
    Let $\HGSE\coloneqq\HGS/{\sim}=\set{[H]\mid H\in\HGS}$ and $\PGS\coloneqq\set{\PG_H\mid H\in\HGS}$ be the set containing every set of compiled hypergraphs. The \emph{parity graph mapping} is defined as \[\parmap \colon \HGSE\to\PGS\colon [H]\mapsto \PG_H.\]
\end{definition}
\section{Uniqueness of the parity graph mapping}
\label{sec:uniquenesssimplegraph}
In this section, we examine the parity graph mapping $\parmap$ in more detail and derive an equivalent condition for being injective on any subdomain of $\HGSE$. We will investigate the case, where the equivalence classes of the subdomain consist of graphs. Therefore, by Remark \ref{rem:cs} \ref{item:csequalls}, we may assume that their constraint space corresponds to their cycle space. 

\noindent The main theorem in this section states that there is a connection between injectivity and so-called loop labelings of compiled hypergraphs, which are defined in the following.
\begin{definition}
    For a hypergraph $H=(V,E)$ we call a function \[\lab\colon V\to \set{\set{i,j}\mid i,j\in\N,\ i\neq j}\] a \emph{loop labeling of $H$}, if it satisfies the following two properties:
    \begin{enumerate}
        \item $\lab$ is injective.
        \item All edges $e\in E$ satisfy $\lab(e)\in\LS_{G_{H,\lab}}$, where $G_{H,\lab}$ denotes the graph
        \begin{equation*}
            G_{H,\lab}\coloneqq(V_{G_{H,\lab}},E_{G_{H,\lab}})\quad\text{with}\quad V_{G_{H,\lab}}\coloneqq\bigcup_{e\in E}\bigcup_{v\in e}\lab(v),\quad E_{G_{H,\lab}}\coloneqq\bigcup_{e\in E}\bigcup_{v\in e}\set{\lab(v)}.
        \end{equation*}
    \end{enumerate}
    Moreover, we call $G_{H,\lab}$ the \emph{induced graph of the hypergraph $H$ with loop labeling $\lab$}.
\end{definition}
\begin{example}
    \label{examp:looplabelings}
    Let $H=(V,E)\in\HGS$ be the hypergraph with vertex set $V=\set{1,2,3,4,5,6,7}$ and edge set $E=\set{\set{1,2,3,4},\set{3,4,5,6,7}}$. Then the function $\lab_1\colon V\to\set{\set{i,j}\mid i,j\in\N,\ i\neq j}$ defined by
    \begin{center}
        \begin{tabular}{c | c | c | c | c | c | c | c}
            $v$ & $1$ & $2$ & $3$ & $4$ & $5$ & $6$ & $7$\\\hline
            $\lab_1(v)$ & $\set{1,2}$ & $\set{1,4}$ & $\set{2,3}$ & $\set{3,4}$ & $\set{2,6}$ & $\set{4,5}$ & $\set{5,6}$
        \end{tabular}
    \end{center}
    is a loop labeling of $H$ since $\lab_1$ is injective and $\lab_1(\set{1,2,3,4})=\set{\set{1,2},\set{1,4},\set{2,3},\set{3,4}}\in\LS_{G_{H,\lab_1}}$ as well as $\lab_1(\set{3,4,5,6,7})=\set{\set{2,3},\set{3,4},\set{2,6},\set{4,5},\set{5,6}}\in\LS_{G_{H,\lab_1}}$. Note that $\lab_2\colon V\to\set{\set{i,j}\mid i,j\in\N,\ i\neq j}$ set by
    \begin{center}
        \begin{tabular}{c | c | c | c | c | c | c | c}
            $v$ & $1$ & $2$ & $3$ & $4$ & $5$ & $6$ & $7$\\\hline
            $\lab_2(v)$ & $\set{1,2}$ & $\set{1,4}$ & $\set{2,3}$ & $\set{3,4}$ & $\set{1,5}$ & $\set{3,5}$ & $\set{2,4}$
        \end{tabular}
    \end{center}
    is also a loop labeling of $H$ but the induced graphs $G_{H,\lab_1}$ and $G_{H,\lab_2}$ are not isomorphic since the number of vertices of $G_{H,\lab_1}$ is six whereas of $G_{H,\lab_2}$ is five. Figure \ref{fig:looplabelings} illustrates the induced graphs $G_{H,\lab_1}$ and $G_{H,\lab_2}$ of $H$.
    \begin{figure}
        \begin{center}
            \includegraphics{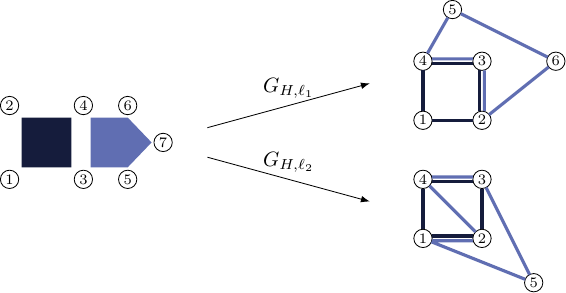}
        \end{center}
        \caption{Illustration of the induced graphs of the two labelings in Example \ref{examp:looplabelings}.}
        \label{fig:looplabelings}
    \end{figure}
\end{example}
\noindent The main theorem of this section reads as follows.
\begin{theorem}
    \label{thm:mainthm}
    Let $\DHGSE\coloneqq\set{[H]\in\HGSE\mid H\in\HGS\text{ is a graph}}$ be the set of all equivalence classes of graphs. Moreover, let $x\in\DHGSE$ and define $y\coloneqq\parmap(x)$. Then, we have 
    \begin{multline}
        \label{eq:eqvinj}
        \parmap\vert_\DHGSE^{-1}(\set{y})\\
        =\set{[H']\in\DHGSE\left\vert\,\exists\,\text{$\lab$ loop labeling of $\PG_{B}$}\colon \dim(\LS_{G_{\PG_{B},\lab}})=\abs{B}\text{ and $\LS_{G_{\PG_{B},\lab}}$ is cycle space of $H'$}\right.}
    \end{multline}
    for any $H\in x$ and any basis $B\in\BCS_{\LS_H}$.
\end{theorem}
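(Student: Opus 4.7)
The plan is to unfold the preimage condition $[H'] \in \parmap\vert_\DHGSE^{-1}(\set{y})$ into the set equality $\PG_{H'} = \PG_H$ and then set up a dictionary between loop labelings of the single fixed compiled hypergraph $\PG_B$ and graphs $H'$ in the preimage. The underlying intuition is that a loop labeling assigns, to each vertex of $\PG_B$, an edge of a new graph in such a way that hyperedges of $\PG_B$ become cycles, which is precisely the data of a basis of the cycle space of this new graph whose compiled hypergraph recovers $\PG_B$.

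The preliminary step I will establish is the key reduction: $\PG_{H'} = \PG_H$ holds if and only if there exists a single basis $B' \in \BCS_{\LS_{H'}}$ with $\PG_B \cong \PG_{B'}$. The forward direction is immediate since $[\PG_B] \in \PG_H = \PG_{H'}$. For the converse, an isomorphism $\phi\colon \PG_B \to \PG_{B'}$ combined with the enumerations of $E_B$ and $E_{B'}$ yields a bijection $\sigma\colon E_B \to E_{B'}$ sending $B$ element-wise onto $B'$. Extending $\sigma$ to subsets gives a linear isomorphism $\LS_H \to \LS_{H'}$, since both cycle spaces are spanned by the respective bases on the respective edge sets. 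This linear isomorphism sends any basis $B_1 \in \BCS_{\LS_H}$ to a basis $\sigma(B_1) \in \BCS_{\LS_{H'}}$, and the induced vertex bijection realizes $\PG_{B_1} \cong \PG_{\sigma(B_1)}$, giving $\PG_H \subseteq \PG_{H'}$; symmetry concludes.

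For the inclusion $\subseteq$ in (\ref{eq:eqvinj}), I will take $[H']$ in the preimage, pick $B' \in \BCS_{\LS_{H'}}$ and an isomorphism $\psi\colon \PG_B \to \PG_{B'}$ granted by the reduction, and define
\begin{equation*}
    \lab(v) \coloneqq h'(\psi(v)),\qquad v \in V_{\PG_B},
\end{equation*}
where $h'$ is the enumeration of $E_{B'}$ used in the construction of $\PG_{B'}$. Injectivity of $\lab$ follows from that of $\psi$ and $h'$. For any edge $\tilde{C} \in E_{\PG_B}$, $\lab(\tilde{C})$ equals the corresponding element of $B' \subset \LS_{H'}$, which is a cycle in the subgraph $G_{\PG_B, \lab}$ of $H'$ spanned by $E_{B'}$. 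Since every element of $\LS_{H'}$ uses only edges in $E_{B'}$, we get $\LS_{G_{\PG_B, \lab}} = \LS_{H'}$, giving simultaneously the loop-labeling property, the cycle-space identification with $H'$, and the dimension identity $\dim(\LS_{G_{\PG_B, \lab}}) = |B'| = |B|$.

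For the reverse inclusion, given a loop labeling $\lab$ satisfying the two conditions, I will first note that, since every vertex of $\PG_B$ lies in some edge (as $h$ enumerates $E_B = \bigcup_{C\in B}C$), $\lab$ descends to a bijection $V_{\PG_B} \to E_{G_{\PG_B, \lab}}$ and extends linearly to an isomorphism of the corresponding power sets. The image of $E_{\PG_B}$ is then a linearly independent subset of $\LS_{G_{\PG_B, \lab}}$ of cardinality $|B|$, hence a basis $B_G$ by the dimension hypothesis. The bijection $\lab$ together with an enumeration of $E_{B_G}$ realizes an explicit isomorphism $\PG_B \cong \PG_{B_G}$, and identifying $\LS_{G_{\PG_B, \lab}}$ with $\LS_{H'}$ transports $B_G$ to a basis $B' \in \BCS_{\LS_{H'}}$ with $\PG_{B'} \cong \PG_B$; the preliminary reduction then delivers $\PG_{H'} = \PG_H$. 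The main obstacle I anticipate is the bookkeeping behind this preliminary reduction, since it requires carefully tracking how an isomorphism between two specific compiled hypergraphs lifts to arbitrary bases; a secondary subtlety is pinning down precisely what ``$\LS_{G_{\PG_B, \lab}}$ is cycle space of $H'$'' means as an identification in $\DHGSE$, which is needed when moving back and forth between $H'$ and the induced graph $G_{\PG_B, \lab}$.
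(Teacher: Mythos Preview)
Your proposal is correct and follows essentially the same approach as the paper: the paper's Lemma~3.7 is your preliminary reduction, its Lemma~3.6 is your construction $\lab = h' \circ \psi$, and its step~(i) is your observation that $\lab$ extends to a linear isomorphism of power sets. The organization differs slightly---the paper factors these into separate lemmas and a four-step proof (with an explicit step~(iii) showing $D_{H,B}$ is independent of the choices, which you get for free by proving the equality directly for fixed $H,B$), whereas you front-load the reduction and handle both inclusions in one pass---but the ideas are the same.
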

\begin{remark}
    The expression of the preimage in \eqref{eq:eqvinj} does not depend on the representative $H$ of $x$ and the basis $B\in\BCS_{\LS_H}$.
\end{remark}
 \noindent As a consequence of statement Theorem \ref{thm:mainthm}, we obtain the following corollary, which presents an equivalent condition for the parity graph mapping being injective on any subdomain $\SD$ of the set of all equivalence classes of graphs $\DHGSE$.
 \begin{corollary}
    Let $\SD\subset\DHGSE$ be a subset of the set of all equivalence classes of graphs. The parity graph mapping $\parmap\colon \HGSE\to\PGS$ is injective onto the domain $\SD$ if and only if for all $x\in\SD$ there exists $H\in x$ and a basis $B\in\BCS_{\LS_H}$ such that 
    \begin{multline*}
        \abs{\set{[H']\in\SD\left\vert\,\exists\,\text{$\lab$ loop labeling of $\PG_{B}$}\colon \dim(\LS_{G_{\PG_{B},\lab}})=\abs{B}\text{ and $\LS_{G_{\PG_{B},\lab}}$ is cycle space of $H'$}\right.}}
        =1.
    \end{multline*}
 \end{corollary}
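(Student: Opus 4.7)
The plan is to derive this corollary as a direct reformulation of Theorem \ref{thm:mainthm} combined with the set-theoretic characterization of injectivity on a subdomain. The underlying structure is: Theorem \ref{thm:mainthm} already exhibits the full preimage of any compiled hypergraph class under $\parmap\vert_\DHGSE$ in terms of loop labelings; all that remains is to intersect with $\SD$ and translate cardinality into an injectivity statement.

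First, I would unpack the injectivity condition. The restriction $\parmap\vert_\SD$ is injective on $\SD$ if and only if for every $x\in\SD$, the fiber $\parmap\vert_\SD^{-1}(\set{\parmap(x)})$ equals $\set{x}$, or equivalently has cardinality one. Because $\SD\subset\DHGSE$, this fiber is exactly $\parmap\vert_\DHGSE^{-1}(\set{\parmap(x)})\cap\SD$. So injectivity on $\SD$ is equivalent to the statement that for every $x\in\SD$, this intersection is a singleton.

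Second, I would apply Theorem \ref{thm:mainthm} to replace $\parmap\vert_\DHGSE^{-1}(\set{\parmap(x)})$ by the explicit loop-labeling description appearing in \eqref{eq:eqvinj}, choosing any $H\in x$ and any basis $B\in\BCS_{\LS_H}$. Intersecting this description with $\SD$ yields precisely the set whose cardinality appears in the statement of the corollary. The remark following Theorem \ref{thm:mainthm} guarantees that this description does not depend on the choice of $H$ nor of $B$, so the existential quantifiers in the corollary can equivalently be read as universal quantifiers; either way, the cardinality condition is the same.

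There is no substantive obstacle here: the entire content has been done in Theorem \ref{thm:mainthm}, and the proof of the corollary amounts to little more than chasing the definition of injectivity through the identification of the fiber. The only bookkeeping is making sure that when restricting from $\DHGSE$ to $\SD$ one preserves the set description verbatim, which is automatic because the defining condition (existence of a loop labeling of $\PG_B$ with $\dim(\LS_{G_{\PG_B,\lab}})=\abs{B}$ realizing the cycle space of a representative of the class) is a property of the class alone and not of the ambient set it lives in.
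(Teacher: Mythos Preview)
Your proposal is correct and matches the paper's approach: the paper presents this corollary as an immediate consequence of Theorem~\ref{thm:mainthm} without giving a separate proof, and your argument---reducing injectivity on $\SD$ to fibers being singletons, then invoking Theorem~\ref{thm:mainthm} to identify each fiber with the loop-labeling set intersected with $\SD$---is exactly the intended reasoning. Your observation that the existential quantifiers over $H$ and $B$ are harmless (by the remark following Theorem~\ref{thm:mainthm}) is also correct and worth noting.
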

\noindent For the proof of Theorem \ref{thm:mainthm} we make use of the following three lemmas.
\begin{lemma}
    \label{lem:looplabisom}
    Let $H=(V,E)$ be a hypergraph with a loop labeling $\lab\colon V\to \set{\set{i,j}\mid i,j\in\N,\ i\neq j}$ and $H'=(V',E')$ an isomorphic hypergraph to $H$. Then, there exists a loop labeling for $H'$ for which the induced graph of $H'$ is equal to the induced graph of $H$ with loop labeling $\lab$.
\end{lemma}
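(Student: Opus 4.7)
The plan is to transport the labeling across the given isomorphism and verify that both the injectivity condition and the loop-space condition are preserved. More concretely, let $f\colon V\to V'$ be a bijection witnessing $H\sim_{\HGS} H'$, i.e., satisfying \eqref{eq:isomhypergraphs}, and define the candidate labeling
\[
\lab'\colon V'\to\set{\set{i,j}\mid i,j\in\N,\ i\neq j},\qquad \lab'\coloneqq \lab\circ f^{-1}.
\]
Injectivity of $\lab'$ is immediate from the injectivity of $\lab$ and the bijectivity of $f^{-1}$, so the first defining property of a loop labeling is handled at once.

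The main work is to show the equality $G_{H',\lab'}=G_{H,\lab}$ of induced graphs, after which the second defining property of a loop labeling follows automatically: for any $e'\in E'$, writing $e\coloneqq f^{-1}(e')\in E$ and using that $f^{-1}$ restricts to a bijection from $e'$ onto $e$, one computes
\[
\lab'(e')=\set{\lab(f^{-1}(v'))\mid v'\in e'}=\set{\lab(v)\mid v\in e}=\lab(e)\in\LS_{G_{H,\lab}}=\LS_{G_{H',\lab'}},
\]
where the last equality is the identity of the induced graphs to be established.

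The equality $G_{H',\lab'}=G_{H,\lab}$ is a direct unfolding of definitions. By \eqref{eq:isomhypergraphs} together with the bijectivity of $f$, the pushforward yields $E'=\set{f(e)\mid e\in E}$, and every $v'\in e'\in E'$ is of the form $v'=f(v)$ for a unique $v\in e\coloneqq f^{-1}(e')$. Hence
\[
V_{G_{H',\lab'}}=\bigcup_{e'\in E'}\bigcup_{v'\in e'}\lab'(v')=\bigcup_{e\in E}\bigcup_{v\in e}\lab(v)=V_{G_{H,\lab}},
\]
and the same reindexing gives $E_{G_{H',\lab'}}=E_{G_{H,\lab}}$. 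Combining this with the display above completes the verification that $\lab'$ is a loop labeling of $H'$ with induced graph equal to $G_{H,\lab}$.

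I do not anticipate a genuine obstacle in this proof; the only point that requires a small amount of care is ensuring that the identity $\lab'(e')=\lab(e)$ for $e=f^{-1}(e')$ is handled cleanly as an equality of \emph{sets} of labels rather than as an equality of sequences, which is exactly where the injectivity of $f$ on the vertices of $e'$ is used implicitly.
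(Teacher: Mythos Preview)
Your proof is correct and follows essentially the same approach as the paper: define $\lab'\coloneqq\lab\circ f^{-1}$, note injectivity, and verify that the induced graphs coincide by reindexing the defining unions via the bijection $f$. If anything, your argument is slightly more complete, since you explicitly check the second defining property $\lab'(e')\in\LS_{G_{H',\lab'}}$, which the paper leaves implicit.
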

\begin{proof}
    Since $H$ is isomorphic to $H'$, we can find a bijection $f\colon V\to V'$ satisfying \eqref{eq:isomhypergraphs}. Define the map $\lab'\coloneqq \lab\circ f^{-1}$, which is as a composition of injective functions an injection. Moreover, from \[y\in\im(\lab')\Longleftrightarrow\exists v'\in V'\colon y=\lab'(v')\Longleftrightarrow\exists v\in V\colon y=\lab(v)\] for $y\in\set{\set{i,j}\mid i,j\in\N,\ i\neq j}$, and the definitions of the induced graphs $G_{H,\lab}$ and $G_{H',\lab'}$, we also observe that $V_{G_{H,\lab}}=V_{G_{H',\lab'}}$ and $E_{G_{H,\lab}}=E_{G_{H',\lab'}}$, which proves that the above statement holds true. 
\end{proof}
\begin{lemma}
    \label{lem:existlooplab}
    Let $H=(V,E)$ be a graph and $B\in\BCS_{\LS_H}$ be a basis of the cycle space of $H$. Then, there exists a hypergraph $H'\in[H]$ and a loop labeling $\lab$ of the compiled hypergraph $\PG_B$ with some enumeration $h\colon\set{1,\ldots,m}\to E_B$ satisfying $\LS_{G_{\PG_B,\lab}}=\LS_{H'}$.
\end{lemma}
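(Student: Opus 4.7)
The plan is to construct the loop labeling explicitly by identifying each vertex of the compiled hypergraph $\PG_B$ with the edge of a suitably renamed copy of $H$ that it came from. I would begin by choosing any injection $f\colon V\to\N$ (possible because $V$ is finite) and letting $H'\coloneqq H_f\in[H]$ be the isomorphic graph induced by $f$ as in Remark \ref{rem:isomhyp} \ref{item:isomhyp}; this guarantees that the edges of $H'$ are $2$-element subsets of $\N$, so they are legal values for a loop labeling. Given an arbitrary enumeration $h\colon\set{1,\ldots,m}\to E_B$, I then set
\[
    \lab\colon V_{\PG_B}\to\set{\set{i,j}\mid i,j\in\N,\ i\neq j},\qquad \lab(v)\coloneqq f(h(v)),
\]
which is injective because $h$ and $f$ are. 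Unwinding the definitions, $G_{\PG_B,\lab}$ has vertex set $f\bigl(\bigcup_{e\in E_B}e\bigr)$ and edge set $f(E_B)$, so it is literally the subgraph of $H'$ supported on the edges $f(E_B)$.

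The heart of the argument is then the equality $\LS_{G_{\PG_B,\lab}}=\LS_{H'}$. My approach is to observe that since $B$ spans $\LS_H$, every element of $\LS_H$ is a symmetric difference of basis cycles and is therefore contained in $\bigcup_{C\in B}C=E_B$; pushing this containment through the isomorphism $f_\LS$ of Remark \ref{rem:ls} \ref{item:isomls} gives $\bigcup_{C''\in\LS_{H'}}C''\subset f(E_B)$. Because membership in the cycle space is, by Remark \ref{rem:cs} \ref{item:csequalls}, equivalent to being an even subgraph — a purely local condition at each vertex — any subset of $f(E_B)$ is an even subgraph of $H'$ if and only if it is an even subgraph of $G_{\PG_B,\lab}$, which yields the desired equality of cycle spaces.

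It then remains to verify property (ii) of a loop labeling, which is a short computation: for every edge $e=h^{-1}(C)\in E_{\PG_B}$ with $C\in B$, one has $\lab(e)=\set{f(c)\mid c\in C}=f_\LS(C)\in f_\LS(\LS_H)=\LS_{H'}=\LS_{G_{\PG_B,\lab}}$. I expect the main conceptual obstacle to be precisely the containment $\bigcup_{C''\in\LS_H}C''\subset E_B$, i.e., the statement that no cycle of $H$ can use an edge outside the support of the basis; this is immediate from the spanning property of $B$, yet it is exactly what forces the cycle spaces of $G_{\PG_B,\lab}$ and $H'$ to agree even when $H'$ possesses extra acyclic edges outside $f(E_B)$.
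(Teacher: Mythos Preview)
Your construction of $\lab=f\circ h$ and the isomorphic copy $H'=H_f$ is exactly the paper's, and your verification of property~(ii) matches as well. The only deviation is in how you establish $\LS_{G_{\PG_B,\lab}}=\LS_{H'}$: the paper applies Remark~\ref{rem:ls}~\ref{item:subsetls} twice (once in each direction, using $\bigcup_{C'\in B'}C'=E_{G_{\PG_B,\lab}}\subset E_{H'}$), whereas you invoke the even-subgraph characterisation of Remark~\ref{rem:cs}~\ref{item:csequalls} to argue both inclusions at once; both arguments are correct and rest on the same containment $\bigcup_{C\in\LS_H}C\subset E_B$, so the approaches are essentially the same.
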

\begin{proof}
    In the following, let $f\colon V\to\set{1,\ldots,\abs{V}}$ be some bijective function and $H_f=(V_f,E_f)$ the isomorphic hypergraph of $H$ induced by $f$ as defined in Remark \ref{rem:isomhyp} \ref{item:isomhyp}. We will show that \[\lab\colon\set{1,\ldots,m}\to\set{\set{i,j}\mid i,j\in\N,\ i\neq j}\colon v\mapsto \tilde{f}(h(v)),\] where $m\coloneqq\abs{B}$ and
    \begin{equation*}
        \tilde{f}\colon\PS(V)\to\PS(\set{1,\ldots,\abs{V}})\colon M\mapsto f(M), 
    \end{equation*}
    is a loop labeling of $\PG_B$. From the bijectivity of $h$ and $\tilde{f}$ we see that $\lab$ is injective. Next, we will verify that $\LS_{G_{\PG_B,\lab}}=\LS_{H_f}$. First note that from Remark \ref{rem:ls} \ref{item:isomls} we can deduce $B'\coloneqq\set{\tilde{f}(C)\mid C\in B}=f_{\LS}(B)$ is a basis of $\LS_{H_f}$. We observe
    \begin{equation}
        \label{eq:escgrel}
        E_{G_{\PG_B,\lab}}=\bigcup_{e\in E_{\PG_B}}\tilde{f}(h(e))=\bigcup_{C\in B}\tilde{f}(h(h^{-1}(C)))=\bigcup_{C\in B}\tilde{f}(C)=\bigcup_{C'\in B'}C'.
    \end{equation}
    For any element $\tilde{C}\in\LS_{H_f}$ we can find coefficients $d_{C'}\in\Z_2$ for $C'\in B'$ such that \[\tilde{C}=\symdif\limits_{C'\in B'} d_{C'}C'\subset\bigcup_{C'\in B'}C'=E_{G_{\PG_B,\lab}}.\] This relation and Remark \ref{rem:ls} \ref{item:subsetls} show $\LS_{H_f}\subset\LS_{G_{\PG_B,\lab}}$. Moreover, from \eqref{eq:escgrel} we see that $E_{G_{\PG_B,\lab}}\subset E_{H_f}$ and therefore, we can deduce again from Remark \ref{rem:ls} \ref{item:subsetls} $\LS_{G_{\PG_B,\lab}}\subset\LS_{H_f}$ showing $\LS_{G_{\PG_B,\lab}}=\LS_{H_f}$. By definition of the compiled hypergraph $\PG_B$, we have $h(e)\in\LS_H$ for each $e\in E_{\PG_B}$ and therefore $\lab(e)=\tilde{f}(h(e))=f_{\LS}(h(e))\in\LS_{H_f}$, which finally proves the above statement.
\end{proof}
\begin{lemma}
    \label{lem:condeqcomphyp}
    Let $H,H'$ be two graphs, $B\in\BCS_{\LS_H}$ a basis of the cycle space of $H$ and $B'\in\BCS_{\LS_{H'}}$ a basis of the cycle space of $H'$, respectively. If the compiled hypergraphs $\PG_{B}$ and $\PG_{B'}$ are isomorph, then $\parmap([H])=\parmap([H'])$.
\end{lemma}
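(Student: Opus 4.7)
The plan is to promote the hypergraph isomorphism $\PG_{B}\cong\PG_{B'}$ into a $\GF$-linear isomorphism between the cycle spaces $\LS_H$ and $\LS_{H'}$, under which any basis of $\LS_H$ is mapped to a basis of $\LS_{H'}$ yielding the same compiled hypergraph. Since $H$ and $H'$ are graphs, their constraint spaces coincide with their cycle spaces by Remark \ref{rem:cs} \ref{item:csequalls}, and hence $\PG_H=\set{[\PG_{\tilde B}]\mid \tilde B\in\BCS_{\LS_H}}$ and analogously for $H'$. It will suffice to prove $\PG_H\subset\PG_{H'}$; the opposite inclusion follows by symmetry.

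First, I would extract an edge bijection from the hypergraph isomorphism. Fix the enumerations $h\colon\set{1,\ldots,m}\to E_B$ and $h'\colon\set{1,\ldots,m}\to E_{B'}$ used to define $\PG_B$ and $\PG_{B'}$, and let $\phi\colon V_{\PG_B}\to V_{\PG_{B'}}$ be a hypergraph isomorphism. Setting $\psi\coloneqq h'\circ\phi\circ h^{-1}\colon E_B\to E_{B'}$, the defining property of $\phi$ applied to the edge $h^{-1}(C)\in E_{\PG_B}$ for each $C\in B$ yields $\psi(C)\in B'$; thus $\psi$ induces a bijection $B\to B'$ at the level of basis cycles. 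Next, I would extend $\psi$ elementwise to $\Psi\colon\PS(E_B)\to\PS(E_{B'})$ via $\Psi(A)\coloneqq\set{\psi(e)\mid e\in A}$. Since set-bijections preserve symmetric differences, $\Psi$ is a $\GF$-linear bijection of the full edge spaces. By Remark \ref{rem:comphypergraphs} \ref{item:csedgeset} the support $E_{\tilde B}$ is independent of the chosen basis $\tilde B$, so $\LS_H\subset\PS(E_B)$ and $\LS_{H'}\subset\PS(E_{B'})$; because $\Psi$ sends the spanning set $B$ of $\LS_H$ onto the spanning set $B'$ of $\LS_{H'}$, it restricts to a linear isomorphism $\LS_H\to\LS_{H'}$.

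Given any $\tilde B\in\BCS_{\LS_H}$, I would then set $\tilde B'\coloneqq\Psi(\tilde B)\in\BCS_{\LS_{H'}}$, choose an arbitrary enumeration $\tilde h\colon\set{1,\ldots,m}\to E_{\tilde B}=E_B$ to build $\PG_{\tilde B}$, and use the compatible enumeration $\tilde h'\coloneqq\psi\circ\tilde h\colon\set{1,\ldots,m}\to E_{\tilde B'}$ to build $\PG_{\tilde B'}$. A short computation shows $(\tilde h')^{-1}(\Psi(C))=\tilde h^{-1}(C)$ for every $C\in\tilde B$, so the two edge sets $E_{\PG_{\tilde B}}$ and $E_{\PG_{\tilde B'}}$ coincide and in fact $\PG_{\tilde B}=\PG_{\tilde B'}$ literally as hypergraphs. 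Hence $[\PG_{\tilde B}]\in\PG_{H'}$, giving $\PG_H\subset\PG_{H'}$; the opposite inclusion follows by repeating the entire argument with $\psi^{-1}$.

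The main obstacle is the passage from the combinatorial isomorphism $\phi$ of compiled hypergraphs to the linear isomorphism $\Psi$ of cycle spaces, and in particular the verification that the edge-level bijection $\psi$ obtained from $\phi$ carries basis cycles in $B$ to basis cycles in $B'$ rather than merely matching edges. This pivot relies on Remark \ref{rem:comphypergraphs} \ref{item:csedgeset} to localize the full cycle spaces inside $\PS(E_B)$ and $\PS(E_{B'})$; once it is available, the final matching of compiled hypergraphs of corresponding bases via compatible enumerations is routine bookkeeping.
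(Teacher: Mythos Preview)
Your proof is correct and follows essentially the same approach as the paper: both lift the hypergraph isomorphism $\phi$ (the paper's $f$) to an edge-level bijection $\psi=h'\circ\phi\circ h^{-1}$, extend it to a $\GF$-linear isomorphism $\LS_H\to\LS_{H'}$ (the paper's $g$), and then match compiled hypergraphs of corresponding bases. The only cosmetic difference is that you pick compatible enumerations $\tilde h,\tilde h'=\psi\circ\tilde h$ to obtain $\PG_{\tilde B}=\PG_{\tilde B'}$ literally, whereas the paper reuses the original enumerations $h,h'$ and shows $\PG_{\tilde B}\cong\PG_{\tilde B'}$ via $f$.
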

\begin{proof}
    Suppose that $f\colon V_{\PG_{B}}\to V_{\PG_{B'}}$ is a bijection between the vertex sets of $\PG_{B}$ and $\PG_{B'}$ such that $e\in E_{\PG_{B}}$ if and only if $f(e)\in E_{\PG_{B'}}$ for $e\subset V_{\PG_{B}}$. In the following, we denote by $C_1,\ldots,C_n\in\LS_H$ with $n\coloneqq\dim(\LS_H)$ the basis elements of $B$, i.e. $B=\set{C_1,\ldots,C_n}$, and $h\colon\set{1,\ldots,m}\to E_B$ and $h'\colon\set{1,\ldots,m}\to E_{B'}$ the enumerations of $E_B$ and $E_{B'}$ for some $m\in\N$ as given in Definition \ref{def:comphyps}. Then, by definition of $E_B$ and the compiled hypergraphs $\PG_{B}$ and $\PG_{B'}$ as well as the hypergraph isomorphism $f$, we have that $C_i'\coloneqq (h'\circ f\circ h^{-1})(C_i)$ are the basis elements of $B'$ for $i=1,\ldots,n$. Thus, we have
    \begin{equation}
        \label{eq:reledges}
        f(h^{-1}(C_i))=(h')^{-1}(C_i')\in E_{\PG_{B'}} \quad\text{for all } i=1,\ldots,n.
    \end{equation}
     Now, let $c^{(i)}\coloneqq(c_1^{(i)},\ldots,c_n^{(i)})\in\Z_2^n\setminus\set{0}$ be pairwise distinct vectors for $i=1,\ldots,n$ such that
    \begin{equation*}
        \tilde{B}\coloneqq\set{\symdif\limits_{j=1}^n c_j^{(1)} C_j,\ldots,\symdif\limits_{j=1}^n c_j^{(n)} C_j},
    \end{equation*}
    is a basis of $\LS_H$. Since $g\colon \LS_H\to\LS_{H'}\colon C\mapsto(h'\circ f\circ h^{-1})(C)$ is a isomorphism between the vector spaces $\LS_H$ and $\LS_{H'}$, we have that
    \begin{equation*}
        \tilde{B'}\coloneqq\set{g(C_1),\ldots,g(C_n)}=\set{\symdif\limits_{j=1}^n c_j^{(1)} C_j',\ldots,\symdif\limits_{j=1}^n c_j^{(n)} C_j'}
    \end{equation*}
    is a basis of $\LS_{H'}$. We will show that $\PG_{\tilde{B}}$ is an isomorphic compiled hypergraph to $\PG_{\tilde{B'}}$, which shows the final statement.\\
    From Remark \ref{rem:comphypergraphs} \ref{item:csedgeset} and the bijectivity of $h$ and $h'$ we first observe that $h$ is enumeration of $E_{\tilde{B}}$ as well as $h'$ for $E_{\tilde{B'}}$. Thus, we obtain for $e\subset V_{\PG_{\tilde{B}}}$ from the properties of bijective functions and \eqref{eq:reledges}
    \begin{align*}
        e\in E_{\PG_{\tilde{B}}}&\Longleftrightarrow \exists i\in\set{1,\ldots,n}\colon e=h^{-1}\left(\symdif\limits_{j=1}^n c_j^{(i)} C_j\right)\\
        &\Longleftrightarrow \exists i\in\set{1,\ldots,n}\colon e=\symdif\limits_{j=1}^n  h^{-1}(c_j^{(i)} C_j)\\
        &\Longleftrightarrow \exists i\in\set{1,\ldots,n}\colon f(e)=\symdif\limits_{j=1}^n  f(h^{-1}(c_j^{(i)} C_j))\\
         &\Longleftrightarrow \exists i\in\set{1,\ldots,n}\colon f(e)=\symdif\limits_{j=1}^n  (h'^{-1})(c_j^{(i)} C_j').
    \end{align*}
    Since $\symdif\limits_{j=1}^n  (h')^{-1}(c_j^{(i)} C_j')=(h')^{-1}\left(\symdif\limits_{j=1}^n c_j^{(i)} C_j'\right)$ for all $i=1,\ldots,n$, we have $e\in E_{\PG_{\tilde{B}}}$ if and only if $f(e)\in E_{\PG_{\tilde{B'}}}$. This finally shows $\PG_{\tilde{B}}\cong\PG_{\tilde{B'}}$.
\end{proof}
\begin{proof}[Proof of Theorem \ref{thm:mainthm}]
    For the sake of brevity, we write 
    \begin{multline*}
    D_{H,B}\coloneqq\\\set{[H']\in\DHGSE\left\vert\,\exists\,\text{$\lab$ loop labeling of $\PG_{B}$}\colon \dim(\LS_{G_{\PG_{B},\lab}})=\abs{B}\text{ and $\LS_{G_{\PG_{B},\lab}}$ is cycle space of $H'$}\right.},
    \end{multline*}
    denoting the set in \eqref{eq:eqvinj} for some $H\in x$ and a basis $B\in\BCS_{\LS_H}$.\\
    For better readability, we divide the proof into several parts.
    \begin{enumerate}[wide=\parindent,leftmargin=0pt,align=left]
        \item \label{item:firststepmainthm} First, we show for each loop labeling $\lab$ of $\PG_{B}$ with $\dim(\LS_{G_{\PG_{B},\lab}})=\abs{B}$ that \[B_{\lab}\coloneqq\set{\lab(e)\mid e\in E_{\PG_{B}}}\] is a basis of the cycle space $\LS_{G_{\PG_B,\lab}}$.\\
        Since $\lab$ is injective, we first observe that $\abs{B_{\lab}}=\abs{E_{\PG_B}}=\abs{B}$. Next, we show that $B_{\lab}$ is a linearly independent set in $\LS_{G_{\PG_{B},\lab}}$, which shows together with the assumption $\dim(\LS_{G_{\PG_{B},\lab}})=\abs{B}$ that $B_{\lab}$ is a basis of $\LS_{G_{\PG_{B},\lab}}$. Let $E_{\PG_{B}}=\set{e_1,\ldots,e_m}$ and assume that for $c_1,\ldots,c_{m}\in\GF$ \[c_1\lab(e_1)\symdif\ldots\symdif c_{m}\lab(e_{m})=\emptyset.\] Therefore, applying a left inverse of $\lab$ on the above equation yields \[c_1 e_1\symdif\ldots\symdif c_{m} e_{m}=\emptyset.\] Then, using the injectivity of the corresponding enumeration $h\colon\set{1,\ldots,m}\to E_B$, we deduce \[c_1 h(e_1)\symdif\ldots\symdif c_{m} h(e_{m})=\emptyset.\] By definition, $\set{h(e_1),\ldots,h(e_m)}=B$, and thus, we obtain $c_1=c_2=\ldots=c_m=0$, which proves the linear independence of the elements in the set $B_{\lab}$.
        \item Secondly, we will verify that for $x'\in D_{H,B}$, we have $\parmap(x')=\parmap(x)$. Assume that $x'\in D_{H,B}$ and let $H'\in x'$ be a graph such that $\dim(\LS_{G_{\PG_{B},\lab}})=\abs{B}$ and $\LS_{G_{\PG_{B},\lab}}$ is the cycle space of $H'$ for some loop labeling $\lab$. From \ref{item:firststepmainthm} we have that $B_{\lab}$ is a basis of $\LS_{H'}$ and from
        \begin{equation*}
            E_{B_{\lab}}=\bigcup_{\tilde{C}\in B_{\lab}}\bigcup_{\tilde{e}\in\tilde{C}}\set{\tilde{e}}=\bigcup_{e\in E_{\PG_B}}\bigcup_{\tilde{e}\in\lab(e)}\set{\tilde{e}}=\bigcup_{e\in E_{\PG_B}}\lab(e)=E_{G_{\PG_B,\lab}}
        \end{equation*}
        we see that \[h'\colon V_{\PG_B}\to E_{B_{\lab}}\colon v\mapsto \lab(v)\] is well-defined and an enumeration of $E_{B_{\lab}}$. Moreover
        \begin{equation*}
            E_{\PG_{B_\lab}}=\bigcup_{\tilde{C}\in B_{\lab}}\set{(h')^{-1}(\tilde{C})}=\bigcup_{e\in E_{\PG_B}}\set{(h')^{-1}(\lab(e))}=\bigcup_{e\in E_{\PG_B}}\set{e}=E_{\PG_B},
        \end{equation*}
        which shows $\PG_{B_\lab}=\PG_B$. Therefore, from Lemma \ref{lem:condeqcomphyp} we can conclude that $D_{H,B}\subset \parmap\vert_\DHGSE^{-1}(\set{y})$.
        \item \label{item:secondstepmainthm1} In order to prove the other subset relation, we next show that for any two graphs $H_1,H_2$ with $[H_1],[H_2]\in\DHGSE$ and $\PG_{H_1}=\PG_{H_2}$, we have $D_{H_1,B_1}=D_{H_2,B_2}$ for all $B_1\in\BCS_{\LS_{H_1}}, B_2\in\BCS_{\LS_{H_2}}$. Suppose now $B_1\in\BCS_{\LS_{H_1}}$ and let $B_2\in\BCS_{\LS_{H_2}}$ such that $[\PG_{B_1}]=[\PG_{B_2}]$. Moreover, let $\lab_1$ be a loop labeling of $\PG_{B_1}$ such that $\LS_{G_{\PG_{B_1},\lab_1}}$ is the cycle space of some graph $H_1'=(V_1',E_1')$ with $\dim(\LS_{G_{\PG_{B_1},\lab_1}})=\abs{B_1}$. Since $\PG_{B_1}$ is an isomorphic hypergraph to $\PG_{B_2}$ we can find by Lemma \ref{lem:looplabisom} a loop labeling $\lab_2$ of $\PG_{B_2}$ such that the induced graph $G_{\PG_{B_1},\lab_1}=(V_{G_{\PG_{B_1},\lab_1}},E_{G_{\PG_{B_1},\lab_1}})$ is equal to the induced graph $G_{\PG_{B_2},\lab_2}=(V_{G_{\PG_{B_2},\lab_2}},E_{G_{\PG_{B_2},\lab_2}})$ with loop labeling $\lab_2$. Therefore, we can conclude that \[\dim(\LS_{G_{\PG_{B_2},\lab_2}})=\dim(\LS_{G_{\PG_{B_1},\lab_1}})=\abs{B_1}=\abs{E_{\PG_{B_1}}}=\abs{E_{\PG_{B_2}}}=\abs{B_2}\] and thus $[H_1']\in D_{H_2,B_2}$.  Analogously, we can deduce that $[H_2']\in D_{H_1,B_1}$ for an arbitrary element $[H_2']\in D_{H_2,B_2}$.
        \item As a final step, let $x'\in\parmap\vert_\DHGSE^{-1}(\set{y})$, $H'\in x$ and $B'\in\BCS_{\LS_{H'}}$. Since $\PG_H=\parmap(x)=\parmap(x')=\PG_{H'}$, we deduce from \ref{item:secondstepmainthm1} the equality $D_{H',B'}=D_{H,B}$. Furthermore, from Lemma \ref{lem:existlooplab} we have that $x'\in D_{H',B'}$, which finally shows $x'\in D_{H,B}$.\qedhere
    \end{enumerate}
 \end{proof}
 \noindent Finally, we will see that for certain cycle bases it is sufficient to consider only loop labelings $\lab$ appearing in the expression of the preimage \eqref{eq:eqvinj} which satisfy $\lab(e)\in \LSP_{G_{\PG_H,\lab}}$ for all $e\in E_{\PG_H}$. The statement reads as follows.
 \begin{corollary}
    \label{cor:preimagefundbasis}
    Let $x\in\DHGSE$ and define $y\coloneqq\parmap(x)$. Moreover, let $H\in x$ and $B\in\BCS_{\LS_H}$ where $B\subset \LSP_{H}$ is a weakly fundamental basis of $\LS_H$, i. e., there exists an enumeration $B=\set{C_1,\ldots,C_n}$ with
    \begin{equation}
        \label{eq:weaklyfundbasis}
        \forall 2\leq k\leq n\colon C_k\setminus\bigcup_{l=1}^{k-1} C_l\neq \emptyset    
    \end{equation}
    and $n\coloneqq \abs{B}$. Then, we have 
    \begin{align*}
        \parmap\vert_\DHGSE^{-1}(\set{y})=\big\{&[H']\in\DHGSE\,\big\vert\,\exists\,\text{$\lab$ loop labeling of $\PG_{B}$ with $\lab(e)\in \LSP_{G_{\PG_{B},\lab}}$ for all $e\in E_{\PG_{B}}$},\\
        &\dim(\LS_{G_{\PG_{B},\lab}})=n\text{ and }\LS_{G_{\PG_{B},\lab}}=\LS_{H'}\big\}.
    \end{align*}
\end{corollary}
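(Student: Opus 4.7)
The plan is as follows. The inclusion $\supseteq$ is immediate from Theorem~\ref{thm:mainthm}, since any $\ell$ witnessing the restricted set satisfies the weaker conditions of that theorem. For the inclusion $\subseteq$, I intend to show the stronger statement that \emph{every} loop labeling $\ell$ of $\PG_B$ satisfying the conditions of Theorem~\ref{thm:mainthm} automatically fulfils $\ell(e)\in\LSP_{G_{\PG_{B},\ell}}$ for every $e\in E_{\PG_{B}}$. The weakly fundamental enumeration will not actually intervene in the argument; the decisive hypothesis is $B\subset\LSP_H$, i.e., that each basis cycle $C_k$ is itself an elementary cycle of $H$.

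Fix $[H']$ in the preimage with a loop labeling $\ell$ as produced by Theorem~\ref{thm:mainthm}, fix $k_0\in\set{1,\ldots,n}$ and set $e_{k_0}=h^{-1}(C_{k_0})$. Since $\LS_{G_{\PG_{B},\ell}}=\LS_{H'}$ is the cycle space of a graph, I decompose $\ell(e_{k_0})=D^{(1)}\sqcup\cdots\sqcup D^{(r)}$ as an edge-disjoint union of elementary cycles of $G_{\PG_{B},\ell}$ (always possible for cycle space elements in a graph) and aim to prove $r=1$. Suppose for contradiction $r\geq 2$. By the first step in the proof of Theorem~\ref{thm:mainthm}, $B_\ell=\set{\ell(e_1),\ldots,\ell(e_n)}$ is a basis of $\LS_{G_{\PG_{B},\ell}}$, so I expand each $D^{(j)}=\symdif\limits_{k=1}^{n} c_k^{(j)}\ell(e_k)$ with uniquely determined $c_k^{(j)}\in\GF$.

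The key step is to transfer the inclusion $D^{(j)}\subset\ell(e_{k_0})$ from the induced graph back into $\LS_H$. Since $\ell$ is injective on $V_{\PG_{B}}$, the preimage map $A\mapsto\ell^{-1}(A)$ is a linear bijection on subsets of $\ell(V_{\PG_{B}})$ sending $\ell(e_k)$ to $e_k$; composing with the bijection $h\colon V_{\PG_B}\to E_B$ (with $h(e_k)=C_k$) yields
\[
\symdif\limits_{k=1}^{n} c_k^{(j)}C_k\;=\;h(\ell^{-1}(D^{(j)}))\;\subset\;h(\ell^{-1}(\ell(e_{k_0})))\;=\;C_{k_0}
\]
in $\LS_H$. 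I then invoke the standard fact that an elementary cycle admits no non-trivial cycle space subset: if $C\in\LS_G$ and $C\subset C_0\in\LSP_G$, then $C\in\set{\emptyset,C_0}$, which follows quickly from the even-degree characterisation of $\LS$ (cf.\ Remark~\ref{rem:cs}~\ref{item:csequalls}) by tracing around the cyclic sequence of $C_0$. Since $C_{k_0}\in\LSP_H$, this forces $\symdif\limits_{k=1}^{n} c_k^{(j)}C_k\in\set{\emptyset,C_{k_0}}$. The $\emptyset$ case gives $c^{(j)}=0$ by linear independence of $B$ and hence $D^{(j)}=\emptyset$, contradicting that $D^{(j)}$ is an elementary cycle; the $C_{k_0}$ case forces $c_{k_0}^{(j)}=1$ and $c_k^{(j)}=0$ for $k\neq k_0$ by uniqueness of the basis expansion, giving $D^{(j)}=\ell(e_{k_0})$ and contradicting $r\geq 2$. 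Both cases being impossible yields $r=1$, i.e., $\ell(e_{k_0})\in\LSP_{G_{\PG_{B},\ell}}$.

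The main subtlety, rather than a genuine obstacle, is the careful propagation of the inclusion $D^{(j)}\subset\ell(e_{k_0})$ from the induced graph through the two linear bijections $\ell^{-1}$ and $h$ into the relation $\symdif\limits_{k=1}^{n} c_k^{(j)}C_k\subset C_{k_0}$ in $\LS_H$; this depends only on the injectivity of $\ell$ (so that $\ell^{-1}$ is a linear bijection on subsets of $\ell(V_{\PG_{B}})$) and on $h$ being a bijection, but the bookkeeping must be set up precisely.
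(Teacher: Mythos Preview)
Your argument is correct and takes a genuinely different route from the paper. The paper verifies that $\PG_B$ satisfies the hypotheses of Lemma~\ref{lem:assumptedgesetspanbasis} (using both the weakly fundamental enumeration and the fact that elementary cycles cannot contain other cycles), and that lemma in turn is proved by an inductive dimension-count via Lemma~\ref{lem:restrlooplab}. You bypass both lemmas entirely: you decompose $\ell(e_{k_0})$ directly into edge-disjoint elementary cycles of $G_{\PG_B,\ell}$, push each piece back to $\LS_H$ through the linear bijections $\ell^{-1}$ and $h$, and use the same elementary fact (no non-trivial cycle-space element is properly contained in an elementary cycle) at the level of $H$ rather than at the level of the compiled hypergraph. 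Both arguments ultimately rest on that same fact about elementary cycles, but your pull-back argument applies it once, globally, whereas the paper threads it through an induction along the weakly fundamental ordering. A pleasant by-product of your approach, which you note yourself, is that the weakly fundamental hypothesis is never used: the conclusion already follows from $B\subset\LSP_H$ alone. The paper's route, on the other hand, isolates Lemma~\ref{lem:assumptedgesetspanbasis} as a reusable statement about arbitrary hypergraphs satisfying \eqref{eq:assumptedgesetspanbasis}, which may be of independent interest beyond the setting where $B$ consists of elementary cycles.
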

\noindent Besides Theorem \ref{thm:mainthm}, the proof of this corollary is based on the following two lemmas, whereas the first one will be also used for our derivations in Section \ref{sec:applications}.
 \begin{lemma}
    \label{lem:restrlooplab}
    Let $H=(V,E)$ be a hypergraph, where the edge set $E=\set{e_1,\ldots,e_n}$ satisfies
    \begin{equation}
        \label{eq:condhypergraph}
        \forall 2\leq k\leq n\colon e_k\setminus\bigcup_{l=1}^{k-1} e_l\neq \emptyset.
    \end{equation}
    Moreover, let $\lab$ be a loop labeling of $H$.
    \begin{enumerate}
        \item \label{item:linearind} The set $B\coloneqq\set{\lab(e_1),\ldots,\lab(e_n)}$ is linearly independent in $\LS_{G_{H,\lab}}$.
        \item \label{item:restrlooplab} If $n\geq 2$ and $\dim(\LS_{G_{H,\lab}})=n$, then for all $1\leq k\leq n$ the function $\lab_{k}\coloneqq \lab\vert_{V_{k}}$ is a loop labeling of the hypergraph $H_{k}\coloneqq (V_{k},E_{k})$ with $V_{k}\coloneqq\set{v\in V\mid \exists 1\leq l\leq k\colon v\in e_l}$ and $E_{k}\coloneqq\bigcup_{l=1}^{k}\set{e_l}$ satisfying $\dim(\LS_{G_{H_{k},\lab_{k}}})=k$.
    \end{enumerate}
\end{lemma}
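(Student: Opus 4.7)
My plan is to leverage the fact that each $e_k$ has a ``fresh'' vertex $v \in e_k \setminus \bigcup_{l < k} e_l$ whose label $\lab(v)$, by injectivity of $\lab$, lies in $\lab(e_k)$ but in no earlier $\lab(e_l)$ with $l < k$. This single combinatorial trick suffices for both parts, although it must be applied in two slightly different contexts.

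For part \ref{item:linearind}, suppose $c_1 \lab(e_1) \symdif \cdots \symdif c_n \lab(e_n) = \emptyset$ with coefficients $c_l \in \GF$ not all zero, and let $k$ be the largest index with $c_k = 1$. Condition \eqref{eq:condhypergraph} yields $v \in e_k$ with $v \notin e_l$ for all $l < k$; injectivity of $\lab$ then gives $\lab(v) \in \lab(e_k)$ but $\lab(v) \notin \lab(e_l)$ for all $l < k$, while the case $l > k$ contributes nothing by maximality of $k$. So $\lab(v)$ lies in exactly one summand of the symmetric difference and hence in the difference itself, contradicting emptiness.

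For part \ref{item:restrlooplab}, the restriction $\lab_k := \lab\vert_{V_k}$ inherits injectivity from $\lab$. For each edge $e_l$ of $H_k$ we have $\lab_k(e_l) = \lab(e_l) \in \LS_{G_{H,\lab}}$, and by Remark \ref{rem:cs} \ref{item:csequalls} this is equivalent to every vertex incident to $\lab(e_l)$ having even degree inside $\lab(e_l)$. This condition is purely local to the edge set $\lab(e_l)$, so it carries over to $G_{H_k,\lab_k}$, giving $\lab_k(e_l) \in \LS_{G_{H_k,\lab_k}}$ and hence that $\lab_k$ is a loop labeling of $H_k$. For the dimension claim, part \ref{item:linearind} applied to $H_k$ (which inherits \eqref{eq:condhypergraph}) yields $\dim(\LS_{G_{H_k,\lab_k}}) \geq k$. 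For the opposite inequality, the inclusion $E_{G_{H_k,\lab_k}} \subseteq E_{G_{H,\lab}}$ together with Remark \ref{rem:ls} \ref{item:subsetls} gives $\LS_{G_{H_k,\lab_k}} \subseteq \LS_{G_{H,\lab}}$; since $\dim(\LS_{G_{H,\lab}}) = n$ and part \ref{item:linearind} shows linear independence of $\set{\lab(e_1),\ldots,\lab(e_n)}$, this set is actually a basis of the ambient cycle space. So any $C \in \LS_{G_{H_k,\lab_k}}$ admits an expansion $C = \symdif_{l=1}^n c_l \lab(e_l)$; if some $c_l$ with $l > k$ were nonzero, taking the largest such index $l_0$ and applying the fresh-vertex argument to $e_{l_0}$ would produce some $\lab(v) \in C$ outside $\bigcup_{l \leq k} \lab(e_l) = E_{G_{H_k,\lab_k}}$, contradicting $C \subseteq E_{G_{H_k,\lab_k}}$. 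Hence $c_l = 0$ for $l > k$ and $\dim(\LS_{G_{H_k,\lab_k}}) \leq k$.

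The main obstacle is organising the fresh-vertex argument so that it serves both the linear independence claim (where the combination equals $\emptyset$) and the dimension upper bound (where the combination equals an element $C$ of a proper subspace and one must force the high-index coefficients to vanish). In both cases the argument rests on the maximality of the chosen index together with injectivity of $\lab$; the upper bound additionally depends on the subgraph inclusion of cycle spaces, which is precisely what Remark \ref{rem:ls} \ref{item:subsetls} provides.
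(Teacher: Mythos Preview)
Your proof is correct. For part \ref{item:linearind} you use the same fresh-vertex idea as the paper, only organised via the maximal nonzero index rather than by induction; the content is identical.

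For part \ref{item:restrlooplab} your argument for the upper bound $\dim(\LS_{G_{H_k,\lab_k}})\leq k$ differs genuinely from the paper's. The paper proceeds by descending induction: it first shows that $\dim(\LS_{G_{H_{n-1},\lab_{n-1}}})\leq n-1$ by observing that otherwise the fresh edge in $\lab(e_n)$ would force $\dim(\LS_{G_{H,\lab}})\geq n+1$, and then repeats the argument with $H_{n-1}$ playing the role of $H$. You instead work directly at each fixed $k$: using that $\set{\lab(e_1),\ldots,\lab(e_n)}$ is a basis of the ambient cycle space, you expand an arbitrary $C\in\LS_{G_{H_k,\lab_k}}$ and kill the coefficients $c_l$ with $l>k$ one at a time via the fresh-vertex argument, concluding $C\in\mathrm{span}_{\GF}\set{\lab(e_1),\ldots,\lab(e_k)}$. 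Your route avoids the inductive descent and gives an explicit description of $\LS_{G_{H_k,\lab_k}}$ as a coordinate subspace of $\LS_{G_{H,\lab}}$; the paper's route is shorter to state but less informative. Both are valid.
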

\begin{proof}
    \begin{enumerate}[wide=\parindent,leftmargin=0pt,align=left]
        \item Since $\lab(e_1)\neq\emptyset$, we first see that $\set{\lab(e_1)}$ is linearly independent. Now, let $1\leq m< n$ and assume that $\set{\lab(e_1),\ldots,\lab(e_m)}$ is linearly independent. Moreover, let $c_1,\ldots,c_{m+1}\in\GF$ with \[c_1\lab(e_1)\symdif\ldots\symdif c_{m+1}\lab(e_{m+1})=\emptyset.\] Thus, we have \[c_1\lab(e_1)\symdif\ldots\symdif c_{m}\lab(e_{m})=c_{m+1}\lab(e_{m+1})\] and from the relation \[\lab(e_{m+1})\setminus\bigcup\limits_{k=1}^m c_k \lab(e_k)\neq \emptyset,\] we deduce $c_{m+1}=0$. Therefore, we have $c_1\lab(e_1)\symdif\ldots\symdif c_{m}\lab(e_{m})=\emptyset$ and hence, from our assumption, $c_1=\ldots=c_m=0$, which proves the linear independence of the elements in the set $B$.
        \item By definition of $\lab$, we have that $\lab_k$ is a loop labeling with $\lab_k(e)\in \LS_{G_{H_k,\lab_k}}$ for all $e\in E_k$. Next, we show the statement for $k=n-1$. Assume by contradiction that $\dim(\LS_{G_{H_{n-1},\lab_{n-1}}})\geq n$. Since $e_{n}\setminus\bigcup_{k=1}^{n-1} e_k\neq \emptyset$, we have $\dim(\LS_{G_{H,\lab}})\geq n+1$, which is a contradiction. Thus, $\dim(\LS_{G_{H_{n-1},\lab_{n-1}}})\leq n-1$ and by \ref{item:linearind} we have $\dim(\LS_{G_{H_{n-1},\lab_{n-1}}})=n-1$. The remaining statement follows from an inductive argument.\qedhere
    \end{enumerate}
\end{proof}
 \begin{lemma}
    \label{lem:assumptedgesetspanbasis}
    Let $H=(V,E)$ be a hypergraph, where the edge set $E=\set{e_1,\ldots,e_n}$ satisfies
    \begin{equation}
        \label{eq:assumptedgesetspanbasis}
        \forall 2\leq k\leq n\colon\left(e_k\setminus\bigcup_{l=1}^{k-1} e_l\neq \emptyset\quad\text{and}\quad \forall e\in\mathrm{span}_{\GF}(\set{e_1,\ldots,e_{k-1}})\setminus\set{\emptyset}\subset\PS(V)\colon e\setminus e_k\neq\emptyset\right).
    \end{equation}
    Then, for every loop labeling $\lab$ of $H$ with $\dim(\LS_{G_{H,\lab}})=n$ we have $\lab(e_i)\in \LSP_{G_{H,\lab}}$ for all $i=1,\ldots,n$.
\end{lemma}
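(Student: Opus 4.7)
The plan is to fix any $k\in\set{1,\ldots,n}$ and reduce the statement to the case $k=n$ via Lemma \ref{lem:restrlooplab} \ref{item:restrlooplab}: the restriction $\lab_k\coloneqq \lab\vert_{V_k}$ is a loop labeling of the sub-hypergraph $H_k=(V_k,E_k)$ with $\dim(\LS_{G_{H_k,\lab_k}})=k$, and $H_k$ still satisfies hypothesis \eqref{eq:assumptedgesetspanbasis} restricted to its first $k$ edges. Because a simple cycle in the subgraph $G_{H_k,\lab_k}$ remains a simple cycle in the supergraph $G_{H,\lab}$, it suffices to verify $\lab_k(e_k)\in\LSP_{G_{H_k,\lab_k}}$, so I may assume from here on that $k=n$.

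Next, I will invoke the standard graph-theoretic fact that any element of the cycle space decomposes as an edge-disjoint union of simple cycles, writing $\lab(e_n)=D_1\sqcup\cdots\sqcup D_r$ with $D_j\in\LSP_{G_{H,\lab}}$. Since $\lab$ is injective, it induces a $\GF$-linear bijection $\tilde{\lab}\colon\PS(V)\to\PS(E_{G_{H,\lab}})$ that preserves symmetric differences, so setting $\tilde{D}_j\coloneqq \tilde{\lab}^{-1}(D_j)$ produces pairwise disjoint non-empty subsets of $V$ whose disjoint union is $\tilde{\lab}^{-1}(\lab(e_n))=e_n$. By Lemma \ref{lem:restrlooplab} \ref{item:linearind} together with the dimension assumption, $\set{\lab(e_1),\ldots,\lab(e_n)}$ is a basis of $\LS_{G_{H,\lab}}$, so I can expand $D_j=\symdif_{i=1}^n c_i^{(j)}\lab(e_i)$ and pull back to obtain $\tilde{D}_j=\symdif_{i=1}^n c_i^{(j)} e_i$ with coefficients $c_i^{(j)}\in\GF$.

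Finally, I will derive a contradiction from the assumption $r\geq 2$, which then forces $\lab(e_n)=D_1$ to be a single simple cycle. Under $r\geq 2$ each $\tilde{D}_j$ is a proper non-empty subset of $e_n$. If some $c_n^{(j)}=0$, then $\tilde{D}_j$ itself lies in $\mathrm{span}_{\GF}(\set{e_1,\ldots,e_{n-1}})\setminus\set{\emptyset}$ and is contained in $e_n$, directly contradicting \eqref{eq:assumptedgesetspanbasis}. Otherwise $c_n^{(j)}=1$ for every $j$, and for any fixed $j$ the set $e_n\setminus\tilde{D}_j=\tilde{D}_j\symdif e_n=\symdif_{i=1}^{n-1} c_i^{(j)} e_i$ is a non-empty (thanks to $r\geq 2$) element of $\mathrm{span}_{\GF}(\set{e_1,\ldots,e_{n-1}})$ sitting inside $e_n$, again contradicting \eqref{eq:assumptedgesetspanbasis}.

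I expect the most delicate step to be rigorously justifying $\bigsqcup_j \tilde{D}_j=e_n$: this hinges on the injectivity of $\lab$ transferring edge-disjointness of the $D_j$ in $E_{G_{H,\lab}}$ to set-disjointness of the $\tilde{D}_j$ in $V$, together with the linearity of $\tilde{\lab}$ with respect to symmetric difference. Once that structural identity is established, the two-case split on $c_n^{(j)}$ is precisely tailored so that the second clause of \eqref{eq:assumptedgesetspanbasis} eliminates each case and forces $\lab(e_n)\in\LSP_{G_{H,\lab}}$.
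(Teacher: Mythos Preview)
Your proof is correct and follows essentially the same strategy as the paper: reduce to the ``last'' edge via Lemma~\ref{lem:restrlooplab}\ref{item:restrlooplab}, decompose $\lab(e_n)$ into edge-disjoint simple cycles, pull back through the injective labeling, and derive a contradiction from~\eqref{eq:assumptedgesetspanbasis}. The one noteworthy difference is in the second case of the contradiction: the paper, when both pulled-back pieces $\tilde e_1,\tilde e_2$ have $\lab(\tilde e_l)\notin\LS_{G_{H_k,\lab_k}}$, argues by producing $k+2$ linearly independent elements in the $(k+1)$-dimensional space $\LS_{G_{H_{k+1},\lab_{k+1}}}$, whereas you observe that $e_n\setminus\tilde D_j=\tilde D_j\symdif e_n\in\mathrm{span}_{\GF}(\set{e_1,\ldots,e_{n-1}})$ and invoke~\eqref{eq:assumptedgesetspanbasis} a second time. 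Your route is slightly more uniform (both cases appeal to the same hypothesis) and avoids checking the extra linear-independence relation between $\lab(\tilde e_1)$ and $\lab(\tilde e_2)$ modulo $\mathrm{span}(\lab(e_1),\ldots,\lab(e_k))$; the paper's route makes the role of the dimension constraint more visible.
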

\begin{proof}
    We show the statement by verifying it for $H_k=(V_k,E_k)$, where we used the notation in Lemma \ref{lem:restrlooplab} \ref{item:restrlooplab} for $k=1,\ldots,n$. Suppose for $k=1$ we have $\lab_1(e_1)=\lab(e_1)\not\in \LSP_{G_{H,\lab}}$. Therefore, we can find disjoint sets $\tilde{e}_1,\tilde{e}_2\subset e_1$ with $\lab_1(\tilde{e}_l)\in \LSP_{G_{H,\lab}}$ for $l=1,2$. Thus, we observe that $\set{\lab_1(\tilde{e}_1),\lab_1(\tilde{e}_2)}$ is linearly independent in $\LS_{G_{H_1,\lab_1}}$ and hence $\dim(\LS_{G_{H_1,\lab_1}})\geq 2$, which is a contradiction by Lemma \ref{lem:restrlooplab} \ref{item:restrlooplab}. Now, assume that statement holds for some $1\leq k<n$ and suppose that there exist $\tilde{e}_1,\tilde{e}_2\subset e_{k+1}$ with $\lab(\tilde{e}_l)\in \LSP_{G_{H,\lab}}$ for $l=1,2$. If $\lab(\tilde{e}_l)\in \LS_{G_{H_{k},\lab_{k}}}$ for some $l\in\set{1,2}$, then we could find $c_1,\ldots,c_{k}\in\GF$ with $\lab(\tilde{e}_l)=c_1\lab(e_1)\triangle\ldots\triangle c_{k}\lab(e_k)$. By injectivity of $\lab$, we could derive $\tilde{e}_l=c_1 e_1\triangle\ldots\triangle c_{k}e_{k}$, which is a contradiction to the second assumption in \eqref{eq:assumptedgesetspanbasis}. This implies that $\lab(\tilde{e}_l)\not\in \LS_{G_{H_{k},\lab_{k}}}$ for $l=1,2$ showing that $\set{\lab_{k+1}(e_1),\ldots,\lab_{k+1}(e_{k}),\lab_{k+1}(\tilde{e}_1),\lab_{k+1}(\tilde{e}_2)}$ is a linearly independent set in $\LS_{G_{H_{k+1},\lab_{k+1}}}$ and hence, $\dim(\LS_{G_{H_{k+1},\lab_{k+1}}})\geq k+2$, which cannot be true by Lemma \ref{lem:restrlooplab} \ref{item:restrlooplab}.\qedhere
\end{proof}
\begin{proof}[Proof of Corollary \ref{cor:preimagefundbasis}]
    This follows from the fact that cycles cannot contain other cycles, which implies that the compiled hypergraph $\PG_B$ satisfies assumption \eqref{eq:assumptedgesetspanbasis} in Lemma \ref{lem:assumptedgesetspanbasis} by using the bijectivity of an enumeration $h\colon\set{1,\ldots,m}\to E_B$ given in Definition \ref{def:comphyps}. Applying Theorem \ref{thm:mainthm} yields then the claimed identity.\qedhere
\end{proof}
 \section{Extensions and applications}
 \label{sec:applications}

 \noindent In the last section of this article, we apply the derived results in Section \ref{sec:uniquenesssimplegraph} to investigate further properties of the parity graph mapping on the set of equivalence classes of graphs. Specifically, in Section \ref{subsec:basic_cond_non_uniqueness}, we outline basic conditions that result in non-uniqueness under the parity graph mapping. In Section \ref{subsec:uniqueness_one_two_dim}, we demonstrate that graphs whose edges lie in their cycle space, and for which the constraint space has dimension at most two, are mapped to different compiled hypergraphs. In contrast, Section \ref{subsec:counterexamples} shows that the statement does not hold for graphs with constraint space of dimension greater than two. Finally, Section \ref{subsec:charrectplaqlayouts} characterizes the preimage of rectangular plaquette layouts under the parity graph mapping and introduces a polynomial-time algorithm that can determine whether a given optimization problem can be compiled onto such a plaquette layout, while also generating the corresponding physical layout. This is particularly relevant for certain quantum hardware implementations that utilize square grid-like arrangements.
 
\noindent In the following, for a graph $H=(V,E)\in\HGS$ and $E'\subset E$ we denote by $H\vert_{E'}\coloneqq (V',E')$ the restricted graph of $H$ on $E'$, where $V'\coloneqq\set{v\in V\mid \exists e'\in E'\colon v\in e}$. Moreover, we use $E_{\LS}\coloneqq\set{e\in E\mid \exists C\in\LS_H\colon e\in C}$ to denote the set of edges which are contained in an element of the cycle space $\LS_H$.
 \subsection{Basic conditions for non-uniqueness}
 \label{subsec:basic_cond_non_uniqueness}
 \noindent In this section, we present core criteria that violate the one-to-one correspondence of the parity graph mapping. We begin with the following example, which demonstrates that, in general, for every element $x\in\DHGSE$ the preimage $\parmap\vert_{\DHGSE}^{-1}(\set{y})$ with $y\coloneqq\parmap(x)$ is not a singleton. 
\begin{example}
    Let $H_1=(V_1,E_1)$ be the graph as defined in Example \ref{examp:comphypergraphs} and $H_2=(V_2,E_2)$ be the graph defined by \[V_2\coloneqq\set{1,2,3,4,5}\quad\text{and}\quad E_2\coloneqq\set{\set{1,2},\set{2,3},\set{3,4},\set{1,4},\set{2,4},\set{4,5}},\] which is not isomorphic to $H_1$. Using the same cycle basis $$B_1=\set{\set{\set{1,2},\set{2,3},\set{3,4},\set{4,1}},\set{\set{1,2},\set{2,4},\set{4,1}}}$$ as in Example \ref{examp:comphypergraphs} and the enumeration
    \begin{center}
        \begin{tabular}{c | c | c | c | c | c | c}
            $i$ & $1$ & $2$ & $3$ & $4$ & $5$ & $6$\\\hline
            $h(i)$ & $\set{1,2}$ & $\set{2,3}$ & $\set{3,4}$ & $\set{4,1}$ & $\set{2,4}$ & $\set{4,5}$
        \end{tabular}
    \end{center}
    we observe that compiled hypergraphs of $H_1$ and $H_2$ are equal. Thus, by Lemma \ref{lem:condeqcomphyp}, we have $\parmap([H_1])=\parmap([H_2])$.
\end{example}
\noindent As can be observed from this example, adding a new edge and a new vertex to a existing graph $H=(V,E)$ which does not form any new cycle results in the same cycle space and therefore yielding the same set of compiled hypergraphs. However, the next example demonstrates that even for two non-isomorphic hypergraphs $H=(V,E)$ and $H'=(V',E')$ which satisfy $H=H\vert_{E_{\LS}}$ and $H'=H'\vert_{E'_{\LS}}$ the equality $\parmap([H])=\parmap([H'])$ can hold.
\begin{example}
    \label{examp:eqcomphypergraph}
    Let $H_1=(V_1,E_1), H_2=(V_2,E_2)$ be two graphs defined by $V_1\coloneqq V_2 \coloneqq\set{1,2,3,4,5,6}$,
    \begin{align*}
        E_1\coloneqq\set{\set{1,2},\set{2,3},\set{3,4},\set{1,4},\set{2,4},\set{3,5},\set{5,6},\set{3,6}}\quad\text{and}\\
        E_2\coloneqq\set{\set{1,2},\set{2,3},\set{3,4},\set{1,4},\set{2,4},\set{4,5},\set{5,6},\set{4,6}}.
    \end{align*}
    We see that the degree of vertex $4$ in $H_2$ is five whereas the degree of every vertex in $H_1$ is less than five and hence, $H_1$ is not an isomorphic graph to $H_2$. However, using the bases
    \begin{align*}
        B_1\coloneqq\set{\set{\set{1,2},\set{2,4},\set{1,4}},\set{\set{2,3},\set{3,4},\set{2,4}},\set{\set{3,5},\set{5,6},\set{3,6}}}\quad\text{and}\\
        B_2\coloneqq\set{\set{\set{1,2},\set{2,4},\set{1,4}},\set{\set{2,3},\set{3,4},\set{2,4}},\set{\set{4,5},\set{5,6},\set{4,6}}}
    \end{align*}
    as well as the enumerations
    \begin{center}
        \begin{tabular}{c | c | c | c | c | c | c | c | c}
            $i$ & $1$ & $2$ & $3$ & $4$ & $5$ & $6$ & $7$ & $8$\\\hline
            $h_1(i)$ & $\set{1,2}$ & $\set{1,4}$ & $\set{2,4}$ & $\set{2,3}$ & $\set{3,4}$ & $\set{3,5}$ & $\set{5,6}$ & $\set{3,6}$
        \end{tabular}
    \end{center}
    \begin{center}
        \begin{tabular}{c | c | c | c | c | c | c | c | c}
            $i$ & $1$ & $2$ & $3$ & $4$ & $5$ & $6$ & $7$ & $8$\\\hline
            $h_2(i)$ & $\set{1,2}$ & $\set{1,4}$ & $\set{2,4}$ & $\set{2,3}$ & $\set{3,4}$ & $\set{4,5}$ & $\set{5,6}$ & $\set{4,6}$
        \end{tabular}
    \end{center}
    we have that $\PG_{B_1}=\PG_{B_2}$. Therefore, we again obtain from Lemma \ref{lem:condeqcomphyp} that $\parmap([H_1])=\parmap([H_2])$. Figures \ref{fig:example_graph_1}, \ref{fig:example_graph_2} illustrate the graphs $H_1$ and $H_2$ and Figure \ref{fig:compiled_hypergraph_12} the compiled hypergraph $\PG_{B_1}=\PG_{B_2}$.
\end{example}
\begin{figure}[h!]
    \begin{minipage}{0.32\textwidth}
\begin{center}
    \includegraphics{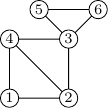}
    \captionof{figure}{Illustration of the graph $H_1$ in Example \ref{examp:eqcomphypergraph}.}
    \label{fig:example_graph_1}
\end{center}
\end{minipage}
\hfill
    \begin{minipage}{0.32\textwidth}
\begin{center}
    \includegraphics{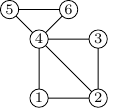}
    \captionof{figure}{Illustration of the graph $H_2$ in Example \ref{examp:eqcomphypergraph}.}
    \label{fig:example_graph_2}
\end{center}
\end{minipage}
    \begin{minipage}{0.32\textwidth}
\begin{center}
    \includegraphics{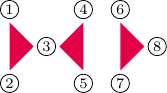}
    \captionof{figure}{Compiled hypergraph $\PG_{B_1}=\PG_{B_2}$ in Example \ref{examp:eqcomphypergraph}.}
    \label{fig:compiled_hypergraph_12}
\end{center}
\end{minipage}
\hfill
\end{figure}
\noindent The reason for the above equality is that in both cases one cycle is disconnected from the other two cycles and hence leading to compiled hypergraphs which have one disconnected edge. For that reason, we focus in the subsequent considerations on graphs whose cycle space is \emph{connected}. Below, we give a precise definition of a connected cycle space.
  \begin{definition}[Connected cycle space]
     Let $H=(V,E)$ be a hypergraph and $\LSP_H$ the spanning set of its cycle space and $M\subset \LSP_H$. We say that two elements $C,D\in \LSP_H$ are connected in $M$ if and only if there exist $C_1,\ldots,C_n\in M$ with $C_1=C$ and $C_n=D$ satisfying $C_{k-1}\cap C_{k}\neq\emptyset$ for all $k=2,\ldots,n$. Moreover, the relation
     \begin{equation*}
        C\sim_{M} D:\Longleftrightarrow C\text{ is connected to } D
    \end{equation*}
    for $C,D\in M$ defines an equivalence relation on $M$. We call each equivalence class $[C]$ of $M/{\sim_{M}}$ a \emph{connected component} of $M$. If the number of connected components $\abs{M/{\sim_{M}}}$ is equal to one, we say that \emph{$M$ is connected}. Furthermore, we call the cycle space $\LS_H$ connected if $\LSP_H$ is connected.
 \end{definition}
 \noindent Taking both observations into account, we continue with the next subsection, where we investigate graphs with connected cycle spaces of dimension one or two in more detail.
 \subsection{Uniqueness of compiled hypergraphs in one or two dimensions}
 \label{subsec:uniqueness_one_two_dim}
 The main aim of this subsection is to show that if two graphs $H=(V,E), H'=(V',E')\in\HGS$ have connected cycle spaces of dimension one or two and share the same set of compiled hypergraphs, then the restrictions $H\vert_{E_{\LS}}$ and $H'\vert_{E'_{\LS}}$ are isomorphic.
 \begin{theorem}
    \label{thm:uniquenessdimtwo}
    Let $H=(V,E),H'=(V',E')$ be two graphs whose cycle space is connected and assume $\dim(\LS_{H})=\dim(\LS_{H'})=n$ with $n\in\set{1,2}$. If $\parmap([H])=\parmap([H'])$, then $H\vert_{{E}_{\LS}}$ is isomorphic to $H'\vert_{E_{\LS}'}$.
\end{theorem}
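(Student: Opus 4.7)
The plan is to reduce the theorem to a structural characterization: for $n\in\{1,2\}$, the graph $H\vert_{E_{\LS}}$ is determined up to isomorphism by a small set of numerical parameters, all of which are invariants of the set $\parmap([H])$.

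The case $n=1$ is immediate. Here $\LS_H$ has a single nontrivial element $C\in\LSP_H$, so $H\vert_{E_{\LS}}=C$ is a cycle of length $|C|$, and the unique basis $\{C\}$ produces a compiled hypergraph with one hyperedge containing $|C|$ vertices. The equality $\parmap([H])=\parmap([H'])$ forces the analogous length for $H'$ to coincide, and two cycles of equal length are isomorphic.

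The core of the proof is the case $n=2$, which hinges on the following structural lemma: \emph{if $\dim\LS_H=2$ and $\LSP_H$ is connected, then $H\vert_{E_{\LS}}$ is isomorphic to a theta graph $\Theta(p,q,r)$ consisting of three internally disjoint paths of lengths $p,q,r\geq 1$ between two common endpoints.} I would first use the connectedness of $\LSP_H$ together with the fact that $\LS_H$ has only three nontrivial elements to select a basis $\{C_1,C_2\}\subset\LSP_H$ with $C_1\cap C_2\neq\emptyset$ as edge sets, which is possible because otherwise every pair of cycles in $\LSP_H$ would be edge-disjoint and $\LSP_H$ disconnected. Let $k$ be the number of maximal contiguous shared arcs into which $C_1\cap C_2$ decomposes $C_1$ (equivalently, $C_2$), and let $v_c$ be the number of vertices in $V(C_1)\cap V(C_2)$ not incident to any shared edge. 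A direct count yields
\begin{equation*}
|E(C_1\cup C_2)|=|C_1|+|C_2|-|C_1\cap C_2|,\qquad |V(C_1\cup C_2)|=|C_1|+|C_2|-|C_1\cap C_2|-k-v_c,
\end{equation*}
and since $C_1\cup C_2$ is connected, Theorem \ref{thm:dimformulals} gives $\dim\LS_{H\vert_{C_1\cup C_2}}=k+v_c+1$. But $\LS_{H\vert_{C_1\cup C_2}}\subseteq\LS_H$ has dimension at most $2$, while $k\geq 1$, so necessarily $k=1$ and $v_c=0$. Consequently $C_1\cap C_2$ is a single path $P_1$ of length $p$, and the complementary arcs $P_2=C_1\setminus P_1$, $P_3=C_2\setminus P_1$ are paths meeting each other only at the endpoints of $P_1$; this is the theta graph $\Theta(p,|C_1|-p,|C_2|-p)$. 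Finally $E_{\LS}=C_1\cup C_2$, since every element of $\LS_H$ lies in the support of a linear combination of $C_1$ and $C_2$.

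Given the structural lemma, $H\vert_{E_{\LS}}\cong\Theta(p,q,r)$ and $H'\vert_{E_{\LS}'}\cong\Theta(p',q',r')$. The three nontrivial cycles of $\Theta(p,q,r)$ are the pairwise unions $P_i\cup P_j$, and the three bases $\{P_i\cup P_j,\,P_i\cup P_k\}$ produce compiled hypergraphs with hyperedge sizes $(p_i+p_j,\,p_i+p_k)$ and overlap $p_i$. Hence any iso-class in $\parmap([H])$ records the unordered triple $\{p,q,r\}$ via $\{c,\,a-c,\,b-c\}$, where $a,b$ are the hyperedge sizes and $c$ the overlap. Thus $\parmap([H])=\parmap([H'])$ implies $\{p,q,r\}=\{p',q',r'\}$, and since theta graphs are classified up to isomorphism by this multiset, $H\vert_{E_{\LS}}\cong H'\vert_{E_{\LS}'}$. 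The main obstacle is the dimension identity $\dim\LS_{H\vert_{C_1\cup C_2}}=k+v_c+1$ in the structural lemma: the geometric intuition -- two cycles with multiple shared arcs or extra coincident vertices always contribute additional independent cycles -- is clear, but its rigorous proof requires careful bookkeeping of how the shared edges decompose each cycle into arcs and of any auxiliary vertex coincidences.
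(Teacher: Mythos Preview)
Your argument is correct and takes a genuinely different route from the paper. The paper deduces Theorem~\ref{thm:uniquenessdimtwo} from the general loop-labeling machinery: it chooses a fundamental basis $B\in\BCS_{\LS_H}$, invokes Corollary~\ref{cor:preimagefundbasis} to describe $\parmap\vert_\DHGSE^{-1}(\{y\})$ via loop labelings of $\PG_B$, and then proves (Proposition~\ref{prop:isomsimplegraphs12}, resting on the path-structure Lemma~\ref{lem:outerloops}) that for $n\le 2$ any two admissible loop labelings of $\PG_B$ induce isomorphic graphs. You bypass loop labelings entirely: you classify $H\vert_{E_\LS}$ directly---a single cycle for $n=1$, a theta graph $\Theta(p,q,r)$ for $n=2$---and then observe that the multiset $\{p,q,r\}$ can be read off from any compiled hypergraph in $\parmap([H])$ as $\{c,\,a-c,\,b-c\}$, where $a,b$ are the two hyperedge sizes and $c$ their overlap. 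This is more elementary and yields a transparent structural picture of the $n=2$ case; the paper's route, by contrast, exercises Theorem~\ref{thm:mainthm} and establishes the template (constrain loop labelings, conclude isomorphism of induced graphs) that is reused for rectangular layouts in Subsection~\ref{subsec:charrectplaqlayouts}. The dimension identity you flag as the main obstacle is indeed routine once one records $|V(C_1)\cap V(C_2)|=|C_1\cap C_2|+k+v_c$ (each of the $k$ shared arcs of length $\ell_i$ contributes $\ell_i+1$ common vertices, and the $v_c$ isolated coincidences are disjoint from these) and applies Theorem~\ref{thm:dimformulals} to the connected graph $C_1\cup C_2$.
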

\noindent The theorem is based on the following proposition.
\begin{prop}
    \label{prop:isomsimplegraphs12}
    Let $H=(V,E)$ be a hypergraph, where the edge set $E$ satisfies \eqref{eq:condhypergraph} and suppose $e_k\cap \bigcup_{l=1}^{k-1}e_l\neq\emptyset$ for all $2\leq k\leq n$. Moreover, let $\lab_1,\lab_2$ be two loop labelings of $H$ with $\dim(G_{H,\lab_1})=\dim(G_{H,\lab_2})=n$ and assume $\lab_1(e)\in \LSP_{G_{H,\lab_1}},\lab_2(e)\in \LSP_{G_{H,\lab_2}}$ for all $e\in E$. If $n\in\set{1,2}$, then $G_{H,\lab_1}$ is an isomorphic graph to $G_{H,\lab_2}$.
\end{prop}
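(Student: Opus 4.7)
The plan is to exploit the dimension constraint $\dim(\LS_{G_{H,\lab_i}}) = n$ together with the rigid structure of cycle intersections when $n$ is small, handling $n=1$ and $n=2$ separately.

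For $n = 1$, the hypergraph has a single edge $e_1$, and by the definition of the induced graph, the edge set of $G_{H,\lab_i}$ equals $\lab_i(e_1)$ (the injectivity of $\lab_i$ giving $\abs{E_{G_{H,\lab_i}}} = \abs{e_1}$). Since $\lab_i(e_1)$ is a cycle spanning the $1$-dimensional cycle space, $G_{H,\lab_i}$ must be precisely a simple cycle of length $\abs{e_1}$ in both cases, and two cycles of the same length are trivially isomorphic.

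For $n = 2$, I would write $E = \set{e_1, e_2}$, set $C_i^{(j)} := \lab_j(e_i)$ and $m_i := \abs{e_i}$, and let $k := \abs{e_1 \cap e_2}$. The hypothesis $e_1 \cap e_2 \neq \emptyset$ together with the injectivity of $\lab_j$ yields $\abs{C_1^{(j)} \cap C_2^{(j)}} = k \geq 1$, so $G_{H,\lab_j}$ is connected. Letting $V_S^{(j)}$ denote the set of vertices of $G_{H,\lab_j}$ lying on both cycles, I would compute $\abs{E_{G_{H,\lab_j}}} = m_1 + m_2 - k$ and $\abs{V_{G_{H,\lab_j}}} = m_1 + m_2 - \abs{V_S^{(j)}}$, and then use Theorem \ref{thm:dimformulals} to translate $\dim = 2$ into $\abs{V_S^{(j)}} = k + 1$. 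Since the $k$ shared edges, viewed inside the cycle $C_1^{(j)}$, form a disjoint union of paths contributing at least $k + p$ vertices (with $p$ the number of such path components), this equality forces $p = 1$ and simultaneously forbids any shared vertex that is not an endpoint of a shared edge.

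Consequently each induced graph is a generalised theta graph $\Theta(k, m_1 - k, m_2 - k)$ — two distinguished vertices joined by three internally disjoint paths whose lengths are, respectively, the shared path, the complement of $C_1^{(j)}$, and the complement of $C_2^{(j)}$. Since the isomorphism type of such a theta graph is determined entirely by the multiset of its path lengths, and this multiset coincides for $j = 1, 2$, an explicit isomorphism $G_{H,\lab_1} \to G_{H,\lab_2}$ can be constructed by matching the two distinguished vertices and then mapping the three paths pointwise. The main technical obstacle is establishing rigorously that the shared edges form precisely one path and that no extraneous vertex is shared between the two cycles; this rests entirely on the tight count $\abs{V_S^{(j)}} = k + 1$ forced by Theorem \ref{thm:dimformulals}, so once that count is pinned down the remainder of the argument is routine combinatorics.
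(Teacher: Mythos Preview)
Your argument is correct, and for $n=1$ it coincides with the paper's. For $n=2$, however, you take a genuinely different route: the paper invokes Lemma~\ref{lem:outerloops} to establish directly that $\lab(e_1\cap e_2)$ and $\lab(e_2\setminus e_1)$ are paths with the right attachment points, then writes down an explicit index-matching bijection. You instead apply the dimension formula of Theorem~\ref{thm:dimformulals} to the induced graph, obtaining the exact count $\abs{V_S^{(j)}}=k+1$ for the shared vertex set, and deduce from this count alone that the shared edges form a single arc and that no extra vertices are shared, so that each $G_{H,\lab_j}$ is the theta graph $\Theta(k,m_1-k,m_2-k)$. Your approach is arguably cleaner and more self-contained for this particular proposition, since it bypasses the somewhat technical Lemma~\ref{lem:outerloops}; on the other hand, the paper's lemma is a reusable structural tool that is applied again in the analysis of rectangular plaquette layouts (Lemma~\ref{lem:condlooplabrpl}), which your counting argument does not immediately generalise to. One small point worth making explicit in your write-up is that $k<m_1$ (equivalently $e_1\not\subset e_2$), which you need for the shared edges to form arcs rather than the full cycle $C_1^{(j)}$; this follows because two distinct simple cycles cannot satisfy $C_1^{(j)}\subsetneq C_2^{(j)}$, but it deserves a sentence.
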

\noindent In order to prove Proposition \ref{prop:isomsimplegraphs12}, we make use of the following lemma, which is one of the key ingredients for our derivations in this section. It states that loop labelings of certain hypergraphs must satisfy additional conditions which are important for characterizing the preimage in \eqref{eq:weaklyfundbasis}.
\begin{lemma}
    \label{lem:outerloops}
    Let $H=(V,E)$ be a hypergraph, where the edge set $E$ satisfies \eqref{eq:condhypergraph} and suppose ${e_k\cap \bigcup_{l=1}^{k-1}e_l\neq\emptyset}$ for all $2\leq k\leq n$. Moreover, let $\lab$ be a loop labeling of $H$ with $\dim(G_{H,\lab})=n$ and $\lab(e)\in \LSP_{G_{H,\lab}}$ for all $e\in E$, and define
    \begin{equation}
        V_{e_n,\lab}\coloneqq\set{v\in V_{G_{H_{n-1},\lab_{n-1}}}\left\vert\,\exists e'\in E_{n-1}\colon \exists v_0\in e'\colon\exists v_1\in\tilde{e}_n\colon v\in\lab(v_0)\cap\lab(v_1)\right.},
    \end{equation}
    where we used the same notations as in Lemma \ref{lem:restrlooplab} \ref{item:restrlooplab} and $\tilde{e}_n\coloneqq e_n\setminus\bigcup_{i=1}^{n-1}e_i$.
    \begin{enumerate}
        \item \label{item:outerpath} There exists an enumeration of $\tilde{e}_n=\set{v_1,\ldots,v_m}$ with $m\coloneqq\abs{\tilde{e}_n}$ such that $\set{\lab(v_1),\ldots,\lab(v_m)}$ is a path in $G_{H,\lab}$ and $\lab(v_1)$ as well as $\lab(v_m)$ contain an element in $V_{e_n,\lab}$. Furthermore, for $m\geq 3$ and $k=2,\ldots,m-1$, we have $\lab(v_k)\cap V_{e_n,\lab}=\emptyset$.
        \item \label{item:innerpath} The set $\lab(e_n\setminus\tilde{e}_n)$ is a path in $G_{H_{n-1},\lab_{n-1}}$.
        \item \label{item:outerverticesset} $\abs{V_{e_n,\lab}}=2$.
    \end{enumerate}
\end{lemma}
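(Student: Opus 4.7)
The plan is to exploit that $\lab(e_n)$ is a simple cycle (since $\lab(e_n)\in\LSP_{G_{H,\lab}}$), decompose it into maximal ``old'' and ``new'' segments, and combine this with Euler's dimension formula (Theorem \ref{thm:dimformulals}) applied to $G\coloneqq G_{H,\lab}$ and $G'\coloneqq G_{H_{n-1},\lab_{n-1}}$. By Lemma \ref{lem:restrlooplab}\ref{item:restrlooplab}, $\dim\LS_{G'}=n-1$ and $\dim\LS_G=n$. The edges in $E_G\setminus E_{G'}$ are exactly the new labeling edges $\set{\lab(v)\mid v\in\tilde{e}_n}$, all of which sit on the cycle $\lab(e_n)$, and any $u\in N\coloneqq V_G\setminus V_{G'}$ must arise as an endpoint of such an edge, so $N$ is contained in the vertex set of $\lab(e_n)$.

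A preliminary step is to show that $G'$ and $G$ are both connected. Using $e_k\cap\bigcup_{l<k}e_l\neq\emptyset$, an induction on $k$ yields that each cycle $\lab(e_k)$ shares an edge with some earlier cycle $\lab(e_l)$; since every $\lab(e_i)$ is itself connected (being a simple cycle) and $E_{G'}=\bigcup_{i=1}^{n-1}\lab(e_i)$, this implies $G'$ is connected, and attaching $\lab(e_n)$ in the same way gives connectivity of $G$. Applying Theorem \ref{thm:dimformulals} with $c(G)=c(G')=1$ and subtracting then reduces to $\abs{N}=\abs{\tilde{e}_n}-1$.

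The combinatorial core is the decomposition of the simple cycle $\lab(e_n)$ into $r\geq 1$ alternating maximal segments $\mathcal{O}_1,\mathcal{N}_1,\ldots,\mathcal{O}_r,\mathcal{N}_r$, with each $\mathcal{O}_i$ consisting of old edges (in $E_{G'}$) and each $\mathcal{N}_i$ of new edges. The $2r$ transition vertices between consecutive segments automatically lie in $V_{G'}$ (they are endpoints of old edges), while the $\abs{\tilde{e}_n}-r$ interior vertices of the new segments split into some $s\geq 0$ that lie in $V_{G'}$ (``old interiors'') and $\abs{\tilde{e}_n}-r-s$ that do not. Since $N$ coincides with the latter set, $\abs{N}=\abs{\tilde{e}_n}-r-s$; equating with $\abs{N}=\abs{\tilde{e}_n}-1$ forces $r+s=1$, and thus $r=1$ and $s=0$.

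With $r=1$, the unique old segment is precisely $\lab(e_n\setminus\tilde{e}_n)$, which is therefore a path in $G_{H_{n-1},\lab_{n-1}}$, giving (ii). Traversing the unique new segment in order yields an enumeration $\tilde{e}_n=\set{v_1,\ldots,v_m}$ for which $\set{\lab(v_1),\ldots,\lab(v_m)}$ is a path in $G_{H,\lab}$, with outer edges $\lab(v_1),\lab(v_m)$ each containing one of the two transition vertices, and, because $s=0$, every intermediate $\lab(v_k)$ with $2\leq k\leq m-1$ has both endpoints in $N$ and is thus disjoint from $V_{G'}\supseteq V_{e_n,\lab}$, yielding (i); finally, $V_{e_n,\lab}$ reduces to exactly the $2r+s=2$ transition vertices, which is (iii). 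The main obstacle I anticipate is the bookkeeping that pins down $V_{e_n,\lab}$ as the union of transition vertices and ``old interiors'' of new segments, while simultaneously identifying $N$ with the strictly new interior vertices of new segments, so that the two expressions for $\abs{N}$ can be equated to force $r=1$ and $s=0$.
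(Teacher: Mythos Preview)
Your approach is correct and genuinely different from the paper's. The paper proceeds constructively: it first argues $|V_{e_n,\lab}|\geq 1$ by a connectivity contradiction, then builds the path of new edges $\lab(v_1),\ldots,\lab(v_k)$ one step at a time, showing at each stage that if the growing path meets $V_{G'}$ prematurely, it can be closed up via a path inside the connected graph $G'$ to produce an $(n{+}1)$-st independent element of $\LS_G$, contradicting $\dim\LS_G=n$. Parts~(ii) and~(iii) then fall out once the new path has been fully constructed.

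Your route is global and counting-based: you apply Theorem~\ref{thm:dimformulals} to both $G$ and $G'$ (after checking connectivity) to obtain $|N|=|\tilde e_n|-1$, and then the alternating-segment decomposition of the simple cycle $\lab(e_n)$ converts this into $r+s=1$, forcing $r=1$, $s=0$. This is cleaner and avoids the inductive contradiction; in effect, Euler's formula does in one stroke what the paper achieves through repeated local dimension arguments. The bookkeeping you flagged as the main obstacle works out exactly as you say: since $v\in\lab(v_0)$ for some $v_0\in V_{n-1}$ is equivalent to $v\in V_{G'}$, the set $V_{e_n,\lab}$ is precisely the set of endpoints of new edges lying in $V_{G'}$, i.e.\ the $2r$ transition vertices together with the $s$ ``old interior'' vertices, whence $|V_{e_n,\lab}|=2r+s=2$. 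One minor remark: your connectivity argument for $G'$ uses that each $\lab(e_i)$ is itself connected, which is where the hypothesis $\lab(e)\in\LSP_{G_{H,\lab}}$ (a single cycle, not just an element of $\LS_{G_{H,\lab}}$) is invoked for \emph{all} $e\in E$, not only $e_n$.
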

\begin{proof}
    \begin{enumerate}[wide=\parindent,leftmargin=0pt,align=left]
        \item First, we show that $\abs{V_{e_n,\lab}}\geq 1$. Assume by contradiction that $V_{e_n,\lab}=\emptyset$ and thus, for all $v\in V_{G_{H_{n-1},\lab_{n-1}}}$ we have that \[\forall e'\in E_{n-1}\colon \forall v_0\in e'\colon\forall v_1\in\tilde{e}_n\colon v\not\in\lab(v_0)\cap\lab(v_1).\] This implies that for all $v_0\in e_n\setminus\tilde{e}_n$ and $v_1\in\tilde{e}_n$ we have $\lab(v_0)\cap\lab(v_1)=\emptyset$ and therefore, there exist no walks between vertices in $\lab(v_0)$ and $\lab(v_1)$ for any $v_0\in e_n\setminus\tilde{e}_n$ and $v_1\in\tilde{e}_n$, which cannot be true since $\lab(e_n)$ is a cycle. This also shows that the statement is true for $m=1$. Now, we show that for all $1\leq k\leq m$ there exists pairwise distinct $v_1,\ldots,v_k\in\tilde{e}_n$ such that $\set{\lab(v_1),\ldots,\lab(v_k)}$ is a path and $\lab(v_1)$ contains an element in $V_{e_n,\lab}$. From the previous argument, we have that the statement holds for $k=1$. Now, suppose the statement holds for some $1\leq k<m$. Therefore, we can find $v_1,\ldots,v_k\in\tilde{e}_n$ and $i_1,\ldots,i_{k+1}\in\N$ with $\lab(v_l)=\set{i_l,i_{l+1}}$ for $l=1,\ldots,k$ such that $\set{\lab(v_1),\ldots,\lab(v_k)}$ is a path with $\lab(v_1)\cap V_{e_n,\lab}\neq\emptyset$. Since $\lab(e_n)$ is a cycle, we can find $v_{k+1}\in e_{n}\setminus\set{v_1,\ldots,v_k}$ with $\lab(v_k)\cap\lab(v_{k+1})\neq\emptyset$. 
        If $v_{k+1}\not\in\tilde{e}_n$, then we see that $\lab(v_{k})\cap V_{G_{H_{n-1},\lab_{n-1}}}\neq\emptyset$ and therefore, we can find a path $p=\set{\set{i_{k+1},j_1},\ldots,\set{j_l,i_{1}}}$ with $j_1,\ldots,j_l\in\N$ in $V_{G_{H_{n-1},\lab_{n-1}}}$ since $G_{H_{n-1},\lab_{n-1}}$ is connected. Thus, we have $C\coloneqq\set{\lab(v_1),\ldots,\lab(v_k)}\cup p\in \LS_{G_{H,\lab}}$. Since $C\setminus\bigcup_{l=1}^{n-1}\lab(e_l)\neq\emptyset$ and $\lab(e_n)\setminus(C\cup\bigcup_{l=1}^{n-1}\lab(e_l))\neq\emptyset$, we observe that $\set{\lab(e_1),\ldots,\lab(e_{n-1}),C,\lab(e_n)}$ is a linearly independent set in $\LS_{G_{H,\lab}}$, which is a contradiction to $\dim(\LS_{G_{H,\lab}})=n$. Using the same arguments, it can be shown for $m\geq 3$ that $\lab(v_k)\cap V_{e_n,\lab}=\emptyset$ for $k=2,\ldots,m-1$. Moreover, since $\lab(e_n)$ is a cycle, we can find $w\in e_n\setminus \tilde{e}_n\subset\bigcup_{k=1}^{n-1}e_k$ with $\lab(v_m)\cap\lab(w)\neq\emptyset$ showing that $\lab(v_m)\cap V_{e_n,\lab}\neq\emptyset$.
        \item By using that $\lab(e_n)\in \LSP_H$ we can deduce from the first statement \ref{item:outerpath} that there exist vertices $v_1,\ldots,v_m\in\tilde{e}_n$, $v_{m+1},\ldots,v_{\abs{e_n}}\in e_n\setminus\tilde{e}_n$ and $w_0,\ldots,w_{\abs{e_n}}\in V_{G_{H,\lab}}$ such that the sequence $(w_0,\lab(v_1),w_1,\lab(v_2),\ldots,\lab(v_{\abs{e_n}}),w_{\abs{e_n}})$ is a cycle in $G_{H,\lab}$. Hence, we immediately obtain that $\set{\lab(v_{m+1}),\lab(v_{m+2}),\ldots,\lab(v_{\abs{e_n}})}$ is a path in $G_{H,\lab}$.
        \item This follows immediately from \ref{item:outerpath}.\qedhere
    \end{enumerate}
\end{proof}
\begin{proof}[Proof of Proposition \ref{prop:isomsimplegraphs12}]
        For the case $E=\set{e_1}$ we can deduce from the assumption on the loop labelings $\lab_1$ and $\lab_2$ that there exist pairwise distinct $i_1,\ldots,i_{m}\in\N$ and $j_1,\ldots,j_{m}\in\N$ with 
        \begin{equation}
            \label{eq:looplabelingdimone}
            \lab_1(e_1)=\set{\set{i_1,i_2},\set{i_2,i_3},\ldots,\set{i_{m},i_1}}\quad\text{and}\quad \lab_2(e_1)=\set{\set{j_1,j_2},\set{j_2,j_3},\ldots,\set{j_{m},j_1}},
        \end{equation}
        where $m\coloneqq\abs{e_1}$. Defining the bijective function $f\colon V_{G_{H,\lab_1}}\to V_{G_{H,\lab_2}}$ by $f(i_k)\coloneqq j_k$ for $k=1,\ldots,m$, we observe that $f$ satisfies \eqref{eq:isomhypergraphs} and therefore $G_{H,\lab_1}$ is isomorphic to $G_{H,\lab_2}$.\\
        For $E=\set{e_1,e_2}$ and using the same notations as in \eqref{eq:looplabelingdimone}, we can find by Lemma \ref{lem:outerloops} \ref{item:outerpath} $k_1,k_2\in\set{1,\ldots,\abs{e_1}}$ such that $V_{e_2,\lab_1}=\set{i_{k_1},i_{(k_1+p-1\mod m) + 1}}$ and $V_{e_2,\lab_2}=\set{j_{k_2},j_{(k_2+p-1\mod m) + 1}}$, where $\mod$ denotes the modulo operator and $p\coloneqq\abs{e_1\cap e_2}$. Moreover, by Lemma  \ref{lem:outerloops} \ref{item:outerpath} and \ref{item:innerpath} there exist pairwise distinct numbers $i_1',\ldots,i_{\abs{e_2}-p-1}'\in\N\setminus\set{i_1,\ldots,i_m}$ and $j_1',\ldots,j_{\abs{e_2}-p-1}'\in\N\setminus\set{j_1,\ldots,j_m}$ with
        \begin{align*}
            \lab_1(e_2\setminus e_1)&=\set{\set{i_{(k_1+p-1\mod m) + 1},i_1'},\set{i_1',i_2'},\ldots,\set{i_{\abs{e_2}-p-1}',i_{k_1}}},\\
            \lab_1(e_1\cap e_2)&=\set{\set{i_{k_1},i_{(k_1 \mod m) + 1}},\ldots,\set{i_{(k_1+p-2\mod m) + 1},i_{(k_1+p-1\mod m) + 1}}},
        \end{align*}
        and
        \begin{align*}
            \lab_2(e_2\setminus e_1)&=\set{\set{j_{(k_2+p-1\mod m) + 1},j_1'},\set{j_1',j_2'},\ldots,\set{j_{\abs{e_2}-p-1}',j_{k_2}}},,\\
            \lab_2(e_1\cap e_2)&=\set{\set{j_{k_2},j_{(k_2 \mod m) + 1}},\ldots,\set{j_{(k_2+p-2\mod m) + 1},j_{(k_2+p-1\mod m) + 1}}}.
        \end{align*}
        Defining the bijection
        \begin{equation*}
            f\colon V_{G_{H,\lab_1}}\to V_{G_{H,\lab_2}}\colon v \mapsto\begin{cases}
                j_{(l+(k_2-k_1)-1 \mod m) +1},\quad &\text{if }\exists l\in\set{1,\ldots,m}\colon v=i_l,\\
                j_l',\quad &\text{if }\exists l\in\set{1,\ldots,\abs{e_2}-p-1}\colon v=i_l',
            \end{cases}
        \end{equation*}
        it can be verified that $f$ satisfies \eqref{eq:isomhypergraphs} and therefore, $G_{H,\lab_1}$ is isomorphic to $G_{H,\lab_2}$.
\end{proof}
\begin{proof}[Proof of Theorem \ref{thm:uniquenessdimtwo}]
    First, we note that there exists a fundamental cycle basis $B\in\BCS_{\LS_H}$ of $H$ (see \cite{GroYelAnd18}, for example). Hence, using that $\LS_H$ is connected, $\PG_B$ satisfies the assumption in Proposition \ref{prop:isomsimplegraphs12} and the remaining statement follows from Theorem \ref{cor:preimagefundbasis}.
\end{proof}
\subsection{Counterexamples for higher dimensions}
\label{subsec:counterexamples}
In this subsection, we will show some examples of non-isomorphic graphs whose cycle space is connect with dimension $n\geq 3$ and whose set of compiled hypergraphs are equal. This shows that the statement in Theorem \ref{thm:uniquenessdimtwo} is not true for $n\geq 3$. We start with the following example.
\begin{example}
    \label{examp:fourbodyplaqoneneigh}
    Let $H=(V,E)$ be the hypergraph with vertex set $V\coloneqq\set{1,2,\ldots,10}$ and edge set $E\coloneqq\set{\set{1,2,3,4},\set{2,5,6,7},\set{6,8,9,10}}$. Then, we choose the two loop labelings $\lab_1$ and $\lab_2$ according to Figure \ref{fig:example_looplab_1} and \ref{fig:example_looplab_2}. Since $\dim(\LS_{G_{H,\lab_1}})=\dim(\LS_{G_{H,\lab_2}})=3$ we deduce from Theorem \ref{thm:mainthm} that $[G_{H,\lab_1}],[G_{H,\lab_2}]\in\parmap\vert_{\DHGSE}^{-1}(\set{y})$ where $y\coloneqq\parmap([G_{H,\lab_1}])$. However, we observe that the degree of vertex $3$ in $H_1$ equals four whereas every vertex in $H_2$ has degree three at most. This implies that $G_{H_1,\lab_1}$ and $G_{H_2,\lab_2}$ are not isomorphic graphs. Figure \ref{fig:induced_simplegraph_1} and \ref{fig:induced_simplegraph_2} show the induced graphs of the loop labelings $\lab_1$ and $\lab_2$. We emphasize that the first loop labeling $\lab_1$ is constructed in such a way that vertex $3$ in the induced graph $G_{H,\lab_1}$ is contained in the labelings of the two vertices $2$ and $6$ in the four-body plaquette layout which connect all three plaquettes. Note that by iteratively appending new four-body plaquettes to the rightmost plaquette and extending the two labelings $\lab_1$ and $\lab_2$, counterexamples with an arbitrary number of plaquettes $n\geq 3$ can be constructed.
\end{example}
\begin{figure}[h!]
    \begin{minipage}{0.49\textwidth}
\begin{center}
	\includegraphics{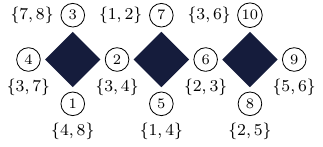}
    \captionof{figure}{Loop labeling $\lab_1$ in Example \ref{examp:fourbodyplaqoneneigh}.}
    \label{fig:example_looplab_1}
\end{center}
\end{minipage}
\hfill
    \begin{minipage}{0.49\textwidth}
\begin{center}
	\includegraphics{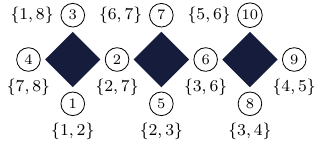}
    \captionof{figure}{Loop labeling $\lab_2$ in Example \ref{examp:fourbodyplaqoneneigh}.}
    \label{fig:example_looplab_2}
\end{center}
\end{minipage}\vspace{0.25cm}
    \begin{minipage}{0.49\textwidth}
\begin{center}
	\includegraphics{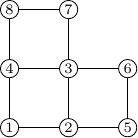}
    \captionof{figure}{Induced graph $G_{H,\lab_1}$ in Example \ref{examp:fourbodyplaqoneneigh}.}
    \label{fig:induced_simplegraph_1}
\end{center}
\end{minipage}
\hfill
    \begin{minipage}{0.49\textwidth}
\begin{center}
	\includegraphics{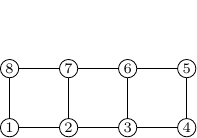}
    \captionof{figure}{Induced graph $G_{H,\lab_2}$ in Example \ref{examp:fourbodyplaqoneneigh}.}
    \label{fig:induced_simplegraph_2}
\end{center}
\end{minipage}
\end{figure}
\noindent As can be observed in previous example, each plaquette has at most only one common vertex with one other plaquette. This leads to a high degree of freedom in the choice of a loop labeling for such a plaquette layout from which we constructed two non-isomorphic graphs having the same compiled layout. In the next example, we present two non-isomorphic graphs which can be compiled to the same four-body plaquette layout where each plaquette can even have two common vertices with one other plaquette.
\begin{example}
    \label{examp:fourbodyplaqtwoneigh}
    We consider the hypergraph $H$ as illustrated in Figures \ref{fig:loop_labeling1_plaquette} and \ref{fig:loop_labeling2_plaquette} with seven edges and $16$ vertices and define the two loop labelings $\lab_1$ and $\lab_2$ as also presented in Figures \ref{fig:loop_labeling1_plaquette} and \ref{fig:loop_labeling2_plaquette}. From Theorem \ref{thm:dimformulals} we deduce that $\dim(\LS_{G_{H,\lab_1}})=\dim(\LS_{G_{H,\lab_2}})=7$ and therefore by Theorem \ref{thm:mainthm} we can conclude $[G_{H,\lab_1}],[G_{H,\lab_2}]\in\parmap\vert_{\DHGSE}^{-1}(\set{y})$ with $y\coloneqq\parmap([G_{H,\lab_1}])$. Similarly as in the previous example, we observe that the degree of vertex $3$ in $H_2$ equals six whereas every vertex in $H_1$ has degree five at most. This implies again that $G_{H_1,\lab_1}$ and $G_{H_2,\lab_2}$ are not isomorphic graphs. We note that in the edge $\set{5,8,9,10}$ we switched the labelings of the two vertices $9$ and $10$ in order to obtain a higher degree of the vertex $3$ in $G_{H,\lab_2}$.
\end{example}
\begin{figure}[h!]
    \begin{minipage}{0.49\textwidth}
\begin{center}
    \includegraphics{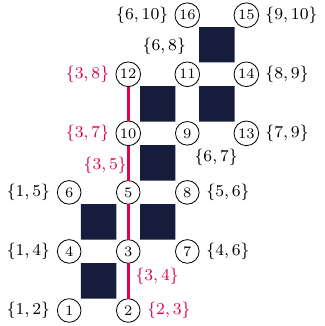}
    \captionof{figure}{Loop labeling $\lab_1$ in Example \ref{examp:fourbodyplaqtwoneigh}.}
    \label{fig:loop_labeling1_plaquette}
\end{center}
\end{minipage}
\hfill
    \begin{minipage}{0.49\textwidth}
\begin{center}
    \includegraphics{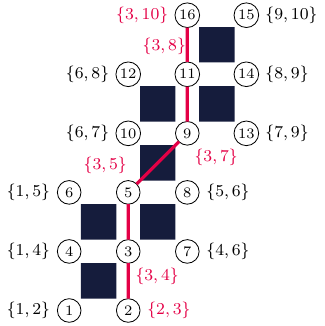}
    \captionof{figure}{Loop labeling $\lab_2$ in Example \ref{examp:fourbodyplaqtwoneigh}.}
    \label{fig:loop_labeling2_plaquette}
\end{center}
\end{minipage}
\end{figure}
\noindent As can be seen in Example \ref{examp:fourbodyplaqtwoneigh}, having a connected four-body plaquette layout where each plaquette has two common vertices with a subsequent plaquette is not sufficient to obtain only induced isomorphic graphs. However, there exists a class of compiled hypergraphs with four-body plaquettes for which a similar result as in Theorem \ref{thm:uniquenessdimtwo} holds. This class will be discussed in the last section of this article.
\subsection{Rectangular plaquette layouts}
\label{subsec:charrectplaqlayouts} 
Lastly, we focus on compiled hypergraphs which are so-called \emph{rectangular plaquette layouts} and defined as follows.
\begin{definition}
    Let $H=(V,E)$ be a hypergraph. We say that \emph{$H$ is a rectangular plaquette layout} if and only if there exist natural numbers $m,n\geq 2$ and an enumeration of the vertex set \[V=\set{v_{i,j}\mid 1\leq i\leq m,1\leq j\leq n}\] as well as an enumeration of the edge set \[E=\set{e_{i,j}\mid 1\leq i\leq m-1,1\leq j\leq n-1}\] such that
    \begin{equation*}
        \forall 1\leq i\leq m-1\colon\forall 1\leq j\leq n-1\colon e_{i,j}=\set{v_{i,j},v_{i,j+1},v_{i+1,j+1},v_{i+1,j}}
    \end{equation*}
    We denote $m$ as the number of vertical vertices and $n$ as the number of horizontal vertices of $H$. Moreover, we define $\PL_{\rectangle}^{(4)}\coloneqq \set{H\in\HGS\mid H\text{ is a rectangular plaquette layout}}$ as the set of all rectangular plaquette layouts.
\end{definition}
\noindent Figure \ref{fig:rectangularplaquette} shows an illustration of an example of a rectangular plaquette layout with five horizontal and four vertical nodes.
\begin{figure}[h!]
    \begin{minipage}{0.49\textwidth}
\begin{center}
	\includegraphics{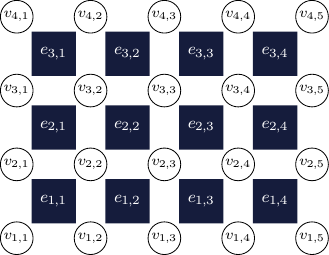}
    \captionof{figure}{A rectangular plaquette layout.}
    \label{fig:rectangularplaquette}
\end{center}
\end{minipage}
\hfill
    \begin{minipage}{0.49\textwidth}
     \vspace{0.4cm}
\begin{center}
	\includegraphics{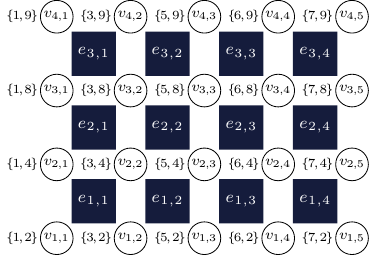}
    \captionof{figure}{Loop labeling of a rectangular plaquette layout.}
    \label{fig:looplbalrectplaq}
\end{center}
\end{minipage}
\end{figure}
\subsubsection{Characterization of preimages of rectangular plaquette layouts}
\noindent The last theorem of this section presents an equivalent condition for a graph $H$ that can be compiled on a rectangular plaquette layout.
\begin{theorem}
    \label{thm:preimagerectangularlayouts}
    Let $m,n\geq 2$ and $H=(V,E)$ be a graph. Then, there exists $B\in\BCS_{\LS_{H}}
    $ such that the compiled hypergraph $\PG_B\in\PL_{\rectangle}^{(4)}$ is a rectangular plaquette layout with $m$ vertical and $n$ horizontal vertices if and only if $H\vert_{E_{\LS}}$ is a complete bipartite graph with two partitions of size $m$ and $n$.
\end{theorem}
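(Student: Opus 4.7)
For the direction $(\Leftarrow)$, I would take $B$ to be the set of elementary 4-cycles $C_{i,j}\coloneqq\set{\set{a_i,b_j},\set{a_{i+1},b_j},\set{a_{i+1},b_{j+1}},\set{a_i,b_{j+1}}}$ of $K_{m,n}$ (with partitions $\set{a_1,\ldots,a_m}$ and $\set{b_1,\ldots,b_n}$). Each $C_{i,j}$ is a cycle in $\LS_H=\LS_{H\vert_{E_{\LS}}}=\LS_{K_{m,n}}$, and linear independence can be shown by a triangular argument anchored at the corner edge $\set{a_1,b_1}$ (which lies only in $C_{1,1}$) and propagated along the boundary. Since $(m-1)(n-1)=\dim(\LS_H)$ by Theorem~\ref{thm:dimformulals}, the set $B\coloneqq\set{C_{i,j}\mid 1\leq i<m,\,1\leq j<n}$ is a basis of $\LS_H$, and enumerating via $h(v_{i,j})\coloneqq\set{a_i,b_j}$ gives $h^{-1}(C_{i,j})=\set{v_{i,j},v_{i+1,j},v_{i+1,j+1},v_{i,j+1}}$, which is precisely the hyperedge structure of a rectangular plaquette layout with $m$ vertical and $n$ horizontal vertices.

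For the direction $(\Rightarrow)$, apply Lemma~\ref{lem:existlooplab} to obtain $H'\in[H]$ and a loop labeling $\lab$ of $\PG_B$ with $\LS_{G_{\PG_B,\lab}}=\LS_{H'}$. By step~(i) in the proof of Theorem~\ref{thm:mainthm}, $B_{\lab}\coloneqq\set{\lab(e)\mid e\in E_{\PG_B}}$ is a basis of $\LS_{G_{\PG_B,\lab}}$ of dimension $(m-1)(n-1)$. Each $\lab(e)$ consists of four distinct edges (injectivity of $\lab$) lying in the cycle space; on a simple graph such a 4-edge cycle-space element is forced to be a single 4-cycle (no degree-4 vertex is possible and 2-cycles would require multi-edges). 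Thus $B_\lab$ is a basis of even-weight cycles, and since $\GF$-sums preserve edge-parity, every cycle in $G_{\PG_B,\lab}$ is even; in particular $G_{\PG_B,\lab}$ is bipartite. Combining Theorem~\ref{thm:dimformulals} with $\abs{E_{G_{\PG_B,\lab}}}=mn$ and connectedness (consecutive plaquettes share graph-vertices through shared labels) yields $\abs{V_{G_{\PG_B,\lab}}}=m+n$. The equality $\LS_{H'}=\LS_{G_{\PG_B,\lab}}$ also matches the non-bridge edges, so proving $G_{\PG_B,\lab}\cong K_{m,n}$ will give $H\vert_{E_{\LS}}\cong H'\vert_{(E_{H'})_{\LS}}=G_{\PG_B,\lab}\cong K_{m,n}$.

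To prove $G_{\PG_B,\lab}\cong K_{m,n}$, I would induct on the plaquettes in row-major order $e_{1,1},e_{1,2},\ldots$, showing that at each step the currently assigned labels induce a complete bipartite subgraph $K_{i^{\ast},j^{\ast}}$ on vertices $\set{a_1,\ldots,a_{i^{\ast}}}\cup\set{b_1,\ldots,b_{j^{\ast}}}$. The base plaquette $e_{1,1}$ yields a 4-cycle $K_{2,2}$ on four new graph-vertices. For the inductive step, Lemma~\ref{lem:outerloops} items~(i) and~(iii) pin down the new graph-vertices introduced by the new plaquette: an interior plaquette (both indices $\geq 2$) introduces none, since its single new plaquette-vertex's label has both endpoints in the size-$2$ set $V_{e_{i,j},\lab}$ of old graph-vertices, while a boundary plaquette introduces exactly one new graph-vertex as the middle of the length-two path formed by its two new labels. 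A short case analysis using injectivity of $\lab$ excludes any \emph{fold-back} in which the fourth cycle-vertex would coincide with a previously used graph-vertex, because such fold-back would reproduce a label assigned at an earlier plaquette-vertex. The main obstacle is the base case and the first-row/first-column plaquettes: here the standard two-plaquette injectivity argument (which, in an interior plaquette, forces non-diagonal label pairs to share a graph-vertex) is unavailable, so the local opposite-pair partition of a plaquette's 4-cycle may permute the roles of rows and columns and the canonical labeling $\lab(v_{i,j})=\set{a_i,b_j}$ need not hold verbatim. Nevertheless, the combined constraints $\abs{V_{G_{\PG_B,\lab}}}=m+n$, bipartiteness, and each plaquette being a $K_{2,2}$ still pin the final bipartition sizes to $\set{m,n}$, forcing $G_{\PG_B,\lab}=K_{m,n}$.
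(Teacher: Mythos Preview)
Your $(\Leftarrow)$ direction matches the paper's construction essentially verbatim.

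For $(\Rightarrow)$ you take a genuinely different route. The paper argues indirectly: it uses direction $(\Leftarrow)$ to produce a reference graph $K_{m,n}$ compiling to the rectangular layout, then invokes Lemma~\ref{lem:condeqcomphyp} and Corollary~\ref{cor:preimagefundbasis} to place any competing $H'$ in the same preimage, and finally applies Proposition~\ref{prop:isominducsimplgraphsrpl}---whose proof rests on a rigid normal-form lemma (Lemma~\ref{lem:condlooplabrpl}) for \emph{every} loop labeling of a rectangular layout---to conclude $H'\vert_{E'_{\LS}}\cong K_{m,n}$. You instead analyse a single loop labeling (from Lemma~\ref{lem:existlooplab}) and try to identify its induced graph directly. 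Your auxiliary observations---each $4$-element plaquette is forced to label a genuine $4$-cycle, hence the induced graph is bipartite; the dimension formula gives $\lvert V_{G_{\PG_B,\lab}}\rvert=m+n$---are correct and streamline the picture; the paper does not isolate them.

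There is, however, a concrete gap in your induction. First, the stated invariant ``the currently assigned labels induce $K_{i^\ast,j^\ast}$'' is false in row-major order: after finishing the first row you have $K_{2,n}$, but processing $e_{2,1}$ adjoins two edges meeting at one new vertex, which is not complete bipartite for $n\geq 3$. Second, your fold-back exclusion by injectivity alone fails at $e_{2,1}$: writing the first-row bipartition as $\{i_2,i_4\}\mid\{i_1,i_3,i_5,\dots,i_{n+2}\}$, the new middle graph-vertex could a~priori equal some $i_k$ with $k\geq 5$, since the candidate labels $\{i_1,i_k\},\{i_3,i_k\}$ have \emph{not} been assigned---so injectivity does not bite. (Your global bipartiteness would rule this out, but you invoke only injectivity.) Third, bipartiteness on $m+n$ vertices with $mn$ edges does not by itself force partition sizes $\{m,n\}$ (for $m=2,n=4$ there are $8$-edge bipartite graphs on a $3{+}3$ split), so your closing sentence that ``the combined constraints\ldots still pin the final bipartition sizes'' is exactly the assertion that needs proof. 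The paper's Lemma~\ref{lem:condlooplabrpl} carries out precisely this missing analysis, inducting with \emph{two} enumerations (horizontal and vertical) of the plaquettes to resolve the row/column ambiguity you flag; importing that argument, or reorganising your induction to exploit global bipartiteness at each boundary step, is what would close the gap.
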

\noindent Figure \ref{fig:compiledbibartitegraph} shows a complete bibartite graph with two partitions of size five and four.\\
\begin{figure}
    \centering
    \includegraphics{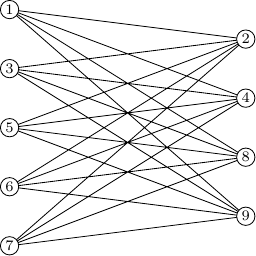}
    \caption{A complete bipartite graph which can be compiled to the layout in Figure \ref{fig:rectangularplaquette} by choosing a basis similarly as in the proof of Theorem \ref{thm:preimagerectangularlayouts}.}
    \label{fig:compiledbibartitegraph}
\end{figure}

\noindent For better readability, we again divide the proof into several parts. Similarly as in Theorem \ref{thm:uniquenessdimtwo}, we first prove that any two loop labelings of rectangular plaquette layouts induce graphs which are isomorphic.
\begin{prop}
    \label{prop:isominducsimplgraphsrpl}
    Let $H=(V,E)\in\PL_{\rectangle}^{(4)}$ be a be a rectangle plaquette layout. Moreover, let $\lab_1,\lab_2$ be two loop labelings of $H$ with $\dim(G_{H,\lab_1})=\dim(G_{H,\lab_2})=\abs{E}$. Then $G_{H,\lab_1}$ is isomorphic to $G_{H,\lab_2}$.
\end{prop}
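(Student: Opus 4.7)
The plan is to show that any loop labeling $\lab$ of $H$ satisfying the dimension hypothesis $\dim(\LS_{G_{H,\lab}}) = \abs{E}$ produces an induced graph $G_{H,\lab}$ isomorphic to the complete bipartite graph $K_{m,n}$; applying this to both $\lab_1$ and $\lab_2$ then yields the desired isomorphism $G_{H,\lab_1}\cong G_{H,\lab_2}$.

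First I would establish the basic structure of $G_{H,\lab}$. By injectivity of $\lab$, each $\lab(e_{i,j})$ consists of four distinct edges, and since in a simple graph any element of the cycle space with exactly four edges is a single simple $4$-cycle (edge-disjoint simple cycles have length at least three), each $\lab(e_{i,j})$ is a $4$-cycle. Enumerating the plaquettes in row-major order one checks that condition \eqref{eq:condhypergraph} holds; by Lemma \ref{lem:restrlooplab} the set $\set{\lab(e_{i,j})}$ is then linearly independent in $\LS_{G_{H,\lab}}$ and, by the dimension hypothesis, a basis. Since every basis cycle has length four, the linear functional $C\mapsto\abs{C}\bmod 2$ on $\LS_{G_{H,\lab}}$ vanishes on the basis and hence on the whole space, so every simple cycle has even length and $G_{H,\lab}$ is bipartite. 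Injectivity gives $\abs{E(G_{H,\lab})} = \abs{V_H} = mn$; the plaquette adjacency chain makes $G_{H,\lab}$ connected; Theorem \ref{thm:dimformulals} then yields $\abs{V(G_{H,\lab})} = m+n$.

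The core difficulty is pinning the bipartition sizes to $\set{m,n}$, since a bipartite graph with $m+n$ vertices and $mn$ edges could a priori have various part sizes $(p,q)$ with $p+q=m+n$ and $pq\geq mn$. Here the rectangular structure must be used. Whenever $m,n\geq 3$, each plaquette has at least one horizontal and one vertical neighbour; for any two adjacent plaquettes sharing two physical vertices, a case analysis shows that the two shared labels must share an induced-graph vertex in both $4$-cycles, for otherwise both $4$-cycles would live on the same four induced-graph vertices, collapsing the restricted induced subgraph into $K_4$ of cycle-space dimension three, contradicting the dimension-two behaviour guaranteed by Lemma \ref{lem:restrlooplab}. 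Propagating this "adjacent shared pair" condition across the grid forces every plaquette into the canonical configuration in which diagonal pairs of its four physical vertices are the opposite edges of its $4$-cycle. From there, reading off the common vertex of $\lab(v_{i,j})$ and $\lab(v_{i,j+1})$ as a row vertex $a_i$---well-defined independently of $j$ by a further consistency argument using Lemma \ref{lem:outerloops}---and analogously defining column vertices $b_j$, one obtains $\lab(v_{i,j})=\set{a_i,b_j}$, hence $G_{H,\lab}\cong K_{m,n}$. The boundary cases $m=2$ or $n=2$, where the plaquettes form a $1$-dimensional chain, are handled by iterating Proposition \ref{prop:isomsimplegraphs12} along the chain, each step adding a single new induced-graph vertex.

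The main obstacle is precisely this last step: bipartiteness together with the vertex and edge counts does not by itself determine the bipartition sizes, so the argument must invoke the grid structure through Lemma \ref{lem:outerloops} and the dimension-one-per-new-plaquette growth from Lemma \ref{lem:restrlooplab} to rule out both the "opposite" pairing of shared vertex pairs and any drift of the row or column vertex along the grid.
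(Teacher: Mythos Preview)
Your main line of argument for $m,n\geq 3$ is correct and is a genuinely different route from the paper's. The paper proceeds by a direct inductive construction (Lemma~\ref{lem:condlooplabrpl}): using two interleaved row-major and column-major enumerations together with Lemma~\ref{lem:outerloops}, it explicitly pins down $\lab(v_{i,j})$ one plaquette at a time until the full table $\lab(v_{i,j})=\{a_i,b_j\}$ is established. Your approach instead first extracts global invariants (bipartiteness from the parity of basis cycles, $\abs{V(G_{H,\lab})}=m+n$ from Theorem~\ref{thm:dimformulals}), and then uses the local $K_4$ obstruction to force the adjacency pattern inside each $4$-cycle, from which the row/column vertices $a_i,b_j$ can be read off by a short consistency argument. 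This is more conceptual and avoids the two-enumeration bookkeeping of Lemma~\ref{lem:condlooplabrpl}; in particular you do not actually need Lemma~\ref{lem:outerloops} for the well-definedness of $a_i$, since once the canonical configuration is known, $\lab(v_{i,j+1})\cap\lab(v_{i+1,j+1})$ is the same vertex whether computed in $e_{i,j}$ or in $e_{i,j+1}$, and the remaining endpoint of $\lab(v_{i,j+1})$ is then forced to be $a_i^{(j)}=a_i^{(j+1)}$.

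There is, however, a genuine gap in your handling of the boundary case $m=2$ (or $n=2$). Proposition~\ref{prop:isomsimplegraphs12} only asserts isomorphism of induced graphs when there are at most two plaquettes; it gives no mechanism for ``iterating along the chain'', and knowing that $G_{H_2,\lab_1}\cong G_{H_2,\lab_2}$ does not by itself tell you that an isomorphism survives after attaching the third plaquette. What does work here is your own $K_4$ argument: for adjacent plaquettes $e_{1,j},e_{1,j+1}$ the shared labels $\lab(v_{1,j+1}),\lab(v_{2,j+1})$ are still forced to be adjacent in the $4$-cycle, and in a $4$-cycle the complementary pair is then automatically adjacent as well, so $\lab(v_{1,j}),\lab(v_{2,j})$ share a vertex $c_j$ for every $j$. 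A short consistency check (exactly as in your $m,n\geq 3$ step) shows that the two remaining vertices of each $4$-cycle are the same pair $\{p,q\}$ independent of $j$, yielding $G_{H,\lab}\cong K_{2,n}$. So the fix is simply to drop the appeal to Proposition~\ref{prop:isomsimplegraphs12} and run your main argument in this case too; the paper's proof of the $m=2$ case does essentially this via a direct induction on $j$.
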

\noindent For the proof of Proposition \ref{prop:isominducsimplgraphsrpl}, we need the following lemma, which characterizes loop labelings of rectangular plaquette layouts. Figure \ref{fig:looplbalrectplaq} shows how the plaquette layout has to be labeled according to conditions \ref{item:prop1lopplabrpl}--\ref{item:prop4lopplabrpl} in the subsequent lemma.
\begin{lemma}
    \label{lem:condlooplabrpl}
    Let $H=(V,E)\in\PL_{\rectangle}^{(4)}$ be a rectangular plaquette layout with $m\geq 3$ vertical and $n\geq 3$ horizontal nodes. Moreover, let $\lab$ be a loop labeling of $H$ with $\dim(G_{H,\lab})=\abs{E}$. Then, there exist pairwise distinct numbers $i_1,\ldots,i_{m+n}\in\N$ such that
    \begin{enumerate}
        \item \label{item:prop1lopplabrpl} $\lab(v_{1,1})=\set{i_1,i_2}$, $\lab(v_{1,2})=\set{i_2,i_3}$, $\lab(v_{2,2})=\set{i_3,i_4}$ and $\lab(v_{2,1})=\set{i_1,i_4}$.
        \item \label{item:prop2lopplabrpl} $\forall 2\leq j\leq n-1\colon \lab(v_{1,j+1})=\set{i_2,i_{j+3}}\text{ and }\lab(v_{2,j+1})=\set{i_{j+3},i_4}$.
        \item \label{item:prop3lopplabrpl} $\forall 2\leq i\leq m-1\colon \lab(v_{i+1,1})=\set{i_1,i_{n+i+1}}\text{ and }\lab(v_{i+1,2})=\set{i_3,i_{n+i+1}}$
        \item \label{item:prop4lopplabrpl} $\forall 2\leq i\leq m-1\colon\forall 2\leq j\leq n-1\colon  \lab(v_{i+1,j+1})=\set{i_{n+i+1},i_{j+3}}$
    \end{enumerate}
\end{lemma}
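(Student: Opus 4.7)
The plan is to induct plaquette by plaquette along the row-by-row enumeration $e_{1,1}, e_{1,2}, \ldots, e_{1,n-1}, e_{2,1}, \ldots, e_{m-1,n-1}$ and invoke Lemma \ref{lem:outerloops} to pin down the labels of the newly introduced vertices at each step. A preliminary observation is that each $\lab(e_{i,j})$ is in fact a simple $4$-cycle in $G_{H,\lab}$: by injectivity of $\lab$ and $\abs{e_{i,j}}=4$, the image $\lab(e_{i,j})$ is a $4$-element subset of $\LS_{G_{H,\lab}}$, and in a simple graph the only $4$-edge Eulerian subgraph is a $4$-cycle (parallel edges and loops being excluded). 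The row-by-row enumeration clearly satisfies \eqref{eq:condhypergraph} and also $e_k \cap \bigcup_{l<k} e_l \neq \emptyset$ for $k \geq 2$, so Lemma \ref{lem:outerloops} applies at every step.

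For $e_{1,1}$, the $4$-cycle $\lab(e_{1,1})$ can be written as $\set{i_1,i_2}, \set{i_2,i_3}, \set{i_3,i_4}, \set{i_4,i_1}$, and adjacency in the $4$-cycle forces this cyclic order to agree with the cyclic order of the plaquette vertices, giving \ref{item:prop1lopplabrpl}. For the top row ($2 \leq j \leq n-1$), the new vertex set is $\set{v_{1,j+1}, v_{2,j+1}}$, while the inductively-known labels $\lab(v_{1,j}) = \set{i_2, c_j}$ and $\lab(v_{2,j}) = \set{c_j, i_4}$ (with $c_2 = i_3$ and $c_j = i_{j+2}$ for $j \geq 3$) form an inner path with endpoints $i_2, i_4$. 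Lemma \ref{lem:outerloops} \ref{item:outerpath} and \ref{item:outerverticesset} then force the two outer labels to be $\set{i_2, x}$ and $\set{x, i_4}$ for a single middle vertex $x$; the assignment $\lab(v_{1,j+1}) = \set{i_2, x}$ (rather than $\set{x, i_4}$) is forced by the requirement that $\lab(v_{1,j+1})$ and $\lab(v_{1,j})$ share an endpoint in the $4$-cycle, which must be $i_2$ since $x \neq c_j$ by injectivity. A direct comparison of $\set{i_2, x}$ and $\set{x, i_4}$ with every previously assigned label rules out $x \in \set{i_1, \ldots, i_{j+2}}$, so $x$ is fresh and we name it $i_{j+3}$, establishing \ref{item:prop2lopplabrpl}. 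Applying the same argument with the roles of rows and columns swapped handles the left column and gives \ref{item:prop3lopplabrpl}.

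Finally, for an interior plaquette $e_{i,j}$ with $i, j \geq 2$, the new vertex set is the singleton $\set{v_{i+1,j+1}}$ and Lemma \ref{lem:outerloops} \ref{item:outerpath} simply says that $\lab(v_{i+1,j+1})$ is the unique edge joining the two endpoints of the inner $3$-path formed by $\lab(v_{i,j}), \lab(v_{i,j+1}), \lab(v_{i+1,j})$. By the inductive hypothesis, and writing $r_a, c_b$ for the row and column labels already established, the three inner labels are $\set{r_i, c_j}, \set{r_i, c_{j+1}}, \set{r_{i+1}, c_j}$, composing the path $c_{j+1} - r_i - c_j - r_{i+1}$; hence $\lab(v_{i+1,j+1}) = \set{c_{j+1}, r_{i+1}}$, yielding \ref{item:prop4lopplabrpl}. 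The main obstacle throughout is the bookkeeping of cyclic orientation: at each step one must verify that the four labels of the plaquette appear in the correct cyclic order inside the induced $4$-cycle, which is done by combining the plaquette's own vertex-adjacency structure with the inductive hypothesis on neighboring labels and the injectivity of $\lab$.
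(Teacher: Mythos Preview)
Your overall strategy matches the paper's, but there is a genuine gap in how you pin down the cyclic orientation inside each plaquette. In the hypergraph $H$, the edge $e_{1,1}=\{v_{1,1},v_{1,2},v_{2,2},v_{2,1}\}$ is merely a four-element set; it carries no ``cyclic order of the plaquette vertices''. Knowing that $\lab(e_{1,1})$ is a $4$-cycle tells you only that the four labels are the four edges of some $4$-cycle, not that $\lab(v_{1,1})$ and $\lab(v_{1,2})$ are adjacent edges of that cycle rather than opposite ones. So your derivation of \ref{item:prop1lopplabrpl} from $e_{1,1}$ alone does not go through. The same problem recurs in your treatment of \ref{item:prop2lopplabrpl}: after applying Lemma~\ref{lem:outerloops} to the row enumeration at step $j$, you correctly obtain $\{\lab(v_{1,j+1}),\lab(v_{2,j+1})\}=\{\{i_2,x\},\{x,i_4\}\}$, but the asserted ``requirement that $\lab(v_{1,j+1})$ and $\lab(v_{1,j})$ share an endpoint in the $4$-cycle'' simply does not exist --- both could equally well be opposite edges of $\lab(e_{1,j})$, and nothing in the horizontal enumeration alone rules that out.

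The paper resolves exactly this point by running Lemma~\ref{lem:outerloops}\ref{item:innerpath} with a \emph{second} (vertical) enumeration. For \ref{item:prop1lopplabrpl} it applies the lemma to $\{e_{1,1},e_{1,2}\}$ to force $\lab(v_{1,2}),\lab(v_{2,2})$ to be adjacent edges of the $4$-cycle, and then to $\{e_{1,1},e_{2,1}\}$ to force $\lab(v_{2,1}),\lab(v_{2,2})$ to be adjacent as well; together these fix the assignment. Likewise, to break the two-fold ambiguity at each step of \ref{item:prop2lopplabrpl}, the paper invokes the inner-path statement for the plaquette $e_{2,j}$ lying \emph{below} $e_{1,j}$ (via the vertical enumeration), which forces $\lab(v_{2,j})$ and $\lab(v_{2,j+1})$ to share a vertex and thereby selects the correct alternative. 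Your final remark acknowledges that orientation is ``the main obstacle'', but the mechanism you name --- ``the plaquette's own vertex-adjacency structure'' --- does not exist in $H$; you need the neighbouring plaquette in the transverse direction, and that argument is not present in your write-up.
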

\begin{proof}
    First, we note that $\lab(e)\in \LSP_{G_{H,\lab}}$ for all $e\in E$ since the number of vertices of each edge is equal to four. In the following, we choose two enumerations of the edge set $E$
    \begin{equation*}
        e_k^{(1)}\coloneqq e_{a_{k,1},b_{k,1}},\quad a_{k,1}\coloneqq (k-1)//(n-1)+1,\ b_{k,1}\coloneqq (k-1)\mod (n-1)+1,
    \end{equation*}
    the ``horizontal'' enumeration, and
    \begin{equation*}
        e_k^{(2)}\coloneqq e_{a_{k,2},b_{k,2}},\quad a_{k,2}\coloneqq (k-1)\mod (m-1)+1,\ b_{k,2}\coloneqq (k-1)//(m-1)+1,\,
    \end{equation*}
    the ``vertical'' enumeration of $E$, where $k=1,\ldots,(m-1)(n-1)$. Here, $//$ denotes the floor division of integers. Then, we observe for $i\in\set{1,2}$
    \begin{equation*}
        \forall 2\leq k\leq (m-1)(n-1)\colon e_k^{(i)}\setminus\bigcup_{l=1}^{k-1} e_l^{(i)}\neq \emptyset\land e_k^{(i)}\cap\bigcup_{l=1}^{k-1} e_l^{(i)}\neq\emptyset.
    \end{equation*}
    Thus, Lemma \ref{lem:restrlooplab} \ref{item:restrlooplab} shows that for all $1\leq k\leq (m-1)(n-1)$ and $i\in\set{1,2}$
    \begin{equation}
        \label{eq:dimpropertylooplab}
        \dim(\LS_{G_{H_{k,i},\lab_{k,i}}})=k,\quad H_{k,i}\coloneqq (V_{k,i},{E_{k,i}})\quad\text{and}\quad\lab_{k,i}\coloneqq \lab\vert_{V_{k,i}},
    \end{equation}
    where $E_{k,i}\coloneqq\bigcup_{l=1}^{k}\set{e_l^{(i)}}$ and $V_{k,i}\coloneqq\set{v\in V\mid \exists 1\leq l\leq k\colon v\in e_l^{(i)}}$.
    \begin{enumerate}[wide=\parindent,leftmargin=0pt,align=left]
        \item \label{item:proofprop1lopplabrpl} Next, applying Lemma \ref{lem:outerloops} \ref{item:innerpath} with $H_{2,1}$ and $E_{2,1}=\set{e_1^{(1)},e_2^{(1)}}=\set{e_{1,1},e_{1,2}}$ in \eqref{eq:dimpropertylooplab}, we can imply that there exist pairwise distinct numbers $i_2,i_3,i_4\in\N$ such that $\lab(v_{1,2})=\set{i_2,i_3}$, $\lab(v_{2,2})=\set{i_3,i_4}$. Furthermore, using Lemma \ref{lem:outerloops} \ref{item:innerpath} again with $H_{2,2}$ and $E_{2,2}=\set{e_1^{(2)},e_{2}^{(2)}}=\set{e_{1,1},e_{2,1}}$ in \eqref{eq:dimpropertylooplab} as well as $\lab(e_1)\in\LSP_{G_{H,\lab}}$, there exists $i_1\in\N\setminus\set{i_2,i_3,i_4}$ with $\lab(v_{2,1})=\set{i_1,i_4}$ and $\lab(v_{1,1})=\set{i_1,i_2}$.
        \item \label{item:proofprop2lopplabrpl} Now, we fix the elements $i_1,\ldots,i_4\in\N$ in \ref{item:proofprop1lopplabrpl} and show by induction that there exist for all $2\leq k\leq n-1$ pairwise distinct numbers $i_5,\ldots,i_{k+3}\in\N\setminus\set{i_1,i_2,i_3,i_4}$ with
        \begin{equation}
            \label{eq:proofprop2lopplabrpl} 
            \forall 2\leq j\leq k\colon \lab(v_{1,j+1})=\set{i_2,i_{j+3}}\text{ and }\lab(v_{2,j+1})=\set{i_4,i_{j+3}}.
        \end{equation}
        For $k=2$, applying Lemma \ref{lem:outerloops} \ref{item:outerpath} with $H_{2,1}$ shows that there exists $i_5\in\N\setminus\set{i_1,i_2,i_3,i_4}$ such that $\lab(\set{v_{1,3},v_{2,3}})=\set{\set{i_2,i_5},\set{i_5,i_4}}$. Since $e_m^{(2)}\cap e_{m+1}^{(2)}=e_{1,2}\cap e_{2,2}=\set{v_{2,2},v_{2,3}}$, we obtain by \ref{item:proofprop1lopplabrpl} as well as applying Lemma \ref{lem:outerloops} \ref{item:innerpath} on $H_{m+1,2}$ the relations $\lab(v_{1,3})=\set{i_2,i_5}$ and $\lab(v_{2,3})=\set{i_4,i_5}$. The remaining statement follows by induction and the same arguments.
        \item In the last step of the proof, we apply \ref{item:proofprop1lopplabrpl} and \ref{item:proofprop2lopplabrpl} and choose pairwise distinct $i_1,\ldots,i_{n+2}\in\N$ satisfying the first two conditions in Lemma \ref{lem:condlooplabrpl}. We finally show by induction that for all $2\leq k\leq m-1$ there exist pairwise distinct $i_{n+3},\ldots,i_{n+k+1}\in\N\setminus\set{i_1,\ldots,i_{n+2}}$ satisfying
        \begin{align}
            \forall 2\leq i\leq k\colon &\lab(v_{i+1,1})=\set{i_1,i_{n+i+1}}\text{ and }\lab(v_{i+1,2})=\set{i_3,i_{n+i+1}}\label{eq:proofprop3alopplabrpl}\\
            \forall 2\leq i\leq k\colon\forall 2\leq j\leq n-1\colon  &\lab(v_{i+1,j+1})=\set{i_{n+i+1},i_{j+3}}.\label{eq:proofprop3blopplabrpl}
        \end{align}
        Suppose that $k=2$. We use the same arguments as before and apply Lemma \ref{lem:outerloops} \ref{item:outerpath} on $H_{2,2}$ to find $i_{n+3}\in\N\setminus\set{i_1,\ldots,i_{n+2}}$ with $\lab(\set{v_{3,1},v_{3,2}})=\set{\set{i_1,i_{n+3}},\set{i_3,i_{n+3}}}$. Moreover, applying Lemma \ref{lem:outerloops} \ref{item:innerpath} on $H_{n+1,1}$ and using the first two conditions in Lemma \ref{lem:condlooplabrpl}, we deduce that $\lab(v_{3,2})=\set{i_3,i_{n+3}}$ and therefore $\lab(v_{3,1})=\set{i_1,i_{n+3}}$, which shows that \eqref{eq:proofprop3alopplabrpl} holds for $k=2$. Since $e_{2,2}=\set{v_{2,2},v_{2,3},v_{3,3},v_{3,2}}$ we deduce from $\lab(e_{2,2})\in \LSP_{G_{H,\lab}}$ and the relations $\lab(v_{2,2})=\set{i_3,i_4}$, $\lab(v_{3,2})=\set{i_3,i_{n+3}}$, $\lab(v_{2,3})=\set{i_4,i_5}$ that $\lab(v_{3,3})=\set{i_5,i_{n+3}}$. This shows that \eqref{eq:proofprop3blopplabrpl} holds for $j=2$. Now, suppose $n\geq 4$ and \eqref{eq:proofprop3blopplabrpl} holds for some particular $2\leq j<n-1$ and $i=2$. By using the induction hypothesis, we observe that $\lab(v_{3,j+1})=\set{i_{n+3},i_{j+3}}$. Since $\lab(v_{2,j+1})=\set{i_4,i_{j+3}}$ and $\lab(v_{2,j+2})=\set{i_4,i_{j+4}}$ by \eqref{eq:proofprop2lopplabrpl}, and $\lab(e_{2,j+1})\in \LSP_{G_{H,\lab}}$ we therefore obtain $\lab(v_{3,j+2})=\set{i_{n+3},i_{j+4}}$.\\
        Assume now that $m\geq 4$ and the above statement with conditions \eqref{eq:proofprop3alopplabrpl} and \eqref{eq:proofprop3blopplabrpl} holds for a particular $2\leq k<m-1$, and let $i_{n+3},\ldots,i_{n+k+1}\in\N\setminus\set{i_1,\ldots,i_{n+2}}$ be some integers satisfying these two conditions. In order to show the statement for $k+1$, we are only left to show that there exists $i_{n+k+2}\in\N\setminus\set{i_1,\ldots,i_{n+k+1}}$ which fulfills conditions \eqref{eq:proofprop3alopplabrpl} and \eqref{eq:proofprop3blopplabrpl} for $i=k+1$. This again follows by induction and the same arguments.\qedhere
    \end{enumerate}
\end{proof}
\begin{proof}[Proof of Proposition \ref{prop:isominducsimplgraphsrpl}]
    \begin{enumerate}[wide=\parindent,leftmargin=0pt,align=left]
        \item We start by showing the statement for $m=2$ and $n>2$. The case $m\geq 2$ and $n=2$ can be proven analogously.\\
        First, we show that for any loop labeling $\lab$ of $H$ with $\dim(\LS_{G_{H,\lab}})=\abs{E}$ there exists for all $3\leq k\leq n$ pairwise $i_1,\ldots,i_{k+2}\in\N$ such that
        \begin{enumerate}[label=(\alph*)]
            \item \label{item:looplablinecond1} $\lab(\set{v_{1,1},v_{2,1}})=\set{\set{i_1,i_2},\set{i_1,i_4}}$,
            \item \label{item:looplablinecond2} $\lab(\set{v_{1,2},v_{2,2}})=\set{\set{i_2,i_3},\set{i_3,i_4}}$ and
            \item \label{item:looplablinecond3} $\forall 3\leq l\leq k\colon \lab(\set{v_{1,l},v_{2,l}})=\set{\set{i_2,i_{l+2}},\set{i_{l+2},i_4}}$.
        \end{enumerate}
        For $k=3$, we observe from the application of Lemma \ref{lem:outerloops} \ref{item:innerpath} on $H_2=(V_2,E_2)$ with $E_2=\set{e_{1,1},e_{1,2}}$ and $V_2=\set{v\in V\mid v\in E_2}$ that there exist pairwise distinct $i_2,i_3,i_4\in\N$ such that $\lab(\set{v_{1,2},v_{2,2}})=\set{\set{i_2,i_3},\set{i_3,i_4}}$. Moreover, since $\lab(e_{1,1}),\lab(e_{1,2})\in \LSP_{G_{H,\lab}}$ we find by Lemma \ref{lem:restrlooplab} \ref{item:restrlooplab} distinct numbers $i_1,i_5\in\N\setminus\set{i_2,i_3,i_4}$ satisfying $\lab(\set{v_{1,1},v_{2,1}})=\set{\set{i_1,i_2},\set{i_1,i_4}}$ and $\lab(\set{v_{1,3},v_{2,3}})=\set{\set{i_2,i_5},\set{i_4,i_5}}$. Therefore, conditions \ref{item:looplablinecond1}, \ref{item:looplablinecond2} and \ref{item:looplablinecond3} hold for $k=3$. Now, suppose $n\geq 4$ and the statement holds for some particular $3\leq k<n$. Since $\lab(\set{v_{1,k},v_{2,k}})=\set{\set{i_2,i_{k+2}},\set{i_{k+2},i_4}}$ we can find by Lemma \ref{lem:outerloops} \ref{item:outerpath} a positive integer $i_{k+3}\in\N\setminus\set{i_1,\ldots,i_{k+2}}$ satisfying the relation ${\lab(\set{v_{1,{k+1}},v_{2,{k+1}}})=\set{\set{i_2,i_{k+3}},\set{i_{k+3},i_4}}}$. Thus, by our induction hypothesis \ref{item:looplablinecond1}, \ref{item:looplablinecond2} and \ref{item:looplablinecond3} hold for $k+1$, which shows the desired statement. As a consequence, we can find for the loop labelings $\lab_1$ and $\lab_2$ pairwise distinct numbers $i_1,\ldots,i_{n+2}\in\N$ and $j_1,\ldots,j_{n+2}\in\N$, respectively, satisfying conditions \ref{item:looplablinecond1}--\ref{item:looplablinecond3} for $k=n$. Thus, we see that $V_{G_{H,\lab_1}}=\set{i_1,\ldots,i_{n+2}}$ as well as $V_{G_{H,\lab_2}}=\set{j_1,\ldots,j_{n+2}}$. Moreover, from the conditions \ref{item:looplablinecond1}--\ref{item:looplablinecond3} we immediately obtain that the function
        $f\colon V_{G_{H,\lab_1}}\to V_{G_{H,\lab_2}}$ defined by $f(i_k)\coloneqq j_k$ for $k=1,\ldots,n+2$ fulfills property \eqref{eq:isomhypergraphs}, which shows that the induced graphs $G_{H,\lab_1}$ and $G_{H,\lab_2}$ are isomorphic.
        \item Now, let $H\in\PL_{\rectangle}^{(4)}$ be a rectangular plaquette layout with $m\geq 3$ vertical and $n\geq 3$ horizontal nodes. By Lemma \ref{lem:condlooplabrpl}, we can find pairwise distinct numbers $i_1,\ldots,i_{m+n}\in\N$ and $j_1,\ldots,j_{m+n}\in\N$ such that conditions \ref{item:prop1lopplabrpl}--\ref{item:prop4lopplabrpl} in Lemma \ref{lem:condlooplabrpl} for $\lab_1$ and $\lab_2$ are satisfied. Finally, we observe that $V_{G_{H,\lab_1}}=\set{i_1,\ldots,i_{m+n}}$ as well as $V_{G_{H,\lab_2}}=\set{j_1,\ldots,j_{m+n}}$ and deduce $f(\lab_1(v_{i_j}))=\lab_2(v_{i_j})$ for $i=1,\ldots,m$ and $j=1,\ldots,n$, where we used again the bijection function $f\colon V_{G_{H,\lab_1}}\to V_{G_{H,\lab_2}}$ with $f(i_k)=j_k$ for $k=1,\ldots,m+n$. This proves the remaining statement for the case $m,n\geq 3$.\qedhere
    \end{enumerate}
\end{proof}
\begin{proof}[Proof of Theorem \ref{thm:preimagerectangularlayouts}]
    \begin{enumerate}[wide=\parindent,leftmargin=0pt,align=left]
        \item \label{item:existrectlay} First, define $H\coloneqq(V,E)$ by $V\coloneqq V_1\cup V_2$ and\\
    $E\coloneqq\set{\set{v_1,v_2}\mid v_1\in V_1,v_2\in V_2}$, where $V_1\coloneqq\set{i_1,\ldots,i_m}$, $V_2\coloneqq\set{j_1,\ldots,j_n}$ and
    \begin{equation*}
       i_k\coloneqq\begin{cases}
           1,\quad &\text{if $k=1$},\\
           3,\quad &\text{if $k=2$},\\
           k+2,\quad &\text{else},
        \end{cases}\qquad\qquad
        j_l\coloneqq\begin{cases}
           2,\quad &\text{if $l=1$},\\
           4,\quad &\text{if $l=2$},\\
           l+m,\quad &\text{else},
        \end{cases}
    \end{equation*}
    for $k=1,\ldots,m$ and $l=1,\ldots,n$. By definition, we obtain that $H$ is a complete bipartite graph with the two partitions $V_1$ and $V_2$ satisfying $\abs{V_1}=m$ and $\abs{V_2}=n$ and $V_1\cap V_2=\emptyset$. Next, we define
    \begin{equation*}
        a_k\coloneqq (k-1)\mod (n-1)+1\quad
        b_k\coloneqq (k-1)//(n-1)+1,\quad \text{for $k=1,\ldots,(m-1)(n-1)$}
    \end{equation*}
    and \[C_k\coloneqq\set{\set{i_{a_k},j_{b_k}},\set{i_{a_k},j_{b_{k}+1}},\set{i_{a_{k}+1},j_{b_{k}+1}},\set{i_{a_{k}+1},j_{b_k}}}\in \LSP_{H},\quad k=1,\ldots,(m-1)(n-1),\] and observe that \eqref{eq:weaklyfundbasis} holds for all $k=2,\ldots,(m-1)(n-1)$ and thus, $B\coloneqq\set{C_1,\ldots,C_{(m-1)(n-1)}}$ is linearly independent in $\LS_{H}$. Moreover, by Theorem \ref{thm:dimformulals} we obtain $\dim(\LS_{H})=\abs{E}-\abs{V}+1=mn-(m+n)+1=(m-1)(n-1)$ and therefore, $B\in\BCS_{\LS_{H}}$. By replacing $m$ with $m+1$ and $n$ with $n+1$ in the definition $a_k$ and $b_k$, we deduce that $h\colon\set{1,\ldots,mn}\to E_B\colon k\mapsto\set{i_{a_k},j_{b_k}}$ is an enumeration of $E_{B}$. By using $a_k$ and $b_k$ again to re-enumerate the vertices $V_{\PG_{B}}$ as well as $E_{\PG_{B}}$, we see that $\PG_{B}$ is a rectangular plaquette layout with $m$ vertical and $n$ horizontal nodes. This observation and Lemma \ref{lem:welldefinedparmap} shows that for any hypergraph $H'$ for which $H'\vert_{E_{\LS_H}}$ is a complete bipartite graph that there exists $B'\in\BCS_{\LS_{H'}}$ such that $\PG_{B'}\in\PL_{\rectangle}^{(4)}$ is a rectangular plaquette layout with $m$ vertical and $n$ horizontal nodes.
    \item Now suppose that $H$ is a complete bipartite graph with two partitions of size $m$ and $n$. By \ref{item:existrectlay} there exists $B\in\BCS_{\LS_{H}}$ with $B\subset \LSP_H$ and property \eqref{eq:weaklyfundbasis} such that $\PG_B\in\PL_{\rectangle}^{(4)}$ is a rectangular plaquette layout with $m$ vertical and $n$ horizontal nodes. Define $y\coloneqq\parmap([H])$ and let $H'=(V',E')$ be another hypergraph for which there exists $B'\in\BCS_{\LS_{H}}$ such that the compiled hypergraph $\PG_{B'}\in\PL_{\rectangle}^{(4)}$ is also a rectangular plaquette layout with $m$ horizontal and $n$ vertical nodes. This implies that $\PG_{B'}\cong\PG_{B}$ and thus by Lemma \ref{lem:condeqcomphyp} we obtain $\parmap([H])=\parmap([H'])$, or equivalently, $[H']\in\parmap\vert_{\DHGSE}^{-1}([H])$. Applying Corollary \ref{cor:preimagefundbasis} on $H$ and using Proposition \ref{prop:isominducsimplgraphsrpl} yields $H\cong H'\vert_{E'_{\LS}}$ and therefore, $H'\vert_{E'_{\LS}}$ is a bipartite graph with two partitions of size $m$ and $n$, which shows the claimed statement.\qedhere
    \end{enumerate}
\end{proof}
\subsubsection{Polynomial-time compilation on rectangular plaquette layouts}
In the last section of this article, we apply our results from the previous section to derive a polynomial-time algorithm that efficiently determines whether an optimization problem can be mapped onto a rectangular plaquette layout under the parity transformation while also constructing the corresponding physical layout. We emphasize again that this is highly relevant for practical applications. On the one hand, local qubit connectivity is a common feature of quantum hardware, and on the other hand, square-grid-like connectivity graphs are widely present across many hardware platforms.

\noindent As we have already seen in the previous section, we verified that
\begin{multline*}
    \set{H\in\HGS\,\left\vert\, \text{$H$ is a graph and there exists $B\in\BCS_{\LS_{H}}$ such that $\PG_B\in\PL_{\rectangle}^{(4)}$}\right.}\\
    =\set{H=(V,E)\in\HGS\,\left\vert\, \text{$H\vert_{E_{\LS}}$ is a complete bipartite graph}\right.},
\end{multline*}
or in other words, a given optimization problem that is represented as a graph can be compiled on a rectangular plaquette layout if and only if the graph restricted to the set of edges which are contained in an element of the cycle space is a complete bipartite graph. Hence, this necessary and sufficient condition immediately allows for the application of well-established algorithms for solving this computational problem in polynomial time. For example, in order to compute the restriction of a given graph $H=(V,E)\in\HGS$ onto ${E_{\LS}}$ one could first apply forest growing algorithms \cite{GroYelAnd18}, which usually have a time complexity of $\bigo(\abs{V}+\abs{E})$ and compute a spanning forest of $H$. If the number of components of the graph $H$ is one, then iteratively collecting the unique fundamental cycles generates the restricted graph $H\vert_{E_{\LS}}$. Note that each fundamental cycle can be determined by adding a non tree-edge to the spanning tree and calculating the two unique paths which start from the endpoints of the non tree-edge, traversing through their parents and end in the first common node. The running time of this part of the algorithm is proportional to $\abs{V}\abs{E}-\abs{V}^2+\abs{V}$, since the number of non tree-edges equals the dimension of the cycle space which is $\abs{E}-\abs{V}+1$. As a second step, by using classical search algorithms such as breadth-first or depth-first \cite{GroYelAnd18}, which are also being used in forest growing algorithms, two independent partitions (in case the previously calculated restricted graph $H\vert_{E_{\LS}}$ is bipartite) in $\bigo(\abs{V}+\abs{E})$ running time can be computed. Finally, checking whether the number of neighbors of each vertex in the first partition equals the number of elements of the second partition, completes the whole algorithm and yields a total time complexity proportional to $\abs{V}\abs{E}-\abs{V}^2+\abs{V}$. The corresponding cycle basis of the complete bipartite graph which compiles to a rectangular plaquette layout is given at the beginning of the proof of Theorem \ref{thm:preimagerectangularlayouts}. Note that this algorithm can easily be extended to non-connected graphs.
\section*{Conclusion and future research}
In this article, we established the parity transformation as a mapping between equivalence classes of hypergraphs and sets of equivalence classes of hypergraphs which can be employed to create physical layouts of a quantum device. By introducing so-called loop labelings, we derived theoretical results for the description of the preimage of a given set of equivalence classes of compiled hypergraphs when equivalence classes of graphs are being considered. Using these results, we then could derive equivalent conditions for the parity mapping being injective on any subdomain of this set of equivalences classes. By presenting various examples of non-isomorphic graphs which do have the same set of physical layouts, we showed that the parity transformation is in general not a unique mapping. This demonstrates that the same physical quantum device can be used to solve different optimization problems, thereby highlighting an additional benefit of this encoding strategy.
As a further application of our theoretical derivations, we introduced an algorithm that efficiently compiles optimization problems onto a rectangular plaquette layout, a common configuration on quantum hardware platforms, under the parity transformation. It is important to note that, while determining all different bases and checking if they correspond to rectangular plaquette layouts is an exponential task for most graphs, our approach reduces this process to a polynomial problem. This reduction makes the compilation process feasible, even for optimization problems involving a large number of variables and interactions.

In the future, we want to derive further uniqueness results for plaquette layouts which are proper subhypergraphs of rectangular plaquette layouts.
As we already have seen in Subsection \ref{subsec:counterexamples}, there exist such subhypergraphs whose corresponding graphs in the domain of the  parity transformation are not uniquely determined. We therefore seek for specific restrictions for plaquette layouts with local four-body interactions and verify that the preimage of the corresponding set of physical layouts is uniquely determined by
similarly examining induced graphs of loop labelings. As a further step, we aim for characterizing the preimage of sets of physical layouts which contain any subhypergraph of a rectangular plaquette layout. In addition, we also intend to provide similar algorithms as for rectangular plaquette layouts which solve the analogous computational problem for this broader class of physical layouts. On the one hand, thoroughly characterizing these preimages not only enhances deeper insights into the complexities of the parity transformation but also establishes a direct connection between graphs and compiled physical layouts with local four-body constraints. On the other hand, it allows the opportunity to investigate further compilation strategies and algorithms that may lead to a reduction in running time for parity compilation, where the challenge is to transform any optimization problem to a physical layout with local three- and four-body plaquettes with ancilla qubits. Furthermore, extending our results in Section \ref{sec:uniquenesssimplegraph} from graphs to hypergraphs is also of great interest,
paving the way to similarly characterize the preimage of single sets of physical layouts that include arbitrary multi-body optimization problems.

\section*{Acknowledgements}
The authors would like to thank Anette Messinger and Christoph Fleckenstein for the detailed review of the manuscript as well as  their crucial feedback. This work was supported by the Austrian Research Promotion Agency (FFG Project No. FO999896208). This research was funded in whole, or in part, by the Austrian Science Fund (FWF) SFB BeyondC Project No. F7108-N38 (DOI: 10.55776/F71), through a START grant under Project No. Y1067-N27 (DOI: 10.55776/Y1067) and I 6011 (DOI: 10.55776/I6011). For the purpose of open access, the author has applied a CC BY public copyright licence to any Author Accepted Manuscript version arising from this submission. This project was funded within the QuantERA II Programme that has received funding from the European Union's Horizon 2020 research and innovation programme under Grant Agreement No. 101017733 (DOI: 10.3030/101017733).

\end{document}